	\algrenewcommand\algorithmicdo{}
	\algrenewcommand\algorithmicthen{}
	\algrenewcommand\algorithmicprocedure{}
	\newcommand{\algindent}[1][1]{%
		\setlength\@tempdima{\algorithmicindent}%
		\hskip\dimexpr#1\@tempdima\relax%
	}
	\NewDocumentCommand{\subcaptionbox}{ O{} m o o m }{
		\IfNoValueTF{#3}{%
			\subfloat[#1][#2]{#5}%
		}{%
			\IfNoValueTF{#4}{%
				\adjustbox{valign=t}{\begin{minipage}{#3}\subfloat[#1][#2]{#5}\end{minipage}}%
			}{%
				\adjustbox{valign=t}{\begin{minipage}{#3}%
					\setkeys[mypreamble]{subcaptionbox}{horizalign=#4}%
					\subfloat[#1][#2]{#5}%
				\end{minipage}}%
			}
		}
	}
\newdimen\nomwidest
\renewcommand{\nomlabel}[1]{%
	\sbox\z@{#1\hspace{\labelsep}$=$}%
	\ifdim\nomwidest<\wd\z@\global\nomwidest\wd\z@\fi
	#1\hfil\hspace{\labelsep}$=$%
}
\renewcommand{\nompostamble}{%
	\protected@write\@auxout{}{\global\nomwidest=\the\nomwidest}%
}
	\newcommand{\frontmatter}{%
		\title{\large \textbf{\uppercase{Fast, Safe, and \Fuel-Efficient \\ Spacecraft Planning under Clohessy-Wiltshire-Hill Dynamics}}}
		\author{\large\textbf{%
			Joseph A.\ Starek%
			\thanks{Graduate Student, \StanfordAeroAstro, \Stanford, \StanfordAeroAstroAddress[Rm 009], \texttt{jstarek@stanford.edu}},
			Edward Schmerling%
			\thanks{Graduate Student, \StanfordICME, \Stanford, \StanfordICMEAddress[], \texttt{schmrlng@stanford.edu}},
			Gabriel D.\ Maher%
			\thanks{Graduate Student, \StanfordICME, \Stanford, \StanfordICMEAddress[], \texttt{gdmaher@stanford.edu}},
			Brent W.\ Barbee%
			\thanks{Aerospace Engineer, Navigation and Mission Design Branch (Mail Code 595), \NASAGSFC, \NASAGSFCAddress[Bldg 11, C002H]}, \texttt{brent.w.barbee@nasa.gov},
			and Marco Pavone%
			\thanks{Assistant Professor, \StanfordAeroAstro, \Stanford, \StanfordAeroAstroAddress[Rm 261], \texttt{pavone@stanford.edu}}
			}
		}
		\date{}
	}
	\newcommand{\frontmatter}{%
		\title{Fast, Safe, and \Fuel-Efficient Spacecraft Planning under Clohessy-Wiltshire-Hill Dynamics}
		\author{%
			Joseph A.\ Starek\thanks{Graduate Student, \StanfordAeroAstro, \StanfordAeroAstroAddress[Rm 009]}, Edward Schmerling\thanks{Graduate Student, \StanfordICME, \StanfordICMEAddress[]}, and Gabriel D.\ Maher\thanks{Graduate Student, \StanfordICME, \StanfordICMEAddress[]} \\
			\emph{\Stanford, \StanfordAddress*}
			\and
			Brent W.\ Barbee\thanks{Aerospace Engineer, Navigation and Mission Design Branch  (Mail Code 595), \NASAGSFCAddress[Bldg 11, C002H], AIAA Senior Member} \\
			\emph{\NASAGSFC, \NASAGSFCAddress*}
			\and
			Marco Pavone\thanks{Assistant Professor, \StanfordAeroAstro, \StanfordAeroAstroAddress[Rm 261], AIAA Member} \\
			\emph{\Stanford, \StanfordAddress*}
		}
	}
	\newcommand{\frontmatter}{%
		\title{Fast, Safe, and \Fuel-Efficient Spacecraft\\\vspace*{-0.09cm} Planning under Clohessy-Wiltshire-Hill Dynamics}
		\author{Joseph A.\ Starek,\footnote{Graduate Student, \StanfordAeroAstro, \StanfordAeroAstroAddress[Rm 009]} Edward Schmerling,\footnote{Graduate Student, \StanfordICME, \StanfordICMEAddress[]} and Gabriel D.\ Maher\footnotemark[2]}
		\affiliation{\Stanford, \StanfordAddress*}
		\author{Brent W.\ Barbee\footnote{Aerospace Engineer, Navigation and Mission Design Branch (Mail Code 595), \NASAGSFCAddress[], AIAA Senior Member}}
		\affiliation{\NASAGSFC, \NASAGSFCAddress*}
		\author{Marco Pavone\footnote{Assistant Professor, \StanfordAeroAstro, \StanfordAeroAstroAddress[Rm 261], AIAA Member}}
		\affiliation{\Stanford, \StanfordAddress*\phantom{}} 
	}
	\newcommand{\frontmatter}{%
		\title{\large \textbf{\uppercase{Fast, Safe, and \Fuel-Efficient Spacecraft \\ Planning under Clohessy-Wiltshire-Hill Dynamics}}}
		\author{%
			Joseph A.\ Starek*\thanks{*Graduate Student, \StanfordAeroAstro, \Stanford, \texttt{jstarek@stanford.edu}},
			Edward Schmerling$\ddagger$\thanks{$\ddagger$Graduate Student, \StanfordICME, \Stanford, \texttt{schmrlng@stanford.edu}},
			Gabriel D.\ Maher$\dagger$\thanks{$\dagger$Graduate Student, \StanfordICME, \Stanford, \texttt{gdmaher@stanford.edu}},
			Brent W.\ Barbee$\mathsection$\thanks{$\mathsection$Aerospace Engineer, \NASAGSFC, NASA, \texttt{brent.w.barbee@nasa.gov}},
			Marco Pavone$\mathparagraph$\thanks{$\mathparagraph$Assistant Professor, \StanfordAeroAstro, \Stanford, \texttt{pavone@stanford.edu}}
		}
	}
\providecommand{\bibfiles}{../../../bib/main}
\newcommand{\figurepath}{../../../../Figures}
\newcommand{\Fuel}{Propellant\xspace} 
\newcommand{\fuel}{propellant\xspace} 
\newcommand{\fuelCost}{propellant-cost\xspace}
\newcommand{\posCrossTrack}{\delta x}
\newcommand{\posInTrack}{\delta y}
\newcommand{\posOutofPlane}{\delta z}
\newcommand{\velCrossTrack}{\delta \dot{x}}
\newcommand{\velInTrack}{\delta \dot{y}}
\newcommand{\velOutofPlane}{\delta \dot{z}}
\newcommand{\VecDeltaVx}{\VecDeltaVMag_{x,}{}}
\newcommand{\VecDeltaVy}{\VecDeltaVMag_{y,}{}}
\newcommand{\VecDeltaVz}{\VecDeltaVMag_{z,}{}}
\newcommand{\ThrustAvailable}{\eta_i{}}
\newcommand{\BodyFixedFrame}{\mathrm{bf}}
\newcommand{\AllocatedThrust}{\DeltaV_i{}}
\newcommand{\VecAllocatedThrust}{\vecformat{\DeltaV}_i{}}
\newcommand{\VecThrustPos}{\Vecrho_i{\vphantom{\Vecrho}}}
\newcommand{\VecThrustDir}{\vecformat{\Delta \hat{v}}_i{}}
\newcommand{\AllocatedThrustMax}{\DeltaV\submax{}_{,}{}}
\newcommand{\AllocatedThrustMin}{\DeltaV\submin{}_{,}{}}
\newcommand{\CostThreshold}{\bar{\CostFcn}}
\newcommand{\Dimension}{d}
\newcommand{\EqualityConstraintVec}{\mathbf{h}}
\newcommand{\GoalPosTol}{\epsilon_r}
\newcommand{\GoalVelTol}{\epsilon_v}
\newcommand{\InequalityConstraint}{g}
\newcommand{\InequalityConstraintVec}{\mathbf{g}}
\renewcommand{\Nthrusters}{K}
\renewcommand{\nonnegativereals}{\reals_{\geq 0}}
\newcommand{\PathSet}{\Sigma}
\newcommand{\pvalue}{2}
\newcommand{\ManeuverDuration}{T}
\newcommand{\TotalManeuverDuration}{T_{\mathrm{plan,}}{}}
\newcommand{\NBurns}{N}
\newcommand{\BurnTime}{\tau}
\renewcommand{\Plume}{{\mathcal{P}}_i{}}
\newcommand{\PlumeHalfAngle}{\beta_{\mathrm{plume}}}
\newcommand{\PlumeHeight}{H_{\mathrm{plume}}}
\newcommand{\PlumeAxis}{-\VecThrustDir}
\newcommand{\PlumeAvoidanceTarget}{\spheres_{\mathrm{target}}}
\renewcommand{\TimeHorizon}{\mathcal{T}}
\newcommand{\horizonTime}{T_h}
\newcommand{\FaultIndex}{f}
\newcommand{\FaultTolerance}{F}
\newcommand{\tCAM}{t}
\newcommand{\XsubKOZ}{\Xspace\subKOZ}
\newcommand{\semiaxis}{\rho}
\newcommand{\Vcirc}{-\frac{3}{2} \omegaRef \posCrossTrack}
\newcommand{\GramianEigsUBConst}{M\submax}
\newcommand{\GramianEigsLBConst}{M\submin}
\newcommand{\ClosedBall}{{\mathcal{B}}}
\newcommand{\SubSpace}{{\Xspace}}
\newcommand{\DispersionFcn}{D}
\newcommand{\nSamples}{n}
\newcommand{\SmoothingWeight}{\alpha}
\newcommand{\SmoothingTolerance}{\delta\alpha\submin}
\newcommand{\subsmooth}{_{\mathrm{smooth}}}
\begin{document}

\frontmatter

\begin{abstract}
	This paper presents a sampling-based motion planning algorithm for real-time and \fuel-optimized auto\-nomous spacecraft trajectory generation in near-circular orbits.  Specifically, this paper leverages recent algorithmic advances in the field of robot motion planning to the problem of impulsively-actuated, \fuel-optimized rendezvous and proximity operations under the Clohessy-Wiltshire-Hill (CWH) dynamics model.  The approach calls upon a modified version of the Fast Marching Tree (\FMTstar) algorithm to grow a set of feasible trajectories over a deterministic, low-dispersion set of sample points covering the free state space.  To enforce safety, the tree is only grown over the subset of \emph{actively-safe samples}, from which there exists a feasible one-burn collision avoidance maneuver that can safely circularize the spacecraft orbit along its coasting arc under a given set of potential thruster failures.  Key features of the proposed algorithm include: (i) theoretical guarantees in terms of trajectory safety and performance, (ii) amenability to real-time implementation, and (iii) generality, in the sense that a large class of constraints can be handled directly.  As a result, the proposed algorithm offers the potential for widespread application, ranging from on-orbit satellite servicing to orbital debris removal and autonomous inspection missions.  
\end{abstract}

\maketitle


\section{Introduction}
\label{sec: Intro}

Real-time autonomy for spacecraft proximity operations near circular orbits is an inherently challenging task, particularly for onboard implementation where computational capabilities are limited.  Many effective real-time solutions have been developed for the \emph{unconstrained} case (\eg state transition matrix manipulation \cite{DPD:11}, glideslope methods \cite{HBH-MLT-DJDB:02}, safety ellipses \cite{BJN:05, DEG-BWB:07}, and others \cite{MD-YX-KP-GC:11}).
However, the difficulty of real-time processing increases when there is a need to operate near other objects and/or incorporate some notion of \fuel-optimality or control-effort minimization.  In such cases, care is needed to efficiently handle collision-avoidance, plume impingement, sensor line-of-sight, and other complex guidance constraints, which are often non-convex and may depend on time and a mixture of state and control variables.  State-of-the-art techniques for collision-free spacecraft proximity operations (both with and without optimality guarantees) include artificial potential function guidance \cite{IL-CRMI:95, JDM-NGFC:10}, convexification techniques \cite{MWH-BA:14}, enforcing line-of-sight or approach corridor constraints \cite{LB-JPH:08, RV-FG-EFC:11, PL-XL:13, XL-PL:14}, maintaining relative separation \cite{YZL-LBL-HW-GJT:11}, satisfying Keep-Out-Zone (KOZ) constraints using mixed-integer (MI) programming \cite{AR-TS-JH-EF:02}, and kinodynamic motion planning algorithms \cite{SML-JJK:01, EF:03, JMP-LEK-NB:03, MK-SP:14}.  

Requiring hard assurances of mission safety with respect to a wide variety and number of potential failure modes \cite{MH-GD:14} provides an additional challenge.
Often the concept of \emph{passive safety} (safety certifications on zero-control-effort failure trajectories) over a finite horizon is employed to account for the possibility of control failures, though this potentially neglects mission-saving opportunities and fails to certify safety for all time.  A less-conservative alternative that more readily adapts to infinite horizons, as we will see, is to use \emph{active safety} in the form of positively-invariant set constraints.
For instance, \cite{LB-JPH:08} enforces infinite-horizon active safety for a spacecraft by requiring each terminal state to lie on a collision-free orbit of equal period to the target.  \cite{EF:03} achieves a similar effect by only planning between waypoints that lie on circular orbits (a more restrictive constraint).  Similarly, \cite{AW-MB-CP-RSE-IVK:13} requires that an autonomous spacecraft maintain access to at least one safe forced equilibrium point nearby.  Finally, \cite{JMC-BA-RMM-DGM:08} devises the Safe and Robust Model Predictive Control (MPC) algorithm, which employs invariant feedback tubes about a nominal trajectory (which guarantee resolvability) together with positively-invariant sets (taken about reference safety states) designed to be available at all times over the planning horizon.

The objective of this paper is to design an automated approach to actively-safe spacecraft trajectory optimization for rendezvous and proximity operations near circular orbits, which we model using Clohessy-Wiltshire-Hill (CWH) dynamics.  Our approach is to leverage recent advances from the field of robot motion planning, in particular from the area of \emph{sampling-based} motion planning \cite{SL:06}.  Several decades of research in the robotics community have shown that sampling-based planning algorithms (dubbed ``planners'' throughout this paper) show promise for tightly-constrained, high-dimensional optimal control problems such as the one considered in this paper.  Sampling-based motion planning essentially entails the breakdown of a complex trajectory control problem into a series of many relaxed, simpler Two-Point Boundary Value Problems (2PBVPs, or ``steering'' problems) that are subsequently evaluated \emph{a posteriori} for constraint satisfaction and efficiently strung together into a graph (\ie a tree or roadmap).  In this way, complex constraints like obstacle avoidance or plume impingement are decoupled from the generation of subtrajectories (or graph ``edges'') between graph states (or ``samples''), separating dynamic simulation from constraint checking -- a fact we exploit to achieve real-time capability.  
Critically, this approach avoids the \emph{explicit} construction of the free state space, which is prohibitive in complex planning problems.  As a result, sampling-based algorithms can address a large variety of constraints and can provide significant computational benefits with respect to traditional optimal control methods and mixed-integer programming \cite{SL:06}.  Furthermore, through a property called \emph{asymptotic optimality} (AO), sampling-based algorithms can be designed to provide guarantees of optimality in the limit that the number of samples taken approaches infinity.
This makes sampling-based planners a strong choice for the problem of spacecraft control.

Though the aforementioned works \cite{SML-JJK:01, EF:03, JMP-LEK-NB:03, MK-SP:14} on sampling-based planning for spacecraft proximity operations have addressed several components of the safety-constrained, optimal CWH autonomous rendezvous problem, few have addressed the aspect of real-time implementability in conjunction with both a 2-norm \fuelCost metric and active trajectory safety with respect to control failures.  This paper seeks to fill this gap. 
The paper's central theme is a rigorous proof of asymptotic optimality for a particular sampling-based planner, namely a modified version of the Fast Marching Tree (\FMTstar) algorithm \cite{LJ-ES-AC-ea:15}, under impulsive CWH spacecraft dynamics with hard safety constraints.  First, a description of the problem scenario is provided in \cref{sec: Problem}, along with a formal definition of the 2-norm cost metric that we employ as a proxy for \fuel consumption.  \Cref{sec: Safety} then follows with a thorough discussion of chaser/target vehicle safety, defining precisely how abort trajectories may be designed under CWH dynamics to deterministically avoid for all future times an ellipsoidal region about the CWH frame origin under a prescribed set of control failures.  Next, we proceed in \cref{sec: Approach} to our proposed approach employing the modified \FMTstar algorithm.  The section begins with presentation of a conservative approximation to the \fuelCost reachability set, which characterizes the set of states that are ``nearby'' to a given initial state in terms of \fuel use.  These sets, bounded by unions of ellipsoidal balls, are then used to show that the modified \FMTstar algorithm maintains its (asymptotic) optimality when applied to CWH dynamics under the 2-norm cost function.  From there, in \cref{sec: Smoothing/Robustness}, the paper presents two techniques for improving motion planning solutions: (i) an analytical technique that can be called both during planning and post-processing to merge $\DeltaV$-vectors between any pair of concatenated graph edges, and (ii) a continuous trajectory smoothing algorithm, which can reduce the magnitude of $\DeltaV$-vectors by relaxing the implicit constraint to pass through sample points while still maintaining solution feasibility.  

The combination of these tools into a unified framework provides a flexible, general technique for near-circular orbit spacecraft trajectory generation  that automatically guarantees bounds on run time and solution quality (\fuel cost) while handling a wide variety  of (possibly non-convex) state, time, and control constraints.  The methodology is demonstrated in \cref{sec: Experiments} on a single-chaser, single-target scenario simulating a near-field Low Earth Orbit (LEO) approach with hard constraints on total maneuver duration, relative positioning (including keep-out-zone and antenna interference constraints), thruster plume impingement, individual and net $\DeltaV$-vector magnitudes, and a two-fault thruster stuck-off failure tolerance.  The performances of \FMTstar and the trajectory smoothing techniques are evaluated as a function of sample count and a \fuel cost threshold.

Preliminary versions of this paper appear in \cite{JS-BB-MP:15, JAS-GDM-BWB-MP:16}.  This extended and revised work introduces the following as additional contributions: (i) a more detailed presentation of the \FMTstar optimality proof, 
(ii) improved trajectory smoothing techniques, and (iii) a six-dimensional (3-DOF) numerical example demonstrating non-planar LEO rendezvous. 

\section{Problem Formulation}
\label{sec: Problem}

We begin by defining the problem we wish to solve.  We model the near-field homing phase and approach for a 
spacecraft seeking to maneuver near a target that is moving in a well-defined, circular orbit (see \cref{subfig: CWH Dynamics}).  Let the \emph{state space} ${\Xspace} \subset {\reals}^{\Dimension}$ represent the $\Dimension$-dimensional region in the target's Local Vertical, Local Horizontal (LVLH) frame in which the mission is defined, and define the \emph{obstacle region} or $\Xobs$ as the set of states within $\Xspace$ that result in mission failure (\eg states colliding with the target or which lie outside of a specified approach corridor, for example). We then also define the \emph{free space} or $\Xfree$ as the complement of $\Xobs$, \ie states in $\Xspace$ which lie outside of obstacles.  As illustrated in \cref{subfig: CWH Planning Query}, let $\Vecx\subinit$ represent the chaser spacecraft's initial state relative to the target, and let $\Vecx\subgoal \in \Xgoal$ be a goal state (a new position/velocity near which the chaser can initiate a docking sequence, a survey maneuver, \etc) inside the \emph{goal region} $\Xgoal$.  Finally, define a \emph{state trajectory} (or simply ``trajectory'') as a piecewise-continuous function of time $\map{\Vecx(t)}{\reals}{\Xspace}$, and let $\PathSet$ represent the set of all state trajectories.  Every state trajectory is implicitly generated by a control trajectory $\map{\Vecu(t)}{\reals}{\ControlSet}$, where $\ControlSet$ is the set of controls, through the system dynamics $\dot{\Vecx} = \StateTransitionFcn\left(\Vecx, \Vecu, t\right)$, where $\StateTransitionFcn$ is the system's state transition function.  A state trajectory is called a \emph{feasible} solution to the planning problem $(\Xfree, t\subinit, \Vecx\subinit, \Vecx\subgoal)$ if: (i) it satisfies the boundary conditions $\Vecx(t\subinit) = \Vecx\subinit$ and $\Vecx(t\subfinal) = \Vecx\subgoal$ for some time $t\subfinal > t\subinit$, (ii) it is \emph{collision-free}; that is, $\Vecx(\tau) \in \Xfree$ for all $\tau \in \closedinterval{t\subinit}{t\subfinal}$, and (iii) it obeys all other trajectory constraints, including the system dynamics.  The general motion planning problem can then be defined as follows.
\nomenclature[AX ]{$\Xspace$}{State space}%
\nomenclature[AR ]{$\reals$}{The field of real numbers}%
\nomenclature[Ad#]{$\Dimension$}{State dimension}%
\nomenclature[AX*]{$\Xobs$}{Obstacle space, or states that result in mission failure}%
\nomenclature[AX#]{$\Xfree$}{Feasible (``free'') state space, or states that lie outside of $\Xobs$}%
\nomenclature[Ax.]{$\Vecx\subinit$}{Initial state}%
\nomenclature[Ax.]{$\Vecx\subgoal$}{Goal state}%
\nomenclature[AX']{$\Xgoal$}{Goal region}%
\nomenclature[B18a]{$\PathSet$}{Set of all state trajectories}%
\nomenclature[AU ]{$\ControlSet$}{Admissible control space}%
\nomenclature[Af(]{$\StateTransitionFcn\left(\ldots\right)$}{State transition function (system dynamics)}%
\nomenclature[Ax,]{$\Vecx$}{State vector}%
\nomenclature[Au#]{$\Vecu$}{Control vector}%
\nomenclature[At)]{$t$}{Time}%

\begin{figure}
	\hfill
	\subcaptionbox{%
		\label{subfig: CWH Dynamics}%
		Schematic of CWH dynamics, which models relative guidance near a single target in a circular orbit.
	}[0.47\columnwidth]{%
		\includegraphics[width=\columnwidth]{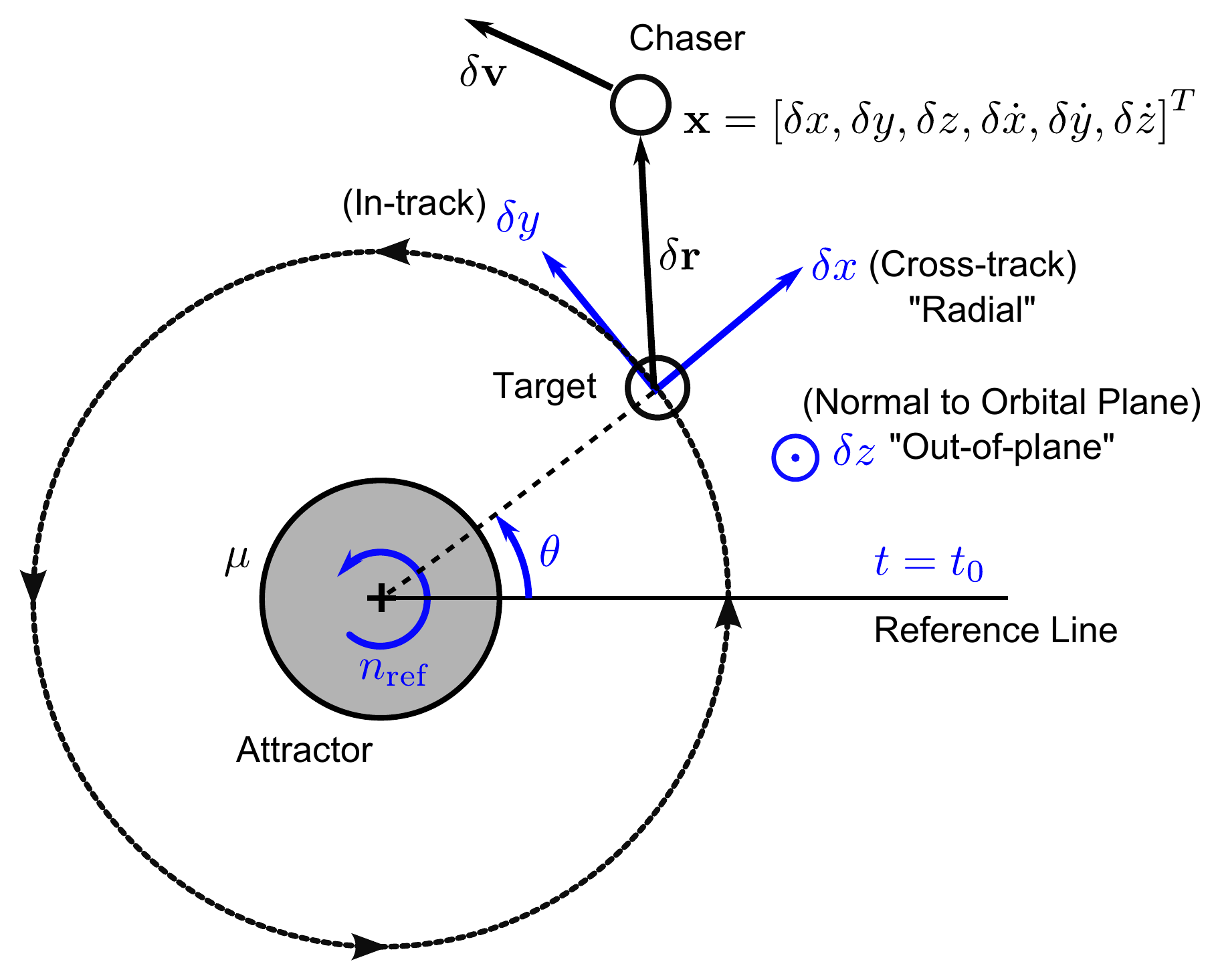}
	}
	\hfill
	\subcaptionbox{%
		\label{subfig: CWH Planning Query}
		A representative motion planning query between feasible states $\Vecx\subinit$ and $\Vecx\subgoal$.
	}[0.48\columnwidth]{%
		\includegraphics[width=\columnwidth]{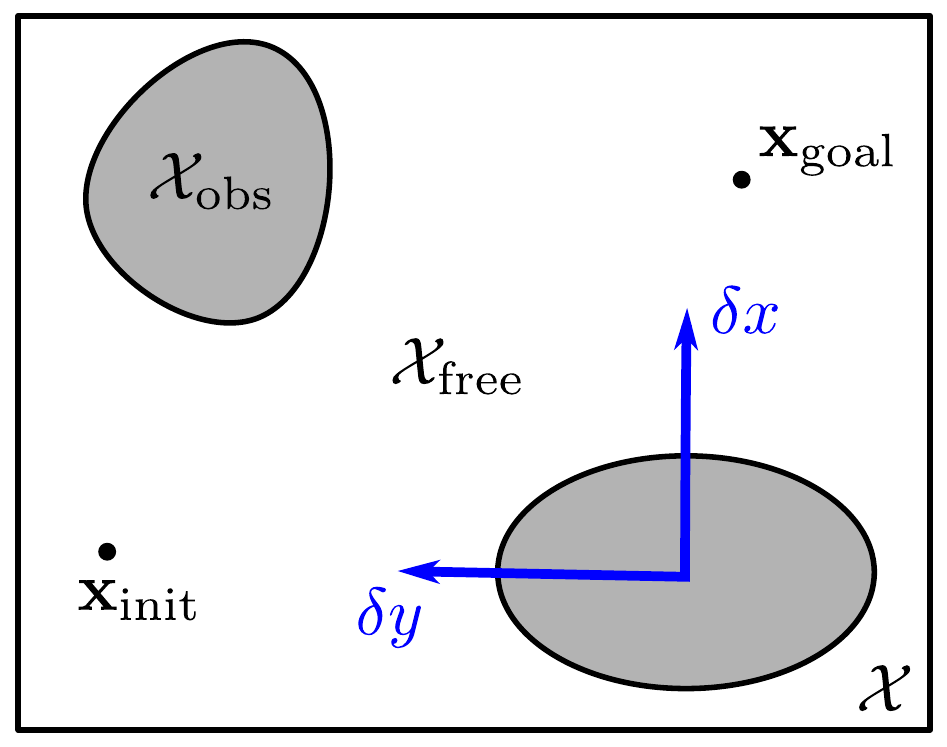}
	}
	\hfill
	\caption{Illustration of the CWH planning scenario.  \iftoggle{AIAAjournal}{}{Here $\omegaRef$ is the mean motion of the target spacecraft orbit, $\theta$ is its mean anomaly, $t$ denotes time, $\delta \Vecr$ and $\delta \Vecv$ are the chaser relative position and velocity, and $(\posCrossTrack, \posInTrack, \posOutofPlane)$ are the LVLH frame coordinates.  The CWH frame rotates with the target at rate $\omegaRef$ as it orbits the gravitational attractor, $\mu$.  Planning takes place in the LVLH frame state space $\Xspace$, in which a safe trajectory is sought between states $\Vecx\subinit$ and $\Vecx\subgoal$ within the feasible (collision-free) subspace $\Xfree$ around nearby obstacles $\Xobs$.}}
	\label{fig: CWH Planning}
\end{figure}
\nomenclature[Ar(]{$\delta \Vecr$}{Position relative to target}%
\nomenclature[Av1]{$\delta \Vecv$}{Velocity relative to target}%
\nomenclature[Ax0]{$\posCrossTrack$}{Cross-track (radial) position}%
\nomenclature[Ay#]{$\posInTrack$}{In-track position}%
\nomenclature[Az#]{$\posOutofPlane$}{Out-of-plane position}%
\nomenclature[Ax1]{$\velCrossTrack$}{Cross-track (radial) velocity}%
\nomenclature[Ay']{$\velInTrack$}{In-track velocity}%
\nomenclature[Az']{$\velOutofPlane$}{Out-of-plane velocity}%
\nomenclature[Anr]{$\omegaRef$}{Target orbit mean motion}%
\nomenclature[B08b ]{$\theta$}{Target orbit mean anomaly}%
\nomenclature[B12b]{$\mu$}{Gravitational parameter}%

\begin{definition}[Optimal Planning Problem]
	\label{def: Optimal Planning Problem}
	Given a planning problem $(\Xfree, t\subinit, \Vecx\subinit, \Vecx\subgoal)$ and a cost functional $\map{\CostFcn}{\PathSet \cross \ControlSet \cross \reals}{\nonnegativereals}$, find a feasible trajectory $\Vecx\supopt(t)$ with associated control trajectory $\Vecu\supopt(t)$ and time span $t = \closedinterval{t\subinit}{t\subfinal}$ for $t\subfinal \in \leftclosedinterval{t\subinit}{\infty}$ such that $\CostFcn\left(\Vecx\supopt(\cdot), \Vecu\supopt(\cdot), t\right) = \min\left\{ \CostFcn\left(\Vecx(\cdot), \Vecu(\cdot), t\right) \suchthat \Vecx(t) \text{ and } \Vecu(t) \text{ are feasible} \right\}$.  If no such trajectory exists, report failure.
\end{definition}
\nomenclature[AJ ]{$\CostFcn$}{Cost functional}%
\nomenclature[AR']{$\nonnegativereals$}{The set of non-negative real numbers}%

For our particular case, we employ a control-effort cost functional $\CostFcn$ that considers only the control trajectory $\Vecu(t)$ and the final time $t\subfinal$, which we represent by the notation $\CostFcn\left(\Vecu(t),t\subfinal\right)$.  Tailoring \cref{def: Optimal Planning Problem} to impulsively-actuated \fuel-optimal motion planning near circular orbits (where here we assume \fuel optimality is measured by the 2-norm metric), the spacecraft motion planning problem we wish to solve is formulated as:
\begin{align}
	\removeequationpadding
	\label{eqn: CWH Optimal Steering}
	\begin{optproblem}
		\given{\text{Initial state } \Vecx\subinit\left(t\subinit\right), \text{Goal region } \Xgoal, \text{Free space } \Xfree} 
		\minimize[\Vecu(t),t\subfinal]{ \CostFcn\left(\Vecu(t),t\subfinal\right) = \int_{t\subinit}^{t\subfinal} \norm[\pvalue]{\Vecu(t)} \D t = \finiteseries[i=1]{\NBurns} \norm[\pvalue]{\VecDeltaV_i} }
		\subjectto
		\constraint[\text{Initial Condition}]{\Vecx(t\subinit) = \Vecx\subinit}
		\constraint[\text{Terminal Condition}]{\Vecx(t\subfinal) \in \Xgoal}
		\constraint[\text{System Dynamics}]{\dot{\Vecx}(t) = \StateTransitionFcn\left(\Vecx(t), \Vecu(t), t\right)}
		\constraint[\text{Obstacle Avoidance}]{\Vecx(t) \in \Xfree \quad\quad\quad\quad\quad\quad\! \text{for all } t \in \closedinterval{t\subinit}{t\subfinal}}
		\constraint[\text{Other Constraints}]{\!\begin{aligned}\InequalityConstraintVec \left(\Vecx(t),\Vecu(t),t\right) \leq 0 \\ \EqualityConstraintVec \left(\Vecx(t),\Vecu(t),t\right) = 0 \end{aligned}
		\quad\quad\quad\!\! \text{for all } t \in \closedinterval{t\subinit}{t\subfinal} 
		}
		\constraint[\text{Active Safety}]{\exists\, \text{ safe } \Vecx\subCAM(\tau), \tau > t \quad\: \text{for all } \Vecx(t)}
	\end{optproblem}
\end{align}
where $t\subinit$ and $t\subfinal$ are the initial and final times, $\Vecx\subCAM(\tau)$ refers to an infinite-horizon Collision-Avoidance-Maneuver (CAM), and we restrict our attention to impulsive control laws $\Vecu(t) = \finiteseries[i=1]{\NBurns} \VecDeltaV_i \delta\left(t - \BurnTime_i\right)$, where $\delta(\cdot)$ denotes the Dirac delta function, which represent a finite sequence of instantaneous translational burns $\VecDeltaV_i$ fired at discrete times $\BurnTime_i$ (note that the number of burns $\NBurns$ is not fixed \emph{a priori}).  Though one could consider the set of all possible control laws, it is both theoretically and computationally simpler to optimize over the finite-dimensional search space enabled by using $\DeltaV$-vectors; furthermore, such control laws represent the most common forms of propulsion systems used on-orbit, including high-impulse cold-gas and liquid bi-propellant thrusters, and can at least in theory approximate continuous control trajectories in the limit that $\NBurns \rightarrow \infty$.
\nomenclature[Ag ]{$\InequalityConstraintVec$}{Inequality constraint functions vector}%
\nomenclature[Ah#]{$\EqualityConstraintVec$}{Equality constraint functions vector}%
\nomenclature[Ati]{$t\subinit$}{Initial time of motion plan}%
\nomenclature[Atf]{$t\subfinal$}{Final time of motion plan}%
\nomenclature[Ax.]{$\Vecx\subCAM$}{Collision-Avoidance-Maneuver trajectory}%
\nomenclature[B04b]{$\delta(\cdot)$}{Dirac-delta function}%
\nomenclature[Av,]{$\VecDeltaV$}{Impulsive velocity change or ``burn''}%
\nomenclature[B19b]{$\BurnTime$}{Time of impulsive burn}%
\nomenclature[AN ]{$\NBurns$}{Number of impulsive burns}%

We now elaborate on the objective function and each constraint in turn.

\subsection{Cost Functional}
\label{subsec: Cost Functional}
A critical component of the spacecraft rendezvous problem is the choice of the cost function.  Consistent with the conclusions of \cite{IMR:04}, we define our cost as the $L^1$-norm of the $\ell_p$-norm of the control.  The best choice for $p \geq 1$ depends on the propulsion system geometry, and on the frame within which $\Vecu(t) = \finiteseries[i=1]{\NBurns} \VecDeltaV_i \delta\left(t - \BurnTime_i\right)$ in $\CostFcn$ is resolved.  Minimizing \fuel directly requires resolving $\VecDeltaV_i$ into the spacecraft body-fixed frame; unfortunately, without relaxations, this requires spacecraft attitude $\Vecq$ to be included in our state $\Vecx$.  To avoid this, a standard used throughout the literature and routinely in practical applications is to employ $p = 2$ so that each $\VecDeltaV_i$ is as short as possible, and optimally allocate the commanded $\VecDeltaV_i$ to thrusters later (online, once the attitude is known).  Though this moves \fuel minimization to a separate control allocation step (which we discuss in more detail in \cref{subsec: Traj Constraints}), it greatly simplifies the problem in a practical way without neglecting attitude.  Because the cost of $\DeltaV$-allocation can only grow due to the need to satisfy torque constraints or impulse bounds (\eg necessitating counter-opposing thrusters to achieve the same net $\DeltaV$-vector), we are in effect minimizing the best-case, unconstrained \fuel use of the spacecraft.  As we will show in our numerical experiments, this does not detract significantly from the technique; the coupling of $\CostFcn$ with $p = 2$ to the actual \fuel use through the minimum control-effort thruster $\DeltaV$ allocation problem seems to promote low \fuelCost solutions.  Hence (in practice) $\CostFcn$ serves as a good proxy to \fuel use, with the added benefit of independence from propulsion system geometry.
\nomenclature[Aq ]{$\Vecq$}{Spacecraft attitude (\eg represented by a quaternion vector)}%

\subsection{Boundary Conditions}
Sampling-based motion planners generally assume a known initial state $\Vecx\subinit$ and time $t\subinit$ from which planning begins (\eg the current state of the spacecraft), and define one or more goal regions $\Xgoal$ to which guidance is sought.  In this paper, we assume the chaser targets only one goal state $\VecxTranspose\subgoal = \left[\,\delta \VecrTranspose\subgoal\ \delta \VecvTranspose\subgoal\,\right]$ at a time (``exact convergence,'' $\Xgoal = \left\{\Vecx\subgoal\right\}$), where $\delta \Vecr\subgoal$ is the goal position and $\delta \Vecv\subgoal$ is the goal velocity.  During numerical experiments, however, we sometimes permit termination at any state whose position and velocity lie within Euclidean balls $\ball[]{\delta \Vecr\subgoal}{\GoalPosTol}$ and $\ball[]{\delta \Vecv\subgoal}{\GoalVelTol}$, respectively (``inexact convergence,'' $\Xgoal = \ball[]{\Vecr\subgoal}{\GoalPosTol} \cross \ball[]{\Vecv\subgoal}{\GoalVelTol}$), where the notation $\ball[]{\Vecr}{\epsilon} = \left\{\Vecx \in \Xspace \suchthat \norm[]{\Vecr - \Vecx} \leq \epsilon \right\}$ denotes a ball with center $\Vecr$ and radius $\epsilon$.
\nomenclature[AB#]{$\ball[]{\Vecr}{\epsilon}$}{Euclidean ball with center $\Vecr$ and radius $\epsilon$}%

\subsection{System Dynamics}\label{sec:sysDyn}
Because spacecraft proximity operations incorporate significant drift, spatially-dependent external forces, and changes on fast timescales, any realistic solution must obey dynamic constraints; we cannot assume straight-line trajectories. In this paper, we employ the classical Clohessy-Wiltshire-Hill (CWH) equations \cite{WHC-RSW:60, GWH:78} for impulsive linearized motion about a circular reference orbit at radius $\rRef$ about an inverse-square-law gravitational attractor with parameter $\mu$.  This model provides a first-order approximation to a chaser spacecraft's motion relative to a rotating target-centered coordinate system (see \cref{fig: CWH Planning}).  The linearized equations of motion for this scenario as resolved in the Local Vertical, Local Horizontal (LVLH) frame of the target are given by:
\begin{subequations}
	\label{eqn: CWH Equations}
	\begin{align}
		\delta\ddot{x} - 3\omegaRef^2\posCrossTrack - 2\omegaRef\velInTrack &= \frac{\VecFMag_x}{m} \\
		\delta\ddot{y} + 2\omegaRef\velCrossTrack &= \frac{\VecFMag_y}{m} \\
		\delta\ddot{z} + \omegaRef^2\posOutofPlane &= \frac{\VecFMag_z}{m}
	\end{align}
\end{subequations}
where $\omegaRef = \sqrt{\frac{\mu}{\rRef^3}}$ is the orbital frequency (mean motion) of the reference spacecraft orbit, $m$ is the spacecraft mass, $\VecF = \left[\VecFMag_x, \VecFMag_y, \VecFMag_z\right]$ is some applied force, and ${(\posCrossTrack, \posInTrack, \posOutofPlane)}$ and $(\velCrossTrack, \velInTrack, \velOutofPlane)$ represent the cross-track (``radial''), in-track, and out-of-plane relative position and relative velocity, respectively.  The CWH model is quite common, and used often for rendezvous and proximity operations in Low Earth Orbit (LEO) and for leader-follower formation flight dynamics.
\nomenclature[AF ]{$\VecF$}{Applied force vector}%

Defining the state $\Vecx$ as $\transpose{\left[\posCrossTrack, \posInTrack, \posOutofPlane, \velCrossTrack, \velInTrack, \velOutofPlane\right]}$ and the applied force-per-unit-mass $\Vecu$ as $\transpose{\frac{1}{m}\VecF}$,
the CWH equations can be described by the \emph{linear time-invariant} (LTI) system:
\nomenclature[Am|]{$m$}{Spacecraft mass}%
\begin{equation}
	\label{eqn: CWH State Dynamics}
	\dot{\Vecx} = \StateTransitionFcn\left(\Vecx,\Vecu,t\right) = \MatA \Vecx + \MatB \Vecu
\end{equation}
where the dynamics matrix $\MatA$ and input matrix $\MatB$ are given by:%
\nomenclature[AA ]{$\MatA$}{System dynamics matrix}%
\nomenclature[AB ]{$\MatB$}{System input matrix}%
\begin{subequations}
	\label{eqn: CWH Matrices}
	{ \setlength{\arraycolsep}{3pt}
	\begin{align*}
		\MatA &= \left[ \begin{array}{cccccc} 0 & 0 & 0 & 1 & 0 & 0 \\
					0 & 0 & 0 & 0 & 1 & 0 \\
					0 & 0 & 0 & 0 & 0 & 1 \\
					3\omegaRef^2 & 0 & 0 & 0 & 2\omegaRef & 0 \\
					0 & 0 & 0 & -2\omegaRef & 0 & 0 \\
					0 & 0 & -\omegaRef^2 & 0 & 0 & 0 \end{array} \right]
		& \MatB &= \left[ \begin{array}{ccc} 0 & 0 & 0 \\
					0 & 0 & 0 \\
					0 & 0 & 0 \\
					1 & 0 & 0 \\
					0 & 1 & 0 \\
					0 & 0 & 1 \end{array} \right].
	\end{align*}%
	}%
\end{subequations}
As for any LTI system, we can express the solution to \cref{eqn: CWH State Dynamics} for any time $t \geq \ti$ using superposition and the convolution integral as $\Vecx(t) = e^{\MatA \left(t - \ti\right)} \Vecx(\ti) + \int_{\ti}^{t} e^{\MatA \left(t - \BurnTime\right)} \MatB \Vecu\left(\BurnTime\right) \D \BurnTime.$
The expression $\MatPhi\left(t,\BurnTime\right) \triangleq e^{\MatA \left(t - \BurnTime\right)}$ is called the \emph{state transition matrix}, which importantly provides an analytical mechanism for computing state trajectories that we rely heavily upon in simulations.  Note, throughout this work, we shall sometimes represent $\MatPhi\left(t,\BurnTime\right)$ as $\MatPhi$ for brevity when its arguments are understood.
\nomenclature[B21a]{$\MatPhi$}{State transition matrix}%

We now specialize the above to the case of $\NBurns$ impulsive velocity changes at times $\ti \leq \BurnTime_i \leq \tf$, for $i \in \orderedlist[1]{\NBurns}$, in which case $\Vecu(\BurnTime) = \sum_{i=1}^\NBurns \VecDeltaV_i \delta(\BurnTime - \BurnTime_i)$, where $\delta(y) = \left\{ 1 \text{ where } y = 0, \text{ or } 0 \text{ otherwise} \right\}$ signifies the Dirac-delta distribution.  Substituting for $\MatPhi$ and $\Vecu\left(\BurnTime\right)$, this yields:
\begin{align*}
	\Vecx(t) &= \MatPhi\left(t,\ti\right) \Vecx(\ti) + \int_{\ti}^{t} \MatPhi\left(t,\BurnTime\right) \MatB \left(\sum_{i=1}^\NBurns \VecDeltaV_i \delta(\BurnTime - \BurnTime_i)\right) \D \BurnTime \\
		&= \MatPhi\left(t,\ti\right) \Vecx(\ti) + \sum_{i=1}^\NBurns \int_{\ti}^{t} \MatPhi\left(t,\BurnTime\right) \MatB \VecDeltaV_i \delta(\BurnTime - \BurnTime_i) \D \BurnTime,
\end{align*}
where on the second line we used the linearity of the integral operator.  By the sifting property of $\delta$, denoting $\NBurns_t$ as the number of burns applied from $\ti$ up to time $t$, we have for all times $t \geq \ti$ the following expression for the impulsive solution to \cref{eqn: CWH State Dynamics}:
\begin{subequations}
	\label{eqn: CWH Impulsive Solution}
	\begin{align}
		\Vecx(t) &= \MatPhi\left(t,\ti\right) \Vecx(\ti) + \sum_{i=1}^{\NBurns_t} \MatPhi\left(t,\BurnTime_i\right) \MatB \VecDeltaV_i
			\\ &= \MatPhi\left(t,\ti\right) \Vecx(\ti) + \underbrace{\left[ \begin{array}{ccc} \MatPhi\left(t,\BurnTime_1\right)\MatB & \dots & \MatPhi\left(t,\BurnTime_{\NBurns_t}\right)\MatB \end{array} \right]}_{\triangleq \MatPhi_v\left(t,\left\{\BurnTime_i\right\}_i\right)} \underbrace{\left[\begin{array}{c} \VecDeltaV_1 \\ \vdots \\ \VecDeltaV_{\NBurns_t} \end{array} \right]}_{\triangleq \MatDeltaV}
			\\ &= \MatPhi\left(t,\ti\right) \Vecx(\ti) + \MatPhi_v\left(t,\left\{\BurnTime_i\right\}_i\right) \MatDeltaV.
	\end{align}
\end{subequations}
\nomenclature[B21a]{$\MatPhi_v$}{Aggregation of impulse state transition matrices}%
\nomenclature[AV)]{$\MatDeltaV$}{Stacked impulse vector}%
Throughout this paper, the notations $\MatDeltaV$ for the stacked $\DeltaV$-vector and $\MatPhi_v\left(t,\left\{\BurnTime_i\right\}_i\right)$ for the aggregated impulse state transition matrix (or simply $\MatPhi_v$ for short, when the parameters $t$ and $\{\BurnTime_i\}_i$ are clear) implicitly imply only those burns $i$ occurring before time $t$.

\subsection{Obstacle Avoidance}
Obstacle avoidance is imposed by requiring the spacecraft trajectory $\Vecx(t)$ to stay within $\Xfree$ (or, equivalently, outside of the obstacle region $\Xobs$) -- typically a difficult non-convex constraint.  For CWH proximity operations, $\Xobs$ might include those states that result in a collision with a neighboring object, all positions which lie outside of a given approach corridor, all velocities violating a given relative guidance law, \etc.  In our numerical experiments, we assume $\Xobs$ includes an ellipsoidal Keep Out Zone (KOZ) enclosing the target spacecraft centered at the origin and a conical nadir-pointing region that approximates its antenna beam pattern -- this both enforces collision-avoidance and prevents the chaser from interfering with the target's communications.

Note that according to the definition of $\Xfree$, this also requires the solution $\Vecx(t)$ to stay within the confines of $\Xspace$ (CWH system dynamics do not guarantee that state trajectories will lie in $\Xspace$ despite the fact that their endpoints do).  Though not strictly necessary in practice, if $\Xfree$ is defined to mark the extent of reliable sensor readings or the boundary inside which CWH equation linearity assumptions hold, then this can be a useful constraint to enforce.

\subsection{Other Trajectory Constraints}
\label{subsec: Traj Constraints}

Many other types of constraints may be included to encode additional restrictions on state and control trajectories, which we represent here by a set of inequality constraints $\InequalityConstraintVec$ and equality constraints $\EqualityConstraintVec$ (note that $\InequalityConstraintVec$ and $\EqualityConstraintVec$ denote vector functions).  To illustrate the flexibility of the sampling-based planning approach, we encode the following into constraints $\InequalityConstraintVec$ (for brevity, we omit their exact representation, which is straightforward based on vector geometry):
\begin{align*}
	&\TotalManeuverDuration\submin \leq t\subfinal - t\subinit \leq \TotalManeuverDuration\submax
		& &
		& & \text{Plan Duration Bounds}
	\\ &\VecDeltaV_i \in {\ControlSet}\left(\Vecx(\BurnTime_i)\right)
		& & \text{ for all } i = \orderedlist{\NBurns}
		& & \text{Control Feasibility}
	\\ \smashoperator[r]{\Union_{k \in \finitelist[1]{\Nthrusters}}} & \Plume_k\left(\PlumeAxis_k, \PlumeHalfAngle, \PlumeHeight\right) \intersect \PlumeAvoidanceTarget = \nullset
		& & \text{ for all } i = \orderedlist{\NBurns} 
		& & \text{Plume Impingement}
\end{align*}
\nomenclature[Ai ]{$i$}{Impulse index}%
\nomenclature[Ak ]{$k$}{Thruster index}%
\nomenclature[AK ]{$\Nthrusters$}{Number of thrusters}%
\nomenclature[AT(]{$\TotalManeuverDuration\submin, \TotalManeuverDuration\submax$}{Duration bounds for the motion plan}%
\nomenclature[AP ]{$\Plume$}{Exhaust plume}%
\nomenclature[B02b]{$\PlumeHalfAngle$}{Exhaust plume half-angle}%
\nomenclature[AH ]{$\PlumeHeight$}{Exhaust plume height}%
\nomenclature[ASt]{$\PlumeAvoidanceTarget$}{Target spacecraft circumscribing sphere}%
%
\noindent Here $0 \leq \TotalManeuverDuration\submin < \TotalManeuverDuration\submax$ represent minimum and maximum motion plan durations, $\ControlSet\left(\Vecx(\BurnTime_i)\right)$ is the admissible \emph{control set} corresponding to state $\Vecx(\BurnTime_i)$%
, $\Plume_k$ is the exhaust plume emanating from thruster $k$ of the chaser spacecraft while executing burn $\VecDeltaV_i$ at time $\BurnTime_i$, and $\PlumeAvoidanceTarget$ is the target spacecraft circumscribing sphere.
We motivate each constraint in turn.

\paragraph{Plan Duration Bounds}
Plan duration bounds facilitate the inclusion of rendezvous windows based on the epoch of the chaser at $\Vecx\subinit\left(t\subinit\right)$; such windows might be determined by illumination requirements, grounds communication opportunities, or mission timing restrictions, for example.  $\TotalManeuverDuration\submax$ may also be used to ensure the errors incurred by our linearized CWH model, which grow with time, do not exceed acceptable tolerances.  

\paragraph{Control Feasibility}
\label{para: Control Feasibility}
Control set constraints are intended to encapsulate limitations on control authority imposed by propulsive actuators and their geometric distribution about the spacecraft.  For example, given the maximum burn magnitude $0 < \VecDeltaVMag\submax$, the constraint:
\begin{align}
	\label{eqn: Net Delta-V Bounds}
		\norm[\pvalue]{\VecDeltaV_i} \leq \VecDeltaVMag\submax
		& \text{ for all } i = \orderedlist{\NBurns}
\end{align}
\nomenclature[Av.]{$\VecDeltaVMag\submax$}{Maximum (net) impulse vector magnitude}%
might be used to represent an upper bound on the impulse range achievable by a gimbaled thruster system that is able to direct thrust freely in all directions. 
In our case, we use ${\ControlSet}\left(\Vecx(\BurnTime_i)\right)$ to represent all commanded net $\DeltaV$-vectors that (i) satisfy the constraint \cref{eqn: Net Delta-V Bounds} above, and also (ii) can be successfully allocated to thrusters along trajectory $\Vecx(t)$ at time $\BurnTime_i$ according to a simple minimum-control effort thruster allocation problem (a straightforward linear program (LP) \cite{MB:02}).
%
\bgroup
\renewcommand{\subnet}{_i}
To keep the paper self-contained, we repeat the problem here and in our notation.  Let $\resolveat{\VecThrust\subnet}{\BodyFixedFrame}$ and $\resolveat{\VecMoment\subnet}{\BodyFixedFrame}$ be the desired net $\DeltaV$ and moment vectors at burn time $\BurnTime_i$, resolved in the body-fixed frame according to attitude $\Vecq\left(\BurnTime_i\right)$ (we henceforth drop the bar, for brevity).  Note the attitude $\Vecq\left(\BurnTime_i\right)$ must either be included in the state $\Vecx\left(\BurnTime_i\right)$ or be derived from it, as is assumed in this paper by imposing (along nominal trajectories) a nadir-pointing attitude profile for the chaser spacecraft.  Let $\AllocatedThrust_k = \norm[2]{\VecAllocatedThrust_k}$ be the $\DeltaV$-magnitude allocated to thruster $k$, which generates an impulse in direction $\VecThrustDir_k$ at position $\VecThrustPos_k$ from the spacecraft center-of-mass (both are constant vectors if resolved in the body-fixed frame).  Finally, to account for the possibility of on or off thrusters, let $\ThrustAvailable_k$ be equal to $1$ if thruster $k$ is available for burn $i$, or $0$ otherwise.  Then the minimum-effort control allocation problem can be represented as:
\nomenclature[Av0]{$\VecThrust\subnet$}{Net impulse vector commanded for $\DeltaV$ allocation}%
\nomenclature[AMi]{$\VecMoment\subnet$}{Net moment vector commanded for $\DeltaV$ allocation}%
\nomenclature[Av-]{$\AllocatedThrust_k$}{Impulse vector magnitude allocated to thruster $k$}%
\nomenclature[Av-]{$\VecThrustDir_k$}{Unit-vector direction of impulse allocated to thruster $k$}%
\nomenclature[B17b]{$\VecThrustPos_k$}{Position vector of thruster $k$ relative to the center-of-mass}%
\nomenclature[B07b]{$\ThrustAvailable_k$}{(On-off) boolean flag indicating availability of thruster $k$}%
\begin{align}
	\begin{optproblem}
		\given{
			\text{On-off flags } \ThrustAvailable_k, \text{thruster positions } \VecThrustPos_k, \text{thruster axes } \VecThrustDir_k,
		}
		& & \mathrlap{\text{commanded } \DeltaV\text{-vector } \VecThrust\subnet, \text{and commanded moment vector } \VecMoment\subnet} \\
		\minimize[\AllocatedThrust_k]{\sum_{k=1}^{\Nthrusters} \AllocatedThrust_k}
		\subjectto
		\constraint[\text{Net } \DeltaV \text{-Vector Allocation}]{ \sum_{k=1}^{\Nthrusters} \VecThrustDir_k \left(\ThrustAvailable_k \AllocatedThrust_k\right) = \VecThrust\subnet }
		\constraint[\text{Net Moment Allocation}]{ 
			\sum_{k=1}^{\Nthrusters} \left( \VecThrustPos_k \cross \VecThrustDir_k \right) \left(\ThrustAvailable_k \AllocatedThrust_k\right)
			= \VecMoment\subnet
		}
		\constraint[\text{Thruster } \DeltaV \text{ Bounds}]{ \AllocatedThrustMin_k \leq 
			\AllocatedThrust_k \leq \AllocatedThrustMax_k }
	\end{optproblem}
	\label{eqn: Minimum-Effort Delta-V Allocation}
\end{align}
where $\AllocatedThrustMin_k$ and $\AllocatedThrustMax_k$ represent minimum and maximum impulse limits on thruster $k$ (due to actuator limitations, minimum impulse bit, pulse-width constraints, or maximum on-time restrictions, for example).
Because $\DeltaV$ is directly-proportional to thrust through the Tsiokolvsky rocket equation, the formulation above is directly analogous to minimum-\fuel consumption; as discussed in \cref{subsec: Cost Functional}, by using control trajectories that minimize commanded $\DeltaV$-vector lengths $\norm{\VecThrust\subnet}$, we can drive \fuel use downwards as much as possible subject to our thrust bounds and net torque constraints.  In this work, we set $\VecMoment\subnet = \VecZeros$ to enforce torque-free burns and minimize the disturbance to our assumed attitude trajectory $\Vecq(t)$.
\nomenclature[Av/]{$\AllocatedThrustMin_k$, $\AllocatedThrustMax_k$}{Bounds on impulse magnitudes of thruster $k$}%

Note that we do not consider a minimum norm constraint in \cref{eqn: Net Delta-V Bounds} for $\VecThrust\subnet$.  As discussed in \cref{subsec: Cost Functional}, $\|\VecThrust\subnet\|$ is only a proxy for the true \fuel cost computed from the thrust allocation problem (\cref{eqn: Minimum-Effort Delta-V Allocation}).  The value of the norm bound $\VecDeltaVMag\submax$ may be computed from the thruster limits $\AllocatedThrustMin_k, \AllocatedThrustMax_k$ and knowledge of the thruster configuration.

\paragraph{Plume Impingement}
Impingement of thruster exhaust on neighboring spacecraft can lead to dire consequences, including destabilizing effects on attitude caused by exhaust gas pressure, degradation of sensitive optical equipment and solar arrays, and unexpected thermal loading \cite{GD:91}.  To take this into account, we generate representative exhaust plumes at the locations of each thruster firing.  For burn $i$ occurring at time $\BurnTime_i$, a right circular cone is generated with axis $\PlumeAxis_k$, half-angle $\PlumeHalfAngle$, and height $\PlumeHeight$ at each \emph{active} thruster $k$ ($\ThrustAvailable_k = 1$) for which its allocated thrust $\AllocatedThrust_k\supopt$ is non-zero, as determined by the solution to \cref{eqn: Minimum-Effort Delta-V Allocation}.  Intersections are checked with the target spacecraft circumscribing sphere, $\PlumeAvoidanceTarget$, which is used as a more efficiently-verified, conservative approximation to the exact target geometry.  For an illustration, see \cref{subfig: Plume Impingement}.

\begin{figure}
		\centering
		\begin{tikzpicture}
			\node[anchor=south west,inner sep=0] (image) {\includegraphics[width=0.55\columnwidth]{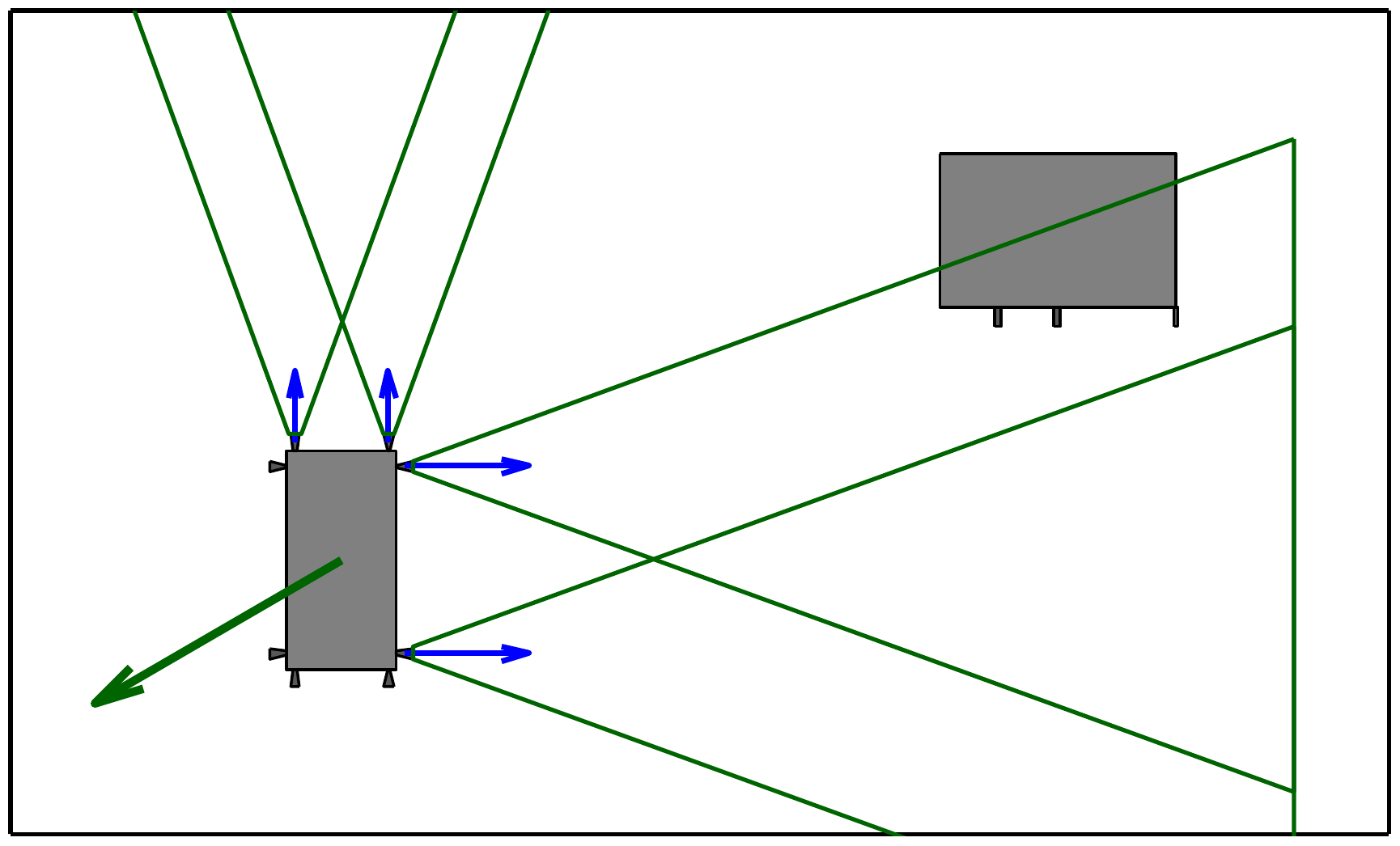}};
			\begin{scope}[x={(image.south east)},y={(image.north west)}]
				\coordinate (targetspacecraft) at (0.76,0.72);											
				\node [anchor=south east] (targetcircumsphere) at 
					([shift={(-0.09,0.02)}]targetspacecraft.center) {$\PlumeAvoidanceTarget$};			
				\node [anchor=west] (plumehalfangle) at (0.49,0.51) {$\PlumeHalfAngle$}; 				
				\node [anchor=north west] (deltaVnet) at (0.09,0.20) {$\VecDeltaV_i$}; 					
				\node [anchor=west] (deltaV) at (0.37,0.22) {$-\AllocatedThrust_k \VecThrustDir_k$}; 	
				\node [anchor=north] (plumeheight) at (0.73,0.39) {$\PlumeHeight$}; 					
				\draw [-,thick,dashed,black] (0.39,0.45) to (1.0,0.45); 								
				\draw [very thick,solid,black] (0.49,0.45) to[out=90, in=-30] (0.47,0.56); 				
				\draw [stealth-stealth,thick,solid,black] (0.29,0.4) to (0.92,0.4); 					
				\draw [-,thick,solid,black] (0.29,0.37) to (0.29,0.43); 								
				\draw [-,thick,solid,black] (0.92,0.37) to (0.92,0.43); 								
			\end{scope}
			\draw [dotted,thick,blue] (targetspacecraft.center) circle (0.06\columnwidth);				
		\end{tikzpicture}
		\caption{Illustration of exhaust plume impingement from thruster firings.  \iftoggle{AIAAjournal}{}{Given commanded $\VecDeltaV_i$, the spacecraft must successfully allocate the impulse to thrusters while simultaneously avoiding impingement of neighboring object(s).}}
		\label{subfig: Plume Impingement}
\end{figure}
\egroup

\paragraph{Other Constraints}
Other constraints may easily be added.  Solar array shadowing, pointing constraints, approach corridor constraints, \etc, all fit within the framework, and may be represented as additional inequality or equality constraints.  For additional examples, refer to \cite{JAS-BA-IADN-MP:15}.

\subsection{Active Safety}
\label{subsec: Active Safety}
An additional feature we include in our work is the concept of \emph{active safety}, in which we require the target spacecraft to maintain a feasible Collision Avoidance Maneuver (CAM) to a safe higher or lower circular orbit from every point along its solution trajectory in the event that any mission-threatening control degradations such as stuck-off thrusters (as in \cref{fig: Control Allocation}) take place.  This reflects our previous work \cite{JS-BB-MP:15}, and is detailed more thoroughly in \cref{sec: Safety}.  

\begin{figure}
	\subcaptionbox{%
		Thruster allocation without stuck-off failures.
	}[0.46\columnwidth]{%
		\includegraphics[height=0.20\textheight]{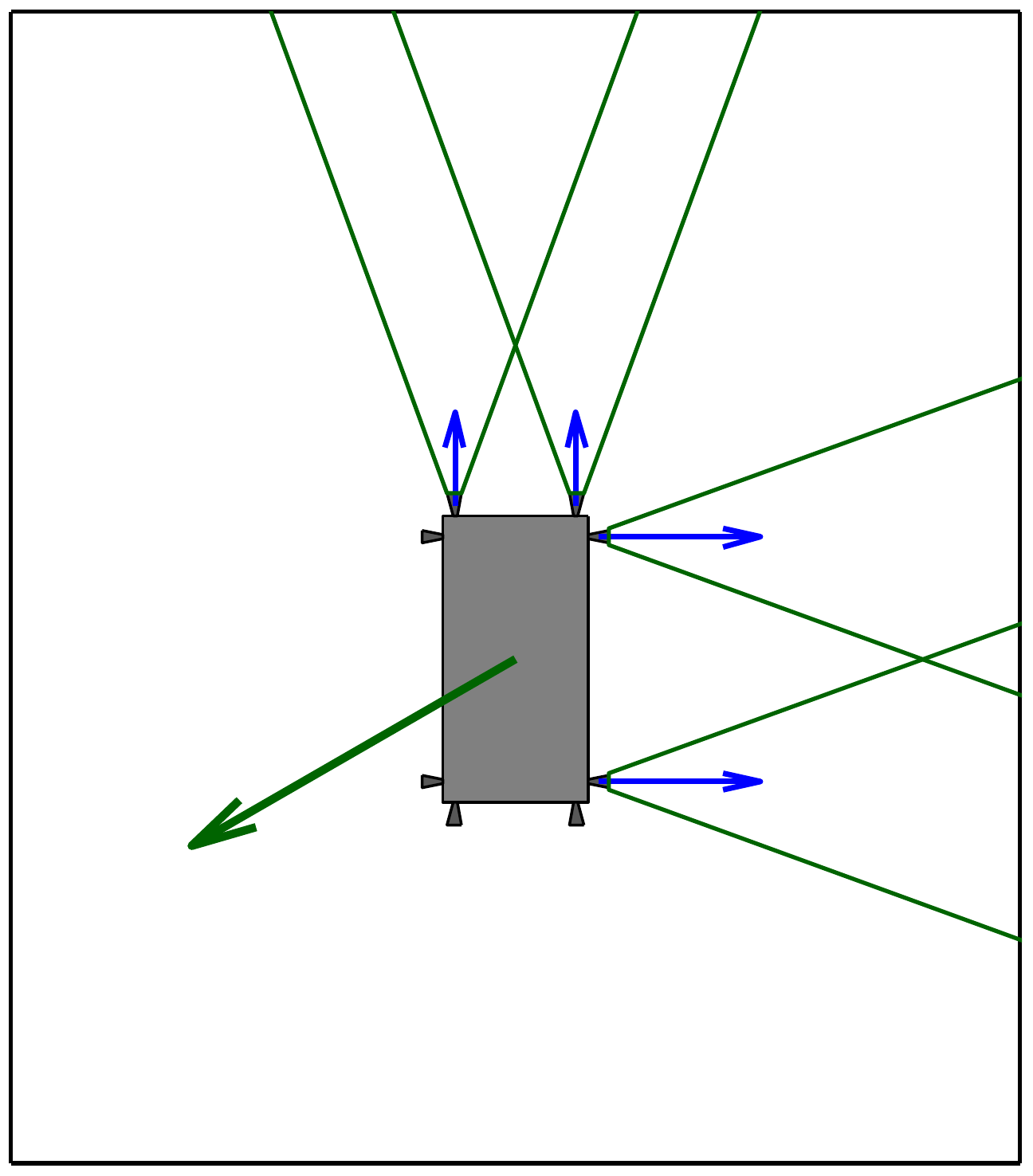}
	}
	\hspace{1em}
	\subcaptionbox{%
		The same allocation problem, with both upper-right thrusters ``stuck off.''
	}[0.46\columnwidth]{%
		\begin{tikzpicture}
			\node[anchor=south west,inner sep=0] (image) {\includegraphics[height=0.20\textheight]{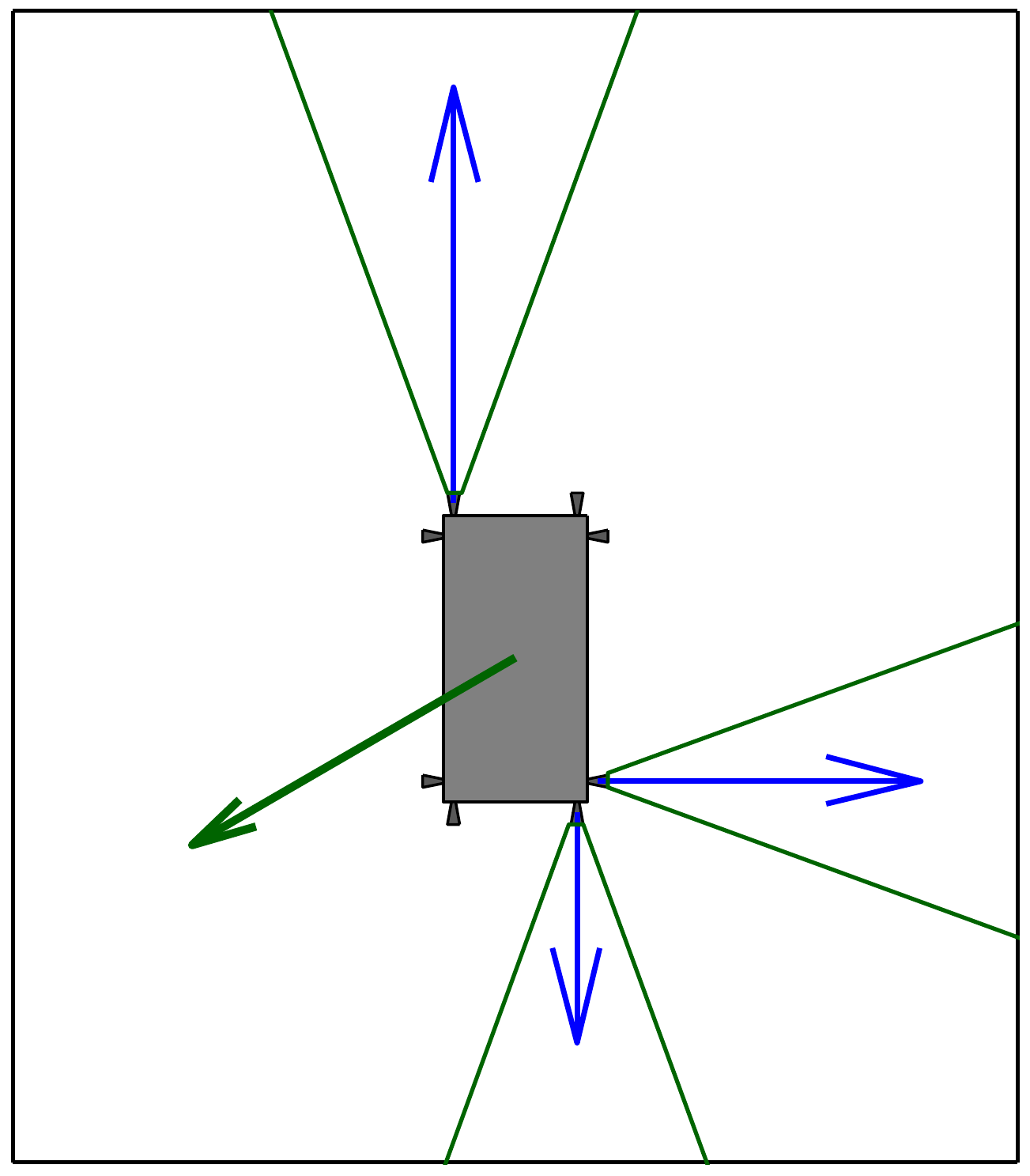}};
			\begin{scope}[x={(image.south east)},y={(image.north west)}]
				\draw[red,ultra thick,rounded corners] (0.54,0.53) rectangle (0.60,0.59);
				\draw[red,ultra thick,solid] (0.55,0.535) to (0.59,0.58);
			\end{scope}
		\end{tikzpicture}
	}
	\caption{Changes to torque-free control allocation in response to thruster failures.  \iftoggle{AIAAjournal}{}{As can be seen, for the same net $\VecDeltaV$ vector (the large green arrow), thruster configuration changes can have a profound impact on thruster $\DeltaV$ locations and magnitudes (blue arrows), and hence on plume impingement satisfaction and thereby the safety of proposed $\VecDeltaV$ trajectories.}}
	\label{fig: Control Allocation}
\end{figure}

\section{Vehicle Safety under CWH Dynamics}
\label{sec: Safety}

In this section, we describe a general strategy for handling the active safety constraints introduced in \cref{eqn: CWH Optimal Steering,subsec: Active Safety}, whereby we seek to deterministically certify our solution trajectory with respect to thruster ``stuck-off'' failures up to a given fault-tolerance but without compromising real-time implementability.  As will be motivated, the idea is to embed the escape trajectory generation process together with positively-invariant set safety constraints into the sampling routine of sampling-based motion planners.  Due to the \fuel-limited nature of many spacecraft proximity operations missions, emphasis is placed on finding minimum-$\DeltaV$ escape maneuvers in order to improve mission reattempt opportunities.  We prioritize \emph{active safety} measures (which allow actuated Collision Avoidance Maneuvers or CAMs) over passive safety guarantees (which shut off all thrusters and restrict the system to zero control) in order to broaden the search space for abort trajectories and thereby allow better performance during the nominal mission.  We emulate the rendezvous design process taken by Barbee \etal \cite{BWB-JRC-SH-FLM-MM-BJN-JVE:11}, but numerically optimize \fuel consumption and remove much of its reliance on user intuition by automating the satisfaction of safety constraints.

Consistent with the notions proposed by Schouwenaars \etal \cite{TS-JPH-EF:04a}, Fehse \cite[4.1.2]{WF:03}, and Fraichard \cite{TF:07}, the definition of vehicle safety in this paper is taken as the following:
\begin{definition}[Vehicle Safety]
	\label{def: Vehicle Safety}
	A vehicle state is \emph{safe} if and only if there exists, under the worst-possible environment and failure conditions, a collision-free, dynamically-feasible trajectory satisfying the constraints that navigates the vehicle to a set of states in which it can remain indefinitely.
\end{definition}
Note \emph{indefinitely} (or sufficiently-long for all practical purposes under the accuracy of the dynamics model) is a critical component of the definition.  Trajectories without infinite-horizon safety guarantees can ultimately violate constraints \cite{LB-JPH:08}, thereby posing a risk that can defeat the purpose of using a hard guarantee in the first place.  For this reason, we impose safety constraints over an infinite-horizon (or, as we will show using invariant sets, an \emph{effectively} infinite horizon).

Consider the scenario described in \cref{sec: Problem} for a spacecraft with nominal state trajectory $\Vecx(t) \in \Xspace$ and control trajectory $\Vecu(t) \in \ControlSet{\left(\Vecx(t)\right)}$ evolving over time $t$ in time span $\TimeHorizon = \leftclosedinterval{t\subinit}{\infty}$.  Let $\TimeHorizon\subfail \subseteq \TimeHorizon$ represent the set of potential failure times we wish to certify (for instance, a set of prescribed burn times $\left\{\BurnTime_i\right\}$, the final approach phase $\TimeHorizon_{\mathrm{approach}}$, or the entire maneuver span $\TimeHorizon$).  When a failure occurs, control authority is lost through a reduction in actuator functionality, negatively impacting system controllability.  Let $\ControlSet\subfail{\left(\Vecx\right)} \subset \ControlSet{\left(\Vecx\right)}$ represent the new control set, where we assume that $\VecZeros \in \ControlSet\subfail$ for all $\Vecx$ (\ie we assume that no actuation is always a feasible option).
\nomenclature[AT ]{$\TimeHorizon$}{Time span}%
\nomenclature[AT#]{$\TimeHorizon\subfail$}{Set of potential failure times}%
\nomenclature[AU#]{$\ControlSet\subfail$}{Admissible control set after a failure (``failure modes'')}%
	Mission safety is commonly imposed in two different ways \cite[4.4]{WF:03}:

\begin{itemize}
	\item \emph{Passive Safety}: For all $t\subfail \in \TimeHorizon\subfail$, ensure that $\Vecx\subCAM{\left(\tCAM\right)}$ satisfies \cref{def: Vehicle Safety} with $\Vecu\subCAM{\left(\tCAM\right)} = \VecZeros$ for all $\tCAM \geq t\subfail$.  For spacecraft, this means its coasting arc from the point of failure must be safe for all future time (though practically this is imposed only over a finite horizon).
	\item \emph{Active Safety}: For all $t\subfail \in \TimeHorizon\subfail$ and failure modes $\ControlSet\subfail$, design actuated collision avoidance maneuvers $\Vecx\subCAM\left(\tCAM\right)$ to satisfy \cref{def: Vehicle Safety} with $\Vecu\subCAM{\left(\tCAM\right)} \in \ControlSet\subfail$ for all $\tCAM \geq t\subfail$, where $\Vecu\subCAM{\left(\tCAM\right)}$ is not necessarily restricted to $\VecZeros$.
\end{itemize}
\nomenclature[Atf]{$t\subfail$}{Time of failure}%
In much of the literature, only passive safety is considered out of a need for tractability (to avoid verification over a combinatorial explosion of failure mode possibilities) and in order to capture the common failure mode in which control authority is lost completely.  Though considerably simpler to implement, this approach potentially neglects many mission-saving control policies.

\subsection{Positively-Invariant Set Constraints}
Instead of evaluating trajectory safety for all future times $\tCAM \geq t\subfail$, it is generally more practical to consider finite-time solutions starting at $\Vecx\left(t\subfail\right)$ that terminate at a point inside a safe positively invariant set $\Xinvariant$.  If the maneuver is safe and the invariant set is safe for all time, then safety of the spacecraft is assured.

\begin{definition}[Positively Invariant Set]
	\label{def: Positively Invariant Set}
	A set $\Xinvariant$ is positively-invariant with respect to the autonomous system $\dot{\Vecx}\subCAM = \StateTransitionFcn(\Vecx\subCAM)$ if and only if
	$\Vecx\subCAM\left(t\subfail\right) \in \Xinvariant$ implies $\Vecx\subCAM\left(\tCAM\right) \in \Xinvariant$ for all $\tCAM \geq t\subfail$.
\end{definition}
\nomenclature[AX(]{$\Xinvariant$}{Positively-invariant set}%

This yields the following definition for finite-time verification of trajectory safety:

\begin{definition}[Finite-Time Trajectory Safety Verification]
	\label{def: Finite-Time Traj Safety}
	For all $t\subfail \in \TimeHorizon\subfail$ and for all $\ControlSet\subfail\left(\Vecx(t\subfail)\right) \subset \ControlSet\left(\Vecx(t\subfail)\right)$, there exists $\left\{\Vecu(\tCAM), \tCAM \geq t\subfail\right\} \in \ControlSet\subfail\left(\Vecx(t\subfail)\right)$ and $\horizonTime > t\subfail$ such that $\Vecx(\tCAM)$ is feasible for all $t\subfail \leq \tCAM \leq \horizonTime$ and $\Vecx(\horizonTime) \in \Xinvariant \subseteq \Xfree$, 
\end{definition}
\nomenclature[AT']{$\horizonTime$}{(Finite) horizon time used for safety verification}%
\noindent where $\horizonTime$ is some finite safety verification horizon time. Though in principle any \emph{safe} positively-invariant set $\Xinvariant$ is acceptable, not just any will do in practice; in real-world scenarios, unstable trajectories caused by model uncertainties could cause state divergence towards configurations whose safety has not been verified.  Hence care must be taken to use only \emph{stable} positively-invariant sets.

Combining \cref{def: Finite-Time Traj Safety} with our constraints in \cref{eqn: CWH Optimal Steering} from \cref{sec: Problem}, spacecraft trajectory safety after a failure at $\Vecx\left(t\subfail\right) = \Vecx\subfail$ can be expressed in its full generality as the following optimization problem in decision variables $\horizonTime \in \leftclosedinterval{t\subfail}{\infty}$, $\Vecx\subCAM\left(t\right)$, and $\Vecu\subCAM\left(t\right)$, for $t \in \closedinterval{t\subfail}{\horizonTime}$:
\begin{equation}
	\label{eqn: Finite-Time Traj Safety OptProb}
	\begin{optproblem}
		\given{\text{Failure state } \Vecx\subfail\left(t\subfail\right), \text{failure control set } \ControlSet\subfail\left(\Vecx\subfail\right), \text{the free space } \Xfree,}
		& & \mathrlap{\text{a safe, stable invariant set } \Xinvariant, \text{and a fixed number of impulses } \NBurns} \\
		\minimize[\begin{subarray}{c}%
			\Vecu\subCAM(t) \in \ControlSet\subfail\left(\Vecx\subfail\right), \\
			\horizonTime, \Vecx\subCAM(t)
		\end{subarray}]{ \CostFcn\left(\Vecx\subCAM\left(t\right), \Vecu\subCAM\left(t\right), t\right) = \int_{t\subfail}^{\horizonTime} \norm[\pvalue]{\Vecu\subCAM(t)} \D t = \finiteseries[i=1]{\NBurns} \norm[\pvalue]{\VecDeltaV\subCAM{}_{,}{}_i} }
		\subjectto
		\constraint[\text{System Dynamics}]{\dot{\Vecx}\subCAM\left(t\right) = \StateTransitionFcn\left( \Vecx\subCAM\left(t\right), \Vecu\subCAM\left(t\right), t \right)}
		\constraint[\text{Initial Condition}]{ \Vecx\subCAM\left(t\subfail\right) = \Vecx\subfail }
		\constraint[\text{Safe Termination}]{ \Vecx\subCAM\left(\horizonTime\right) \in \Xinvariant }
		\constraint[\text{Obstacle Avoidance}]{\Vecx\subCAM(t) \in \Xfree \quad\quad\quad\quad\quad\;\, \text{for all } t \in \closedinterval{t\subfail}{\horizonTime}}
		\constraint[\text{Other Constraints}]{\!\begin{aligned}
			\InequalityConstraintVec \left(\Vecx\subCAM,\Vecu\subCAM,t\right) \leq 0 \\
			\EqualityConstraintVec \left(\Vecx\subCAM,\Vecu\subCAM,t\right) = 0
		\end{aligned}
		\quad\quad\quad\!\! \text{for all } t \in \closedinterval{t\subfail}{\horizonTime} 
		}
	\end{optproblem}
\end{equation}
\nomenclature[Ax.]{$\Vecx\subfail$}{State vector at which a failure occurs}%
This is identical to \cref{eqn: CWH Optimal Steering}, except that now under failure mode $\ControlSet\subfail\left(\Vecx\subfail\right)$ we abandon the attempt to terminate at a goal state in $\Xgoal$ and instead replace it with a constraint to terminate at a safe, stable positively-invariant set $\Xinvariant$. We additionally neglect any timing constraints encoded in $\InequalityConstraintVec$ as we are no longer concerned with our original rendezvous.  Typically any feasible solution is sought following a failure, in which case one may use $\CostFcn = 1$.  However, to enhance the possibility of mission recovery, we assume the same minimum-\fuel 2-norm cost function as before, but with the exception that here, as we will motivate, we use a single-burn strategy with $\NBurns = 1$.

\subsection{Fault-Tolerant Safety Strategy}
\label{subsec: Safety Strategy}
The difficulty of solving the finite-time trajectory safety problem lies in the fact that a feasible solution must be found for \emph{all} possible failure times (typically assumed to be any time during the mission) as well as for \emph{all} possible failures.  To illustrate, for an $\FaultTolerance$-fault tolerant spacecraft with $\Nthrusters$ control components (thrusters, momentum wheels, CMGs, \etc) that we each model as either ``operational'' or ``failed,'' this yields a total of 
$N\subfail = \sum\limits_{\FaultIndex = 0}^{\FaultTolerance} \left(\begin{smallmatrix} \Nthrusters \\ \FaultIndex \end{smallmatrix}\right)
	= \sum\limits_{\FaultIndex = 0}^{\FaultTolerance} \frac{\Nthrusters!}{\left(\Nthrusters - \FaultIndex\right)! \FaultIndex!}$
possible optimization problems that must be solved for every time $t\subfail$ along the nominal trajectory.
By any standard, this is intractable, and hence explains why so often \emph{passive} safety guarantees are selected (requiring only one control configuration check instead of $N\subfail$, since we prescribe $\Vecu\subCAM = \VecZeros$ which must lie in $\ControlSet\subfail$ given our assumption.  This is analogous to setting $\FaultIndex = \Nthrusters$ with $\FaultTolerance \triangleq \Nthrusters$).  One idea for simplifying the problem while still satisfying safety (the constraints of \cref{eqn: Finite-Time Traj Safety OptProb}) consists of the following strategy:
\nomenclature[Af']{$\FaultIndex$}{Fault index}%
\nomenclature[AF#]{$\FaultTolerance$}{Maximum fault tolerance}%
\nomenclature[AN']{$N\subfail$}{Number of potential failure modes}%

\begin{definition}[Fault-Tolerant Active Safety Strategy]
	\label{def: Fault-Tolerant Active Safety}
	As a conservative solution to the optimization problem in \cref{eqn: Finite-Time Traj Safety OptProb}, it is sufficient (but not necessary) to implement the following procedure:%
	\vspace*{-0.25em}%
	\begin{enumerate}
		\item \label{step: Prescribe a CAM}
		From each $\Vecx(t\subfail)$, prescribe a Collision-Avoidance Maneuver (CAM) policy $\Pi\subCAM$ that gives a horizon time $\horizonTime$ and escape control sequence $\Vecu\subCAM = \Pi\subCAM\left(\Vecx(t\subfail)\right)$ designed to automatically satisfy $\Vecu\subCAM(\tau) \subset \ControlSet$ for all $t\subfail \leq \tau \leq \horizonTime$ and $\Vecx\left(\horizonTime\right) \in \Xinvariant$.
		\item \label{step: Determine CAM Feasibility}
		For each failure mode $\ControlSet\subfail\left(\Vecx(t\subfail)\right) \subset \ControlSet\left(\Vecx(t\subfail)\right)$ up to tolerance $\FaultTolerance$, determine if the control law is feasible; that is, see if $\Vecu\subCAM = \Pi\subCAM\left(\Vecx(t\subfail)\right) \subset \ControlSet\subfail$ for the particular failure in question.
	\end{enumerate}
\end{definition}
\nomenclature[B16a]{$\Pi\subCAM$}{Collision-Avoidance-Maneuver policy}%

This effectively removes decision variables $\Vecu\subCAM$ from \cref{eqn: Finite-Time Traj Safety OptProb}, allowing simple numerical integration for satisfaction of the dynamic constraints and a straightforward \emph{a posteriori} verification of the other trajectory constraints (inclusion in $\Xfree$, and satisfaction of constraints $\InequalityConstraintVec$ and $\EqualityConstraintVec$).  This checks if the prescribed CAM, guaranteed to provide a safe escape route, can actually be accomplished in the given failure situation.  The approach is conservative due to the fact that the control law is imposed and not derived; however, the advantage is a greatly simplified optimal control problem with difficult-to-handle constraints relegated to \emph{a posteriori} checks --- exactly identical to the way that steering trajectories are derived and verified during the planning process of sampling-based planning algorithms.  Note that formal definitions of safety require that this be satisfied for all possible failure modes of the spacecraft; we do not avoid the combinatorial explosion of $N\subfail$.  However, each instance of problem \cref{eqn: Finite-Time Traj Safety OptProb} is greatly simplified, and with $\FaultTolerance$ typically at most $3$, the problem remains tractable.  The difficult part, then, lies in computing $\Pi\subCAM$, but this can easily be generated in an offline fashion.  Hence, the strategy should work well for vehicles with difficult, non-convex objective functions and constraints, as is exactly the case for CWH proximity operations.

Note, it is always possible to reduce this approach to the (more-conservative) definition of ``passive safety'' that has traditionally been seen in the literature by choosing some finite horizon $\horizonTime$ and setting $\Vecu\subCAM = \Pi\subCAM\left(\Vecx(t\subfail)\right) = \VecZeros$ for all potential failure times $t\subfail \in \TimeHorizon\subfail$.

\subsection{Safety in CWH Dynamics}
\label{sec: CWH Safety}

We now specialize these ideas to proximity operations under impulsive CWH dynamics.  Because many missions require stringent avoidance (prior to final approach and docking phase, for example), it is quite common for a ``Keep-Out Zone'' (KOZ) $\XsubKOZ$, typically ellipsoidal in shape, to be defined about the target in the CWH frame.
Throughout its approach, the chaser must certify that it will not enter this KOZ under any circumstance up to a specified thruster fault tolerance $\FaultTolerance$.
\nomenclature[AX)]{$\XsubKOZ$}{State space Keep-Out-Zone}%
	Per \cref{def: Finite-Time Traj Safety}, this necessitates a search for a safe invariant set for finite-time escape along with, as outlined by \cref{def: Fault-Tolerant Active Safety}, the definition of an escape policy $\Pi\subCAM$, which we describe next.

\subsubsection{CAM Policy}
\label{subsec: CWH CAM Policy Intro}
	For mission safety following a failure under CWH dynamics, \cref{def: Finite-Time Traj Safety} requires us to find a terminal state in an invariant set $\Xinvariant$ entirely contained within the free state space $\Xfree$. 
As will be motivated, we choose for $\Xinvariant$ the set of circularized orbits whose planar projections lie outside of the radial band spanned by the KOZ.  Circular orbits are stable (assuming Keplerian motion, which neglects perturbations that \emph{can} create unstable orbital changes -- likely not an unreasonable assumption as presumably the \emph{difference} in chaser/target perturbation responses matters more), accessible (given the proximity of the chaser to the target's circular orbit), and passively safe (once reached, provided that they do not intersect the KOZ).  In the planar case, this set of safe circularized orbits can fortunately be identified by inspection; as can be seen in \cref{fig: CWH Target KOZ and Circularization RIC}, we require only that the orbital radius lie outside of the KOZ radial band; otherwise circularization would result in an eventual collision, either in the short-term or after nearly one full synodic period -- a violation of \cref{def: Vehicle Safety}.
Such a region is called a zero-thrust ``Region of Inevitable Collision (RIC),'' which we denote as $\Xric$, whose complement $\Xinvariant$ can be used to terminate planar escape maneuvers. 
For the non-planar case, we can conservatively extend this result by considering only those circular orbits whose planar orbit projection avoids $\Xric$.
See \crefrange{eqn: CWH KOZ}{eqn: CWH Invariant Set} for a mathematical description:
\nomenclature[AX+]{$\Xric$}{State space Region of Inevitable Collision (with zero thrust)}%

\begin{align}
	\label{eqn: CWH KOZ}
	\XsubKOZ 	&= \left\{ \Vecx \suchthat \VecxTranspose \MatEllipsoid \Vecx \geq 1\text{, where } \MatEllipsoid = 
	\diag{\semiaxis_{\posCrossTrack}^{-2}, \semiaxis_{\posInTrack}^{-2}, \semiaxis_{\posOutofPlane}^{-2}, 0, 0, 0}
	\text{, with } \semiaxis_i \text{ representing } \right.
	\\ \nonumber &\qquad \left. \text{the ellipsoidal KOZ semi-axis in the $i$-th LVLH frame axis direction.} \right\}
	\\
	\label{eqn: CWH RIC}
	\Xric 		&= \left\{ \Vecx \suchthat \abs{\posCrossTrack} < \semiaxis_{\posCrossTrack} \text{, } \velCrossTrack = 0 \text{, } \velInTrack = \Vcirc \right\} \supset \XsubKOZ
	\\
	\label{eqn: CWH Invariant Set}
	\Xinvariant &= \left\{ \Vecx \suchthat \abs{\posCrossTrack} \geq \semiaxis_{\posCrossTrack} \text{, } \velCrossTrack = 0 \text{, } \velInTrack = \Vcirc \right\} = \setcomplement{\Xric}
\end{align}
\nomenclature[AE ]{$\MatEllipsoid$}{Ellipsoid matrix}%
\nomenclature[B17b]{$\semiaxis_{\posCrossTrack}, \semiaxis_{\posInTrack}, \semiaxis_{\posOutofPlane}$}{Ellipsoid semi-axes in the radial, in-track, and out-of-plane directions}%

\begin{figure}
	\hfill
	\subcaptionbox{%
		\label{subfig: CWH Planar Zero-Thrust RIC}
		Safe circularization burn zones $\Xinvariant$ for planar CWH dynamics.  \iftoggle{AIAAjournal}{}{Any circularization attempts inside $\Xric$ will result in eventual penetration of the client KOZ, as indicated by the arrows.}
	}[0.46\columnwidth]{%
		\includegraphics[width=0.95\columnwidth]{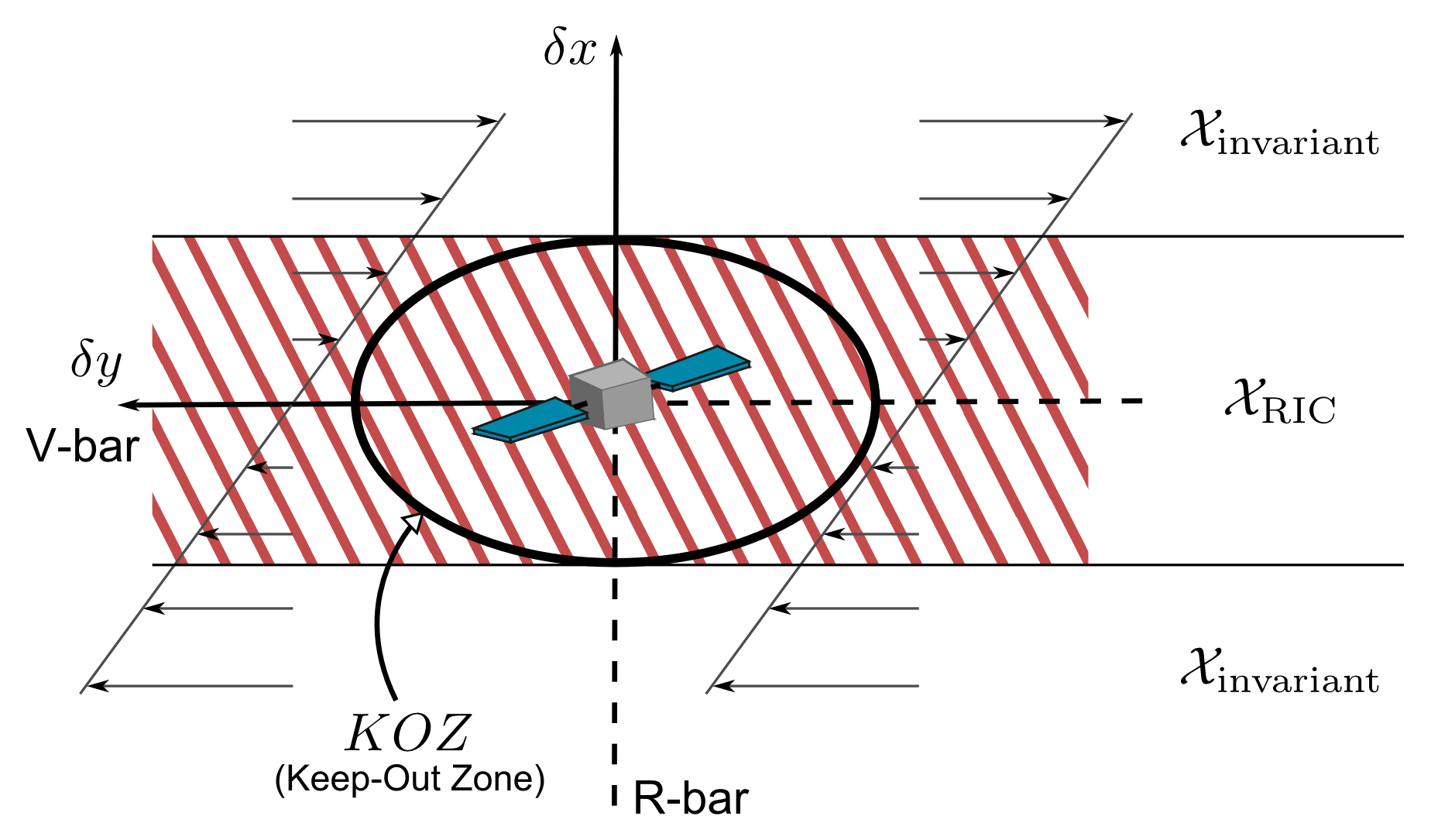}
	}
	\hfill
	\subcaptionbox{%
		\label{subfig: Target KOZ and Circularization RIC}
		Inertial view of the radial band spanned by the KOZ that defines the unsafe RIC.  \iftoggle{AIAAjournal}{}{Its complement shows the invariant positions in $\Xinvariant$ used for safe trajectory escape maneuver targeting.}
	}[0.42\columnwidth]{%
		\includegraphics[width=0.95\columnwidth]{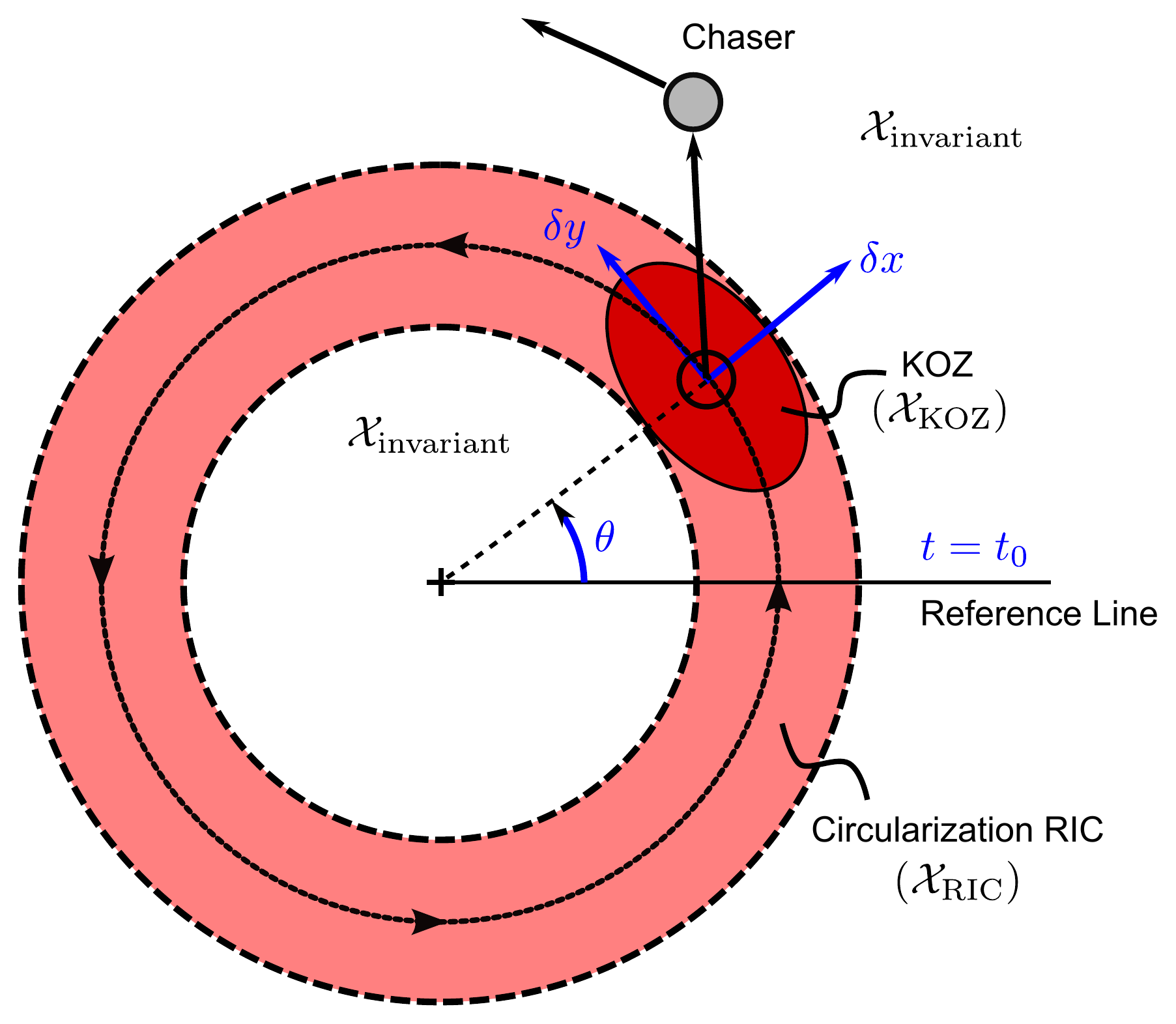}
	}
	\hfill
	\caption{Visualizing the safe and unsafe circularization regions used by the CAM safety policy.}
	\label{fig: CWH Target KOZ and Circularization RIC}
\end{figure}

%
In short, our CAM policy to safely escape from a state $\Vecx$ at which the spacecraft arrives (possibly under failures) at time $t\subfail$ consists of the following:
\vspace*{-0.5em}
\begin{enumerate}
	\setlength{\itemsep}{0pt}
	\item Coast from $\Vecx(t\subfail)$ to some new $\horizonTime > t\subfail$ such that $\Vecx\subCAM(\horizonTime\supminus)$ lies at a position in $\Xinvariant$.
	\item Circularize the orbit
	at $\Vecx\subCAM(\horizonTime)$ such that $\Vecx\subCAM(\horizonTime\supplus) \in \Xinvariant$.
	\item Coast along the new orbit (horizontal drift along the in-track axis in the CWH relative frame) in $\Xinvariant$ until allowed to continue the mission (\eg after approval from ground operators).
\end{enumerate}

\subsubsection{Determining the Circularization Time, $\horizonTime$}
\label{subsec: CWH CAM Policy Optimization}
In the event of a thruster failure at state $\Vecx(t\subfail)$ that requires an emergency CAM, the time $\horizonTime > t\subfail$ at which to attempt a circularization maneuver after coasting from $\Vecx(t\subfail)$ becomes a degree of freedom.  As we intend to maximize the recovery chances of the chaser after a failure, we choose $\horizonTime$ so as to minimize the cost of the circularization burn $\VecDeltaV\subcirc$, whose magnitude we denote $\DeltaV\subcirc$.  Details on the approach, which can be solved analytically, can be found in \cref{appendix: Optimal Circularization}.
\nomenclature[Av,]{$\VecDeltaV\subcirc$}{Circularization burn vector applied during Collision-Avoidance-Maneuvers}%

\subsubsection{Verifying CAM Policy Feasibility}
\label{subsec: CWH CAM Policy Verification}
Once the circularization time $\horizonTime$ is determined, feasibility of the escape trajectory under every possible failure configuration at $\Vecx(t\subfail)$
must be assessed in order to declare a particular CAM as actively-safe.
To show this, the constraints of \cref{sec: Problem} must be evaluated under every combination of ``stuck-off'' thrusters (up to fault tolerance $\FaultTolerance$), \emph{with the exception of KOZ avoidance} as this is embedded into the CAM design process.  Fortunately, given our particular constraints encoded in \cref{eqn: Finite-Time Traj Safety OptProb} and described in \cref{subsec: Traj Constraints} (static in the CWH LVLH frame and independent of the arrival time $t\subfail$), assuming additionally that the position of the target is known \emph{a priori} (fixed at the origin) and that the attitude $\Vecq(t)$ of the chaser is specified as a function of our state $\Vecx(t)$, we may evaluate CAM trajectory feasibility (control allocation feasibility, plume impingement, antenna lobe avoidance, \etc) in an offline fashion.  Better still, we need only evaluate the safety of arriving at $\Vecx$ once; this means the \emph{active} safety of a particular state $\Vecx$ can be cached --- a fact we will make extensive use of in the design of our planning algorithm.

\section{Planning Algorithm and Theoretical Characterization}
\label{sec: Approach}

With the proximity operations scenario established, we are now in position to describe our approach.  As previously described, the constraints that must be satisfied in \cref{eqn: CWH Optimal Steering} are diverse, complex, and difficult to satisfy numerically.  In this section, we propose a guidance algorithm to solve this problem, followed by a detailed proof of its optimality with regard to the 2-norm \fuelCost metric $\CostFcn$ under impulsive CWH dynamics.  As will be seen, the proof relies on an understanding of: (i) the steering connections between sampled points assuming no obstacles or other trajectory constraints, and (ii) the nearest-neighbors or \emph{reachable} states from a given state.  We hence start by characterizing these two concepts, in \cref{subsec: steering,subsec: Reachability Sets} respectively.  We then proceed to the algorithm presentation (\cref{subsec: Algorithm}) and its theoretical characterization (\cref{subsec: Theoretical Characterization}).

\subsection{The Steering Problem}
\label{subsec: steering}%
\begingroup%
\setlength{\arraycolsep}{3pt}%
In this section, we consider the \emph{unconstrained} minimal \fuel 2-point boundary value problem (2PBVP) or ``steering problem'' between an initial state $\Vecx\subnaught$ and a final state $\Vecx\subf$ within the CWH dynamics model.  Solutions to these steering problems provide the local building blocks from which we construct solutions to the more complicated problem formulation in \cref{eqn: CWH Optimal Steering}.  Steering solutions serve two main purposes: (i) they represent a class of short-horizon controlled trajectories that are filtered online for constraint satisfaction and efficiently strung together into a state space-spanning graph (\ie a tree or roadmap), and (ii) the costs of steering trajectories are used to inform the graph construction process by identifying the unconstrained ``nearest neighbors'' as edge candidates.  Because these problems can be expressed independently of the arrival time $t\subnaught$ (as will be shown), our solution algorithm does not need to solve these problems \emph{online}; the solutions between every pair of samples can be precomputed and stored prior to receiving a motion query.  Hence the 2PBVP presented here need not be solved quickly.  However, we mention techniques here for speed-ups due to the reliance of our smoothing algorithm (\cref{alg: CWH Traj Smoothing}) on a fast solution method.
\nomenclature[Ax-]{$\Vecx\subnaught$}{Initial state for steering problem}%
\nomenclature[Ax.]{$\Vecx\subf$}{Final state for steering problem}%
\nomenclature[At0]{$t\subnaught$}{Initial time for steering problem}%
\nomenclature[Atf]{$\tf$}{Final time for steering problem}%

Substituting our boundary conditions into \cref{eqn: CWH Impulsive Solution}, evaluating at $t = \tf$, and rearranging, we seek a stacked burn vector $\MatDeltaV$ such that:
\begin{equation}
	\label{eqn: CWH 2PBVP}
	\MatPhi_v\left(\tf,\left\{\BurnTime_i\right\}_i\right) \MatDeltaV = \Vecx\subf - \MatPhi\left(\tf,\ti\right) \Vecx\subnaught,
\end{equation}
for some number $\NBurns$ of burn times $\BurnTime_i \in \closedinterval{\ti}{\tf}$.  Formulating this as an optimal control problem that minimizes our 2-norm cost functional (as a proxy for the actual \fuel consumption, as described in \cref{subsec: Cost Functional}), we wish to solve:
\begin{align}
	\removeequationpadding
	\label{eqn: CWH Optimal 2PBVP}
	\begin{optproblem}
		\given{\text{Initial state } \Vecx\subnaught, \text{final state } \Vecx\subf, \text{burn magnitude bound } \VecDeltaVMag\submax,}
		& & \mathrlap{\text{and maneuver duration bound }  \ManeuverDuration\submax} \\
		\minimize[\VecDeltaV_i,\BurnTime_i,\tf,\NBurns]{\finiteseries[i=1]{\NBurns} \norm[\pvalue]{\VecDeltaV_i}}
		\subjectto
		\constraint[\text{Dynamics/Boundary Conditions}]{ \MatPhi_v\left(\tf,\left\{\BurnTime_i\right\}_i\right) \MatDeltaV = \Vecx\subf - \MatPhi\left(\tf,\ti\right) \Vecx\subnaught.}
		\constraint[\text{Maneuver Duration Bounds}]{ 0 \leq \tf - \ti \leq \ManeuverDuration\submax }
		\constraint[\text{Burn Time Bounds}]{ \ti \leq \BurnTime_i \leq \tf \quad\quad\quad\quad\!\! \text{for burns } i}
		\constraint[\text{Burn Magnitude Bounds}]{ 
			\norm[\pvalue]{\VecDeltaV_i} \leq \VecDeltaVMag\submax \quad \text{for burns } i}
	\end{optproblem}
\end{align}
Notice that this is a relaxed version of the original problem presented as \cref{eqn: CWH Optimal Steering}, with only its boundary conditions, dynamic constraints, and control norm bound.  As it stands, due to the nonlinearity of the dynamics with respect to $\BurnTime_i$, $\tf$ and $\NBurns$, \cref{eqn: CWH Optimal 2PBVP} is non-convex and inherently difficult to solve.  However, we can make the problem tractable if we make a few assumptions.  
Given that we plan to string many steering trajectories together to form our overall solution, let us ensure they represent the most primitive building blocks possible such that their concatenation will adequately represent any arbitrary trajectory.  Set $\NBurns = 2$ (the smallest number of burns required to transfer between any pair of arbitrary states, as it makes $\MatPhi_v\left(\tf,\left\{\BurnTime_i\right\}_i\right)$ square) and select burn times $\BurnTime_1 = \ti$ and $\BurnTime_2 = \tf$ (which automatically satisfy our burn time bounds).  This leaves $\VecDeltaV_1 \in {\reals}^{\Dimension/2}$ (an intercept burn applied just after $\Vecx\subnaught$ at time $\ti$), $\VecDeltaV_2 \in {\reals}^{\Dimension/2}$ (a rendezvous burn applied just before $\Vecx\subf$ at time $\tf$), and $\tf$ as our only remaining decision variables.  If we conduct a search for $\tf\supopt \in \closedinterval{\ti}{\ti + \ManeuverDuration\submax}$, the relaxed-2PBVP can now be solved iteratively as a relatively simple bounded one-dimensional nonlinear minimization problem, where at each iteration one computes:
\begin{equation*}	
	\MatDeltaV\supopt\left(\tf\right) = \MatPhiInv_v\left(\tf, \{\ti, \tf\}\right) \left(\Vecx\subf - \MatPhi\left(\tf,\ti\right) \Vecx\subnaught\right),
\end{equation*}
where the argument $\tf$ is shown for $\MatDeltaV\supopt$ to highlight its dependence.  By uniqueness of the matrix inverse (provided $\MatPhiInv_v$ is non-singular, discussed below), we need only check that the resulting impulses $\VecDeltaV_i\supopt\left(\tf\right)$ satisfy the magnitude bound to declare the solution to an iteration feasible.  Notice that because $\MatPhi$ and $\MatPhiInv_v$ depend only on the difference between $\tf$ and $\ti$, we can equivalently search over various $\tf - \ti \in \closedinterval{0}{\ManeuverDuration\submax}$ instead, using the expression:
\begin{equation}
	\label{eqn: CWH Optimal 2PBVP with N=2}
	\MatDeltaV\supopt\left(\tf - \ti\right) = \MatPhiInv_v\left(\tf - \ti, \{0, \tf - \ti\}\right) \left(\Vecx\subf - \MatPhi\left(\tf - \ti, 0\right) \Vecx\subnaught\right),
\end{equation}
which reveals that our impulsive steering problem depends only on the maneuver duration $\ManeuverDuration = \tf - \ti$ (provided $\Vecx\subf$ and $\Vecx\subnaught$ are given).  This will be indispensable for precomputation, as it allows steering trajectories to be generated and stored \emph{offline}.
%
%
Regarding singularities, our steering solution $\MatDeltaV\supopt = \argmin_{\tf}{\MatDeltaV\supopt\left(\tf - \ti\right)}$ requires that $\MatPhi_v$ be invertible, \ie that $\left(\tf - \BurnTime_1\right) - \left(\tf - \BurnTime_2\right) = \tf - \ti$
avoids certain values (such as zero and certain values longer than one period \cite{KA-SRV-PG-JH-LB:09}, including orbital period multiples) which we achieve by restricting $\ManeuverDuration\submax$ to be shorter than one orbital period.  To handle $\tf - \ti = 0$ exactly, note a solution to the 2PBVP exists if and only if $\Vecx\subnaught$ and $\Vecx\subf$ differ in velocity only; in such cases, we take this velocity difference as $\VecDeltaV_2\supopt$ (with $\VecDeltaV_1\supopt = \VecZeros$) to be the solution.
%

\subsection{Reachability Sets}
\label{subsec: Reachability Sets}
In keeping with \cref{eqn: CWH Optimal 2PBVP with N=2}, since $\MatDeltaV\supopt = \argmin_{\ManeuverDuration}{\MatDeltaV\supopt\left(\ManeuverDuration\right)}$ only depends on $\Vecx\subf$ and $\Vecx\subnaught$, we henceforth refer to the cost of a steering trajectory by the notation 
$\CostFcn(\Vecx\subnaught, \Vecx\subf)$.
We then define the forward \emph{reachability set} from a given state $\Vecx\subnaught$ as follows:
\nomenclature[AJt]{$\CostThreshold$}{Fuel cost threshold used in reachability sets}%
%
\begin{definition}[Forward Reachable Set]
	The forward reachable set $\ReachableSet$ from state $\Vecx\subnaught$ is the set of all states $\Vecx\subf$ that can be reached from $\Vecx\subnaught$ with a cost $\CostFcn\left(\Vecx\subnaught, \Vecx\subf\right)$ below a given cost threshold $\CostThreshold$, \ie%
	\begin{equation*}
		\ReachableSet\left(\Vecx\subnaught,  \CostThreshold \right) \triangleq \left\{ \Vecx\subf \in \Xspace \suchthat \CostFcn\left(\Vecx\subnaught,\Vecx\subf\right) < \CostThreshold \right\}.
	\end{equation*}%
	\label{def:reach}%
\end{definition}%
\nomenclature[AR#]{$\ReachableSet$}{Forward-reachable set}%
Recall from \cref{eqn: CWH Optimal 2PBVP with N=2} in \cref{subsec: steering} that the steering cost may be written as:
\begin{equation}
	\label{eqn: CWH Cost Function}
	\CostFcn\left(\Vecx\subnaught, \Vecx\subf\right) = \norm{\VecDeltaV_1} + \norm{\VecDeltaV_2} = \norm{\MatDeltaVSlicing_1 \MatDeltaV} + \norm{\MatDeltaVSlicing_2 \MatDeltaV}
\end{equation}
where $\MatDeltaVSlicing_1 = [\,\Identity_{d/2 \times d/2}\ \MatZeros_{d/2 \times d/2}\,]$, $\MatDeltaVSlicing_2 = [\,\MatZeros_{d/2 \times d/2}\ \Identity_{d/2 \times d/2}\,]$, and $\MatDeltaV$ is given by:
\nomenclature[AS ]{$\MatDeltaVSlicing$}{Slicing matrix used to isolate individual impulse vectors from $\MatDeltaV$}%
\begin{align*}
	\MatDeltaV\left(\Vecx\subnaught, \Vecx\subf\right) = \begin{bmatrix} \VecDeltaV_1 \\ \VecDeltaV_2 \end{bmatrix} = \MatPhiInv_v\left(\tf,\{\ti,\tf\}\right) \left(\Vecx\subf - \MatPhi(\tf,\ti) \Vecx\subnaught\right).
\end{align*}
The cost function $\CostFcn\left(\Vecx\subnaught, \Vecx\subf\right)$ is difficult to gain insight on directly; however, as we shall see, we can work with its bounds much more easily. 

\begin{restatable}[Fuel Burn Cost Bounds]{lemma}{LemmaFuelCostBounds}
	\label{lem: CWH Cost Bounds}
	For the cost function in \cref{eqn: CWH Cost Function}, we have the following upper and lower bounds:%
	\begin{equation*}%
		\norm{\MatDeltaV} \leq \CostFcn\left(\Vecx\subnaught, \Vecx\subf\right) \leq \sqrt{2} \norm{\MatDeltaV}.
	\end{equation*}%
\end{restatable}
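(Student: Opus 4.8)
The plan is to reduce the statement to the elementary equivalence between the $\ell_1$ and $\ell_2$ norms on $\reals^2$. First I would observe that, since $\MatDeltaVSlicing_1$ and $\MatDeltaVSlicing_2$ extract the two disjoint coordinate blocks of the stacked vector $\MatDeltaV$ formed by concatenating $\VecDeltaV_1$ and $\VecDeltaV_2$, we have $\norm{\MatDeltaV}^2 = \norm{\MatDeltaVSlicing_1 \MatDeltaV}^2 + \norm{\MatDeltaVSlicing_2 \MatDeltaV}^2 = \norm{\VecDeltaV_1}^2 + \norm{\VecDeltaV_2}^2$. Writing $a \triangleq \norm{\VecDeltaV_1} \geq 0$ and $b \triangleq \norm{\VecDeltaV_2} \geq 0$, the claim reduces to $\sqrt{a^2 + b^2} \leq a + b \leq \sqrt{2}\,\sqrt{a^2 + b^2}$, since $\CostFcn(\Vecx\subnaught,\Vecx\subf) = a + b$ by \cref{eqn: CWH Cost Function} and $\norm{\MatDeltaV} = \sqrt{a^2+b^2}$.

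For the lower bound I would expand $(a+b)^2 = a^2 + b^2 + 2ab \geq a^2 + b^2$, using $ab \geq 0$, and take square roots (legitimate since both sides are nonnegative). For the upper bound I would note that $2(a^2+b^2) - (a+b)^2 = (a-b)^2 \geq 0$, hence $(a+b)^2 \leq 2(a^2+b^2)$, and again take square roots. Chaining the two inequalities yields the stated result.

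There is no genuine obstacle here; the only point worth stating carefully is the first one, namely that the 2-norm of the stacked impulse vector decomposes as the sum of the squared 2-norms of the individual burns, after which the bound is just the standard norm-equivalence estimate in $\reals^2$ applied to the pair $(a,b)$. I would also remark that both bounds are tight: the lower bound is attained when one burn vanishes (a degenerate single-impulse transfer), and the upper bound when $\norm{\VecDeltaV_1} = \norm{\VecDeltaV_2}$; this is worth noting because the subsequent reachability-set construction replaces the awkward cost $\CostFcn$ by these ellipsoidal surrogates, so the looseness of the bounds controls the conservatism of that approximation.
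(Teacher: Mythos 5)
Your proof is correct and takes essentially the same route as the paper's: both reduce the claim to the equivalence of the $\ell_1$ and $\ell_2$ norms on the pair $\left(\norm{\VecDeltaV_1}, \norm{\VecDeltaV_2}\right)$, with the paper invoking Cauchy--Schwarz for the upper bound where you expand $(a-b)^2 \geq 0$ --- the same inequality in this two-dimensional special case. No changes needed.
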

\begin{proof}
	For the proof, see \cref{appendix: Useful Results}.
\end{proof}
Now, observe that $\norm{\MatDeltaV} = \sqrt{\transpose{(\Vecx\subf - \MatPhi(\tf,\ti) \Vecx\subnaught)} \GramianInv (\Vecx\subf - \MatPhi(\tf,\ti) \Vecx\subnaught)}$ where $\GramianInv = \MatPhiInvTranspose_v \MatPhiInv_v$, \ie the expression for an ellipsoid $\EllipsoidalSet\left(\Vecx\subf\right)$ resolved in the LVLH frame with matrix $\GramianInv$ and center $\MatPhi(\tf, \ti) \Vecx\subnaught$ (the state $\ManeuverDuration = \tf - \ti$ time units ahead of $\Vecx\subnaught$ along its coasting arc).  Combined with \cref{lem: CWH Cost Bounds}, we see that for a fixed maneuver time $\ManeuverDuration$ and \fuel cost threshold $\CostThreshold$, the spacecraft at $\Vecx\subnaught$ can reach all states inside an area under-approximated by an ellipsoid with matrix $\sidefrac{\GramianInv}{\CostThreshold^2}$ and over-approximated by an ellipsoid of matrix $\sidefrac{\sqrt{2}\GramianInv}{\CostThreshold^2}$.  The forward reachable set for impulsive CWH dynamics under the 2-norm metric is therefore bounded by the union over all maneuver times of these under- and over-approximating ellipsoidal sets, respectively.  See \cref{fig: CWH Reachability} for visualization.
\nomenclature[AG ]{$\GramianInv$}{Gramian inverse with respect to $\MatPhi_v$}%
\nomenclature[AE ]{$\EllipsoidalSet$}{Ellipsoidal set}%

\begin{figure}
	\hfill
	\subcaptionbox{%
		The set of reachable positions $\delta\Vecr\subf$ within duration $\ManeuverDuration\submax$ and \fuel cost $\DeltaV\submax$.
	}[0.45\columnwidth]{%
		\includegraphics[width=0.93\columnwidth]{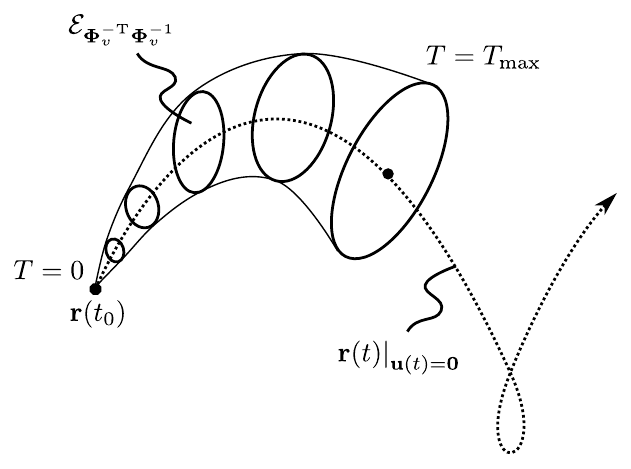}
	}
	\hfill
	\subcaptionbox{%
		The set of reachable velocities $\delta\Vecv\subf$ within duration $\ManeuverDuration\submax$ and \fuel cost $\DeltaV\submax$.
	}[0.45\columnwidth]{%
		\includegraphics[width=\columnwidth]{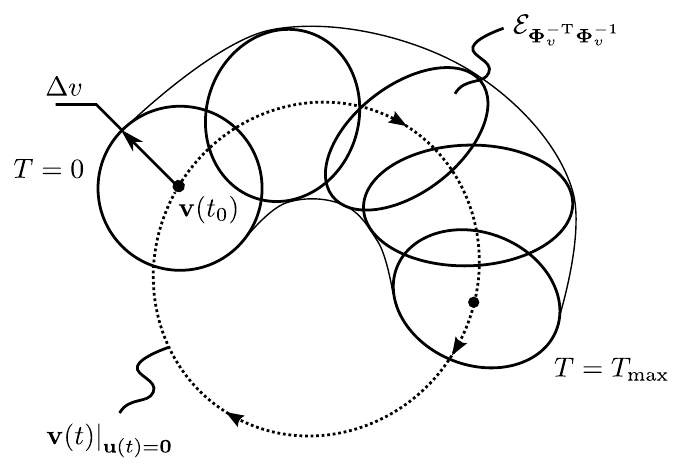}
	}
	\hfill
	\caption{Bounds on reachability sets from initial state $\Vecx(\ti)$ under \fuel cost threshold $\CostThreshold$.  \iftoggle{AIAAjournal}{}{For cost measured as a sum of $\VecDeltaV$ 2-norms, these are bounded by unions of ellipsoidal balls in position-velocity (phase) space over the set of all permissible maneuver durations $\ManeuverDuration \in \interval{0}{\ManeuverDuration\submax}$.  For the special case of $\ManeuverDuration = 0$, the reachable set of states is a 3-dimensional velocity ball embedded in ${\reals}^6$ (with volume 0) corresponding to states with position $\delta\Vecr\subf = \delta\Vecr(\ti)$ and velocity $\delta\Vecv\subf \in \ball{\delta\Vecv(\ti)}{\DeltaV\submax}$.}}
	\label{fig: CWH Reachability}
\end{figure}

\subsection{Algorithm}
\label{subsec: Algorithm}
As mentioned in \cref{sec: Intro}, we apply a modified version of the Fast Marching Tree (\FMTstar) sampling-based planning algorithm to solve the problem in \cref{eqn: CWH Optimal Steering}.  Sampling-based planning \cite{SML-JJK:01, SML:06} 
essentially breaks down a continuous trajectory optimization problem into a series of relaxed, local steering problems (as in \cref{subsec: steering}) between intermediate waypoints (called \emph{samples}) before piecing them together to form a global solution to the original problem.  This framework can yield significant computational benefits if: (i) the relaxed subproblems are simple enough, and (ii) the \emph{a posteriori} evaluation of trajectory constraints is fast compared to a single solution of the full-scale problem.  Furthermore, provided samples are sufficiently dense in the free state-space $\Xfree$ and graph exploration is spatially symmetric, sampling-based planners can closely approximate global optima without fear of convergence to local minima.  Though many candidate planners could be used here, we rely on the asymptotically-optimal (AO) \FMTstar algorithm for its efficiency (see \cite{LJ-ES-AC-ea:15} for details on the advantages of \FMTstar over its state-of-the-art counterparts) and its compatibility with \emph{deterministic} (as opposed to random) sampling sequences \cite{LJ-BI-MP:15}, which leads to a number of algorithmic simplifications (including use of offline knowledge).

The \FMTstar algorithm, tailored to our application, is presented as \cref{alg: FMTstar} (we shall henceforth refer to our modified version of \FMTstar as simply \FMTstar, for brevity).  \FMTstar efficiently expands a tree of feasible trajectories from an initial state $\Vecx\subinit$ to a goal state $\Vecx\subgoal$ around nearby obstacles.  It begins by taking a set of samples distributed in the free state space $\Xfree$ using the \Call{SampleFree}{} routine, which restricts state sampling to actively-safe, collision-free samples (which lie outside of $\Xobs$ and have access to a safe Collision-Avoidance Maneuver (CAM) as described in \cref{sec: CWH Safety}).  In our implementation, we assume samples are taken using a particular deterministic, low-dispersion sequence called the Halton sequence \cite{JHH:60}, though any deterministic, \emph{low-dispersion} sampling sequence may be used \cite{LJ-BI-MP:15}.  Selecting $\Vecx\subinit$ first for further expansion as the minimum cost-to-come node $\Vecz$, the algorithm then proceeds to look at reachable samples or ``neighbors'' (samples that can be reached with less than a given \fuel cost threshold $\CostThreshold$, as described in the previous subsection) and attempts connections (using \Call{Steer}{}) to those with cheapest cost-to-come back to the tree.  The cost threshold $\CostThreshold$ is a free parameter whose value can have a significant effect on performance; see \cref{th: fmt asymptotic optimality} for a theoretical characterization and \cref{sec: Experiments} for a representative numerical trade study. Those trajectories satisfying the constraints of \cref{eqn: CWH Optimal Steering}, as determined by \Call{CollisionFree}{}, are saved.  As feasible connections are made, the algorithm relies on adding and removing nodes (saved waypoint states) from three sets: a set of unexplored samples $\Unexplored$ not yet connected to the tree, a frontier $\Frontier$ of nodes likely to make efficient connections to unexplored neighbors, and an interior $\TreeInterior$ of nodes that are no longer useful for exploring the state space $\Xspace$.  Details on \FMTstar can be found in its original work \cite{LJ-ES-AC-ea:15}.
\nomenclature[Az ]{$\Vecz$}{Minimum cost-to-come node in an \FMTstar frontier set}%
\nomenclature[AV(]{$\Unexplored$}{Unexplored set of samples}%
\nomenclature[AV']{$\Frontier$}{Frontier set of \FMTstar nodes}%
\nomenclature[AV#]{$\TreeInterior$}{Interior set of expanded \FMTstar nodes}%
%

To make \FMTstar amenable to a real-time implementation, we consider an online-offline approach that relegates as much computation as possible to a pre-processing phase.  To be specific, the sample set $\SampleSet$ (\cref{line: Sample Xfree}), nearest-neighbor sets (used in \cref{line: Expand Tree Loop Begin,line: Nearest Neighbor Selection}), and steering trajectory solutions (\cref{line: Locally-Optimal Steering}) may be entirely pre-processed, assuming the planning problem satisfies the following conditions:
\begin{enumerate}
	\item the state space $\Xspace$ is known \emph{a priori}, as is typical for most LEO missions (a luxury we do not generally have for the obstacle space $\Xobs$, which must be identified online using onboard sensors once the spacecraft arrives at $\Xspace$), \label{item: online-offline known state space}
	\item steering solutions are independent of sample arrival times $t\subnaught$, as we show in \cref{subsec: steering}.\label{item: online-offline arrival time independence}
\end{enumerate}
Here \cref{item: online-offline known state space} allows samples to be precomputed, while \cref{item: online-offline arrival time independence} enables steering trajectories to be stored onboard or uplinked from the ground up to the spacecraft, since their values remain relevant regardless of the times at which the spacecraft actually follows them during the mission.  This leaves only collision-checking, graph construction, and termination checks as parts of the online phase, greatly improving the online run time and leaving the more intensive work to offline resources where running time is less important.  
This breakdown into online and offline components (inspired by \cite{RA-MP:16}) is a valuable technique for imbuing kinodynamic motion planning problems with real-time online solvability using fast batch-planners like \FMTstar.


\begin{figure}
	\begin{algorithm}[H]
		\caption{The Fast Marching Tree Algorithm (\FMT). Computes a minimal-cost trajectory from an initial state $\Vecx(\protect\ti) = \Vecx\subinit$ to a target state $\Vecx\subgoal$ through a fixed number $\nSamples$ of samples ${\SampleSet}$.}
		\label{alg: FMTstar}
		\begin{algorithmic}[1]
			\State Add $\Vecx\subinit$ to the root of the tree $\Tree$, as a member of the frontier set $\Frontier$ \label{line: Tree Initialization}
			\State Generate samples ${\SampleSet} \leftarrow \Call{SampleFree}{{\Xspace}, \protect\nSamples, \protect\ti}$
			and add them to the unexplored set $\Unexplored$ \label{line: Sample Xfree}
			\State Set the minimum cost-to-come node in the frontier set as $\Vecz \leftarrow \Vecx\subinit$
			\While{\textbf{true}}
				\For{each neighbor $\Vecx$ of $\Vecz$ in $\Unexplored$} \label{line: Expand Tree Loop Begin}
					\State Find the neighbor $\Vecx\submin$ in $\Frontier$ of cheapest cost-to-go to $\Vecx$ \label{line: Nearest Neighbor Selection}
					\State Compute the trajectory between them as $\left[\Vecx\left(t\right), \Vecu\left(t\right), t\right] \leftarrow \Call{Steer}{\Vecx\submin, \Vecx}$ (\emph{see \cref{subsec: steering}}) \label{line: Locally-Optimal Steering}
					\If{\Call{CollisionFree}{$\Vecx\left(t\right), \Vecu\left(t\right), t$}}
						\State Add the trajectory from $\Vecx\submin$ to $\Vecx$ to tree $\Tree$ \label{line: Expand Tree Loop End}
					\EndIf
				\EndFor
				\State Remove all $\Vecx$ from the unexplored set $\Unexplored$
				\State Add any new connections $\Vecx$ to the frontier $\Frontier$
				\State Remove $\Vecz$ from the frontier $\Frontier$ and add it to $\TreeInterior$
				\If{$\Frontier$ is empty}
					\State \Return Failure
				\EndIf
				\State Reassign $\Vecz$ as the node in $\Frontier$ with smallest cost-to-come from the root ($\Vecx\subinit$)
				\If{$\Vecz$ is in the goal region $\Xgoal$}
					\State \Return Success, and the unique trajectory from the root ($\Vecx\subinit$) to $\Vecz$
				\EndIf
			\EndWhile
		\end{algorithmic}
	\end{algorithm}
\end{figure}
\nomenclature[An)]{$\nSamples$}{Number of samples}%
\nomenclature[AT ]{$\Tree$}{Tree data structure used by the \FMTstar algorithm}%
\nomenclature[AS ]{$\SampleSet$}{Sample set}%

\subsection{Theoretical Characterization}
\label{subsec: Theoretical Characterization}
It remains to show that \FMTstar provides similar asymptotic optimality and convergence rate guarantees under the 2-norm \fuelCost metric and impulsive CWH dynamics (which enter into \cref{alg: FMTstar} under \crefrange{line: Nearest Neighbor Selection}{line: Locally-Optimal Steering}), as it does for kinematic (straight-line path planning) problems \cite{LJ-ES-AC-ea:15}.
\newcommand{\PerturbedTraj}{\tilde{\Vecx}(t)} 
\newcommand{\PerturbationXi}{\delta \Vecx\subnaught}
\newcommand{\PerturbationXf}{\delta \Vecx\subf}
\newcommand{\PerturbedXi}{\tilde{\Vecx}\subnaught} 
\newcommand{\PerturbedXf}{\tilde{\Vecx}\subf} 
\newcommand{\CostToCome}{c}
\renewcommand{\finitesequence}[3][i=0]{\left\{#3\right\}_{#1}^{#2}}
For sampling-based algorithms, \emph{asymptotic optimality} refers to the property that
as the number of samples $\nSamples \to \infty$, 
the cost of the trajectory (a.k.a.\ ``path'') returned by the planner 
approaches that of the optimal cost.  Here a proof is presented showing asymptotic optimality for the planning algorithm and problem setup used in this paper.  We note that while CWH dynamics are the primary focus of this work, the following proof methodology extends to any general linear system controlled by a finite sequence of impulsive actuations, whose fixed-duration 2-impulse steering problem is uniquely determined (\eg a wide array of second-order control systems).

The proof proceeds analogously to \cite{LJ-ES-AC-ea:15} by showing that it is always possible to construct an approximate path from points in $\SampleSet$ that closely follows the optimal path. Similarly to \cite{LJ-ES-AC-ea:15}, we will make use here of a concept called the $\ell_2$-\emph{dispersion} of a set of points, which upper bounds how far away a point in $\SubSpace$ can be from its nearest point in $\SampleSet$ as measured by the $\ell_2$-norm.
\begin{definition}[$\ell_2$-dispersion]
	For a finite, non-empty set $\SampleSet$ of points in a $\Dimension$-dimensional compact Euclidean subspace $\SubSpace$ with positive Lebesgue measure, its $\ell_2$-dispersion $\DispersionFcn\left(\SampleSet\right)$ is defined as:
	\begin{align*}
		\DispersionFcn\left(\SampleSet\right)
			&\triangleq \sup_{\Vecx \in \SubSpace} \min_{\Vecs \in \SampleSet} \norm{\Vecs - \Vecx}
			\\ &= \sup \left\{R > 0 \suchthat \exists \Vecx \in \SubSpace \text{ with } \ball{\Vecx}{R} \intersect \SampleSet = \nullset \right\},
	\end{align*}
	where $\ball{\Vecx}{R}$ is a Euclidean ball with radius $R$ centered at state $\Vecx$.
	\label{def: l2 dispersion}
\end{definition}

We also require a means for quantifying the deviation that small endpoint pertubations can bring about in the 2-impulse steering control. This result is necessary to ensure that the particular placement of the points of $\SampleSet$ is immaterial; only its low-dispersion property matters.

\begin{restatable}[Steering with Perturbed Endpoints]{lemma}{LemmaPerturbedSteering}
	For a given steering trajectory $\Vecx(t)$ with initial time $\ti$ and final time $\tf$, let $\Vecx\subnaught \coloneqq \Vecx(\ti)$, $\Vecx\subf \coloneqq \Vecx(\tf)$, $\ManeuverDuration \coloneqq \tf - \ti$, and $\CostFcn \coloneqq \CostFcn\left(\Vecx\subnaught, \Vecx\subf\right)$.  Consider now the steering trajectory $\PerturbedTraj$ between perturbed start and end points $\PerturbedXi = \Vecx\subnaught + \PerturbationXi$ and $\PerturbedXf = \Vecx\subf + \PerturbationXf$.
	\\ \noindent\underline{Case 1: $\ManeuverDuration = 0$.}  There exists a perturbation center $\delta \Vecx_c$ (consisting of only a position shift) with $\norm{\delta \Vecx_c} = \BigO{\CostFcn^2}$ such that if $\norm{\PerturbationXi} \leq \eta \CostFcn^3$ and $\norm{\PerturbationXf - \delta \Vecx_c} \leq \eta \CostFcn^3$ then $\CostFcn\left(\PerturbedXi, \PerturbedXf\right) \leq \CostFcn\left(1 + 4 \eta + \BigO{\CostFcn}\right)$ and the spatial deviation of the perturbed trajectory from $\Vecx(t)$ is $\BigO{\CostFcn}$.
	\\ \noindent\underline{Case 2: $\ManeuverDuration > 0$.} If $\norm[]{\PerturbationXi} \leq \eta \CostFcn^3$ and $\norm[]{\PerturbationXf} \leq \eta \CostFcn^3$ then $\CostFcn\left(\PerturbedXi, \PerturbedXf\right) \leq \CostFcn\left(1 + \BigO{\eta \CostFcn^2 \inverse{\ManeuverDuration} }\right)$ and the spatial deviation of the perturbed trajectory from $\Vecx(t)$ is $\BigO{\CostFcn}$.
	\label{lem: perturbed cost bounds}
\end{restatable}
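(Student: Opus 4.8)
The plan is to prove both cases constructively: exhibit a specific feasible two-impulse trajectory between the perturbed endpoints and bound its cost, which in turn upper-bounds $\CostFcn(\PerturbedXi,\PerturbedXf)$ because that quantity is by definition the minimum of $\norm{\VecDeltaV_1(\cdot)}+\norm{\VecDeltaV_2(\cdot)}$ over admissible maneuver durations (\cref{eqn: CWH Optimal 2PBVP with N=2}). The two workhorses are the closed form $\MatDeltaV(\ManeuverDuration)=\MatPhiInv_v(\ManeuverDuration,\{0,\ManeuverDuration\})\bigl(\Vecx\subf-\MatPhi(\ManeuverDuration,0)\Vecx\subnaught\bigr)$ and \cref{lem: CWH Cost Bounds}, which converts between $\norm{\MatDeltaV}$ and the true cost up to a factor $\sqrt 2$ and, applied to a difference of stacked burn vectors, gives $\norm{\widetilde{\VecDeltaV}_1-\VecDeltaV_1}+\norm{\widetilde{\VecDeltaV}_2-\VecDeltaV_2}\le\sqrt 2\,\norm{\widetilde{\MatDeltaV}-\MatDeltaV}$. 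A sub-estimate I would establish first and reuse is that $\norm{\MatPhiInv_v(\ManeuverDuration,\{0,\ManeuverDuration\})}=\BigO{\inverse{\ManeuverDuration}}$ uniformly for $0<\ManeuverDuration\le\ManeuverDuration\submax$: block-partitioning $\MatPhi$, the inverse is controlled by the inverse of the velocity-to-position block, which equals $\ManeuverDuration\,\Identity+\BigO{\ManeuverDuration^2}$ by Taylor-expanding $e^{\MatA\ManeuverDuration}$ and is nonsingular on the whole interval precisely because $\ManeuverDuration\submax$ is below one orbital period.

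Case 2 ($\ManeuverDuration>0$) is the easy one: reuse the duration $\ManeuverDuration$, so that $\widetilde{\MatDeltaV}-\MatDeltaV=\MatPhiInv_v(\PerturbationXf-\MatPhi\,\PerturbationXi)$, hence $\norm{\widetilde{\MatDeltaV}-\MatDeltaV}\le\norm{\MatPhiInv_v}\bigl(\norm{\PerturbationXf}+\norm{\MatPhi}\,\norm{\PerturbationXi}\bigr)=\BigO{\eta\CostFcn^3\inverse{\ManeuverDuration}}$. The triangle inequality together with the cost-difference bound above then yields $\CostFcn(\PerturbedXi,\PerturbedXf)\le\CostFcn+\BigO{\eta\CostFcn^3\inverse{\ManeuverDuration}}=\CostFcn\bigl(1+\BigO{\eta\CostFcn^2\inverse{\ManeuverDuration}}\bigr)$. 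The pointwise trajectory difference is $\MatPhi(t,\ti)\,\PerturbationXi+\MatPhi_v(t)\,(\widetilde{\MatDeltaV}-\MatDeltaV)$; since $\norm{\Vecx\subf-\MatPhi\,\Vecx\subnaught}=\norm{\MatPhi_v\MatDeltaV}=\BigO{\CostFcn}$ the steering arc has intrinsic extent $\BigO{\CostFcn}$, and in the regime where the cost estimate is informative the extra term is of lower order, so the spatial deviation is $\BigO{\CostFcn}$.

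Case 1 ($\ManeuverDuration=0$) is where the real difficulty lies, and I expect it to be the main obstacle. Here $\Vecx\subnaught$ and $\Vecx\subf$ share a position and the original trajectory is the single instantaneous burn $\VecDeltaV=\delta\Vecv\subf-\delta\Vecv\subnaught$ with $\norm{\VecDeltaV}=\CostFcn$; the perturbed endpoints generically no longer share a position, forcing a positive duration $\widetilde{\ManeuverDuration}$, yet $\norm{\MatPhiInv_v(\widetilde{\ManeuverDuration})}\sim\inverse{\widetilde{\ManeuverDuration}}$ blows up as $\widetilde{\ManeuverDuration}\to 0$, so a naive small duration would inflate the correcting burns. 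The resolution — exactly why a perturbation center appears in the statement — is to take $\widetilde{\ManeuverDuration}=\CostFcn^2$ and set $\delta \Vecx_c$ to be the pure position shift $\widetilde{\ManeuverDuration}\,\delta\Vecv\subnaught$, whose norm is $\BigO{\CostFcn^2}$ since $\Xspace$ is compact. With this choice the drift of $\PerturbedXi$ over $\widetilde{\ManeuverDuration}$ moves the position by $\widetilde{\ManeuverDuration}\,\delta\Vecv\subnaught+\BigO{\widetilde{\ManeuverDuration}^2}$, cancelling the position part of $\delta \Vecx_c$; solving the $\NBurns=2$ system for the candidate ``burn $\widetilde{\VecDeltaV}_1$ at $\PerturbedXi$, coast $\widetilde{\ManeuverDuration}$, burn $\widetilde{\VecDeltaV}_2$, arrive at $\PerturbedXf$,'' the position residual left for $\widetilde{\VecDeltaV}_1$ to correct is only $\BigO{\eta\CostFcn^3}$, so $\norm{\widetilde{\VecDeltaV}_1}\le\BigO{\inverse{\widetilde{\ManeuverDuration}}}\cdot\BigO{\eta\CostFcn^3}=2\eta\CostFcn+\BigO{\CostFcn^2}$, while the second burn carries essentially the whole velocity change, $\widetilde{\VecDeltaV}_2=(\delta\Vecv\subf-\delta\Vecv\subnaught)-\widetilde{\VecDeltaV}_1+\BigO{\eta\CostFcn^3}+\BigO{\widetilde{\ManeuverDuration}}$, so $\norm{\widetilde{\VecDeltaV}_2}\le\CostFcn+\norm{\widetilde{\VecDeltaV}_1}+\BigO{\CostFcn^2}$. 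Summing gives $\CostFcn(\PerturbedXi,\PerturbedXf)\le 2\,\norm{\widetilde{\VecDeltaV}_1}+\CostFcn+\BigO{\CostFcn^2}\le\CostFcn(1+4\eta+\BigO{\CostFcn})$, and because the perturbed arc lasts only $\widetilde{\ManeuverDuration}=\CostFcn^2$ its position excursion is $\BigO{\CostFcn^2}$, keeping it within $\BigO{\CostFcn}$ of the essentially pointlike original.

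The recurring theme, and the single place where genuine care is needed, is the tension between the $\inverse{\ManeuverDuration}$ ill-conditioning of the fixed-duration steering map and the need for tight multiplicative constants: Case 2 avoids it by not changing the duration, whereas Case 1 must trade it against the drift, which pins $\widetilde{\ManeuverDuration}$ at $\Theta(\CostFcn^2)$ and forces the center to be $\BigO{\CostFcn^2}$; the constant $4$ is then simply ``two burns'' times ``two $\eta\CostFcn^3$ endpoint slacks divided by $\widetilde{\ManeuverDuration}=\CostFcn^2$.'' Everything else is first-order Taylor expansion of $e^{\MatA\widetilde{\ManeuverDuration}}$ in the small duration plus routine triangle-inequality bookkeeping, and the burn-magnitude and duration bounds of \cref{eqn: CWH Optimal 2PBVP} hold automatically since the candidate burns stay $\BigO{\CostFcn}$-close to feasible originals and $\widetilde{\ManeuverDuration}$ is small.
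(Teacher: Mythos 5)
Your proposal is correct and follows essentially the same route as the paper's proof: in Case 1 you pick the same perturbed duration $\widetilde{\ManeuverDuration} = \CostFcn^2$ and the same perturbation center (the position drift of the initial velocity over $\widetilde{\ManeuverDuration}$, of norm $\BigO{\CostFcn^2}$), recover the same split $\norm{\widetilde{\VecDeltaV}_1} \leq 2\eta\CostFcn + \BigO{\CostFcn^2}$ and hence the $4\eta$ constant, and in Case 2 you reuse the duration $\ManeuverDuration$ together with the $\norm{\MatPhiInv_v} = \BigO{\inverse{\ManeuverDuration}}$ bound exactly as the paper does via its Gramian eigenvalue lemma. The only (cosmetic) differences are that you derive that operator-norm bound by direct Taylor expansion of the velocity-to-position block rather than citing the eigenvalue lemma, and you phrase Case 2 through the difference of stacked burn vectors, which if anything cleans up the paper's omission of the $\sqrt{2}$ factor from \cref{lem: CWH Cost Bounds}.
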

\begin{proof}
	For the proof, see \cref{appendix: Useful Results}.
\end{proof}

We are now in a position to prove that the cost of the trajectory returned by \FMTstar approaches that of an optimal trajectory as the number of samples $\nSamples \to \infty$.  The proof proceeds in two steps.  First, we establish that there is a sequence of waypoints in $\SampleSet$ that are placed closely along the optimal path and approximately evenly-spaced in cost.  Then we show that the existence of these waypoints guarantees that \FMTstar finds a path with a cost close to that of the optimal cost.  The theorem and proof combine elements from Theorem 1 in \cite{LJ-ES-AC-ea:15} and Theorem IV.6 from \cite{ES-LJ-MP:15b}.

\begin{definition}[Strong $\delta$-Clearance]
	A trajectory $\Vecx(t)$ is said to have \emph{strong} $\delta$-clearance if, for some $\delta > 0$ and all $t$, the Euclidean distance between $\Vecx(t)$ and any point in $\Xobs$ is greater than $\delta$.
	\label{def: Strong Delta Clearance}
\end{definition}

\begin{theorem}[Existence of Waypoints near an Optimal Path]
	Let $\Vecx\supopt(t)$ be a feasible trajectory for the motion planning problem \cref{eqn: CWH Optimal Steering} with strong $\delta$-clearance, let $\Vecu\supopt(t) = \finiteseries[i=1]{\NBurns} \VecDeltaV\supopt_i \cdot \delta\left(t - \BurnTime\supopt_i\right)$ be its associated control trajectory, and let $\CostFcn\supopt$ be its cost. Furthermore, let $\SampleSet \union \{\Vecx\subinit\}$ be a set of $\nSamples \in \naturals$ points from $\Xfree$ with dispersion $\DispersionFcn\left(\SampleSet\right) \leq \gamma \nSamples^{\sidefrac{-1}{\Dimension}}$. Let $\epsilon > 0$, and choose $\CostThreshold = 4 \left(\sidefrac{\gamma \nSamples^{\sidefrac{-1}{\Dimension}}}{\epsilon}\right)^{\sidefrac{1}{3}}$. Then, provided that $n$ is sufficiently large, there exists a sequence of points $\finitesequence[k=0]{K}{\Vecy_k}$, $\Vecy_k \in \SampleSet$ such that $\CostFcn(\Vecy_k, \Vecy_{k+1}) \leq \CostThreshold$, the cost of the path $\Vecy(t)$ made by joining all of the steering trajectories between $\Vecy_k$ and $\Vecy_{k+1}$ is $\sum_{k=0}^{K-1} \CostFcn(\Vecy_k, \Vecy_{k+1}) \leq (1 + \epsilon)\CostFcn\supopt$, and $\Vecy(t)$ is itself strong $(\delta/2)$-clear.
	\label{th: waypoint existence}
\end{theorem}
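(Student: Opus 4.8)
The plan is to adapt the two-stage argument used for kinodynamic \FMTstar: first exhibit a reference trajectory of cost exactly $\CostFcn\supopt$ together with ``ideal'' waypoints on it that are linked by cheap, non-degenerate two-impulse steering legs, and then slide each ideal waypoint onto its nearest member of $\SampleSet$, using \cref{lem: perturbed cost bounds} to control both the resulting cost inflation and the spatial drift. Throughout, write $D \coloneqq \DispersionFcn(\SampleSet) \le \gamma \nSamples^{-1/\Dimension}$; the hypothesis $\CostThreshold = 4(\gamma \nSamples^{-1/\Dimension}/\epsilon)^{1/3}$ is precisely the relation $D \le (\epsilon/64)\,\CostThreshold^3$, which is what lets the $\BigO{D}$ endpoint perturbations fit inside the $\eta\CostFcn^3$ tolerances of \cref{lem: perturbed cost bounds} once leg costs $\CostFcn$ are kept on the order of $\CostThreshold$.

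The first obstacle is that $\Vecu\supopt$ is impulsive, so cost accrues only at the $\NBurns$ burn times $\BurnTime\supopt_i$: a steering leg between two points of $\Vecx\supopt$ on a common coasting arc has zero cost --- too small to absorb a perturbation through \cref{lem: perturbed cost bounds} --- while a leg that straddles a burn has cost $\approx \norm{\VecDeltaV\supopt_i}$, which may exceed the allowed $\CostThreshold$. To remedy this I would replace $\Vecx\supopt$ by a nearby reference $\hat{\Vecx}(t)$ obtained by splitting each impulse $\VecDeltaV\supopt_i$ into $m_i$ equal sub-impulses $\VecDeltaV\supopt_i/m_i$, with $m_i$ chosen so that $\norm{\VecDeltaV\supopt_i/m_i}$ lies between $\CostThreshold/2$ and $\CostThreshold$ (possible once $\nSamples$ is large enough that $\CostThreshold \le \norm{\VecDeltaV\supopt_i}$), fired at a short cadence $\rho_i$ of order $\CostThreshold^2$ over a window of width $w_i = m_i \rho_i$ of order $\norm{\VecDeltaV\supopt_i}\,\CostThreshold$ around $\BurnTime\supopt_i$. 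By construction $\hat{\Vecx}$ has the same total cost $\CostFcn\supopt$, and since $w_i \to 0$ a routine estimate with the state transition matrix $\MatPhi$ shows $\hat{\Vecx}$ stays within $\BigO{\CostThreshold}$ of $\Vecx\supopt$; hence $\hat{\Vecx}$ inherits strong $(3\delta/4)$-clearance for $\nSamples$ large.

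Next I would place the ideal waypoints $\Vecz_0, \dots, \Vecz_K$ on $\hat{\Vecx}$: the post-sub-impulse states inside each burn window, and points spaced a fixed duration $\ManeuverDuration$ apart along each coasting arc, with $\ManeuverDuration$ chosen in $(0, \ManeuverDuration\submax)$ to avoid the finitely many durations at which $\MatPhi_v$ is singular (see \cref{subsec: steering}). By uniqueness of the two-impulse solution, each leg $\Vecz_k \to \Vecz_{k+1}$ is a single impulse of magnitude at most $\CostThreshold$ on the burn windows (duration $\rho_i > 0$) and is pure coasting (cost $0$) on the arcs, so $\sum_{k} \CostFcn(\Vecz_k, \Vecz_{k+1}) = \CostFcn\supopt$ exactly. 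I would then let $\Vecy_k \in \SampleSet$ be a nearest sample to $\Vecz_k$, so $\norm{\Vecy_k - \Vecz_k} \le D$ by \cref{def: l2 dispersion}. Applying \cref{lem: perturbed cost bounds} (Case 2, valid because every leg has positive duration) on the burn-window legs gives $\CostFcn(\Vecy_k, \Vecy_{k+1}) \le \CostFcn(\Vecz_k, \Vecz_{k+1})\,(1 + \BigO{\eta\CostThreshold^2/\rho_i})$, where the constant hidden in the choice $\rho_i \propto \CostThreshold^2$ is fixed to make this factor at most $1 + \epsilon$ and the leg cost at most $\CostThreshold$; on the coasting legs, where \cref{lem: perturbed cost bounds} does not apply, I would instead bound the perturbed cost directly by $\sqrt{2}\,\norm{\MatPhiInv_v}\cdot\BigO{D} = \BigO{D}$ via \cref{lem: CWH Cost Bounds}. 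Summing, the burn-window legs contribute at most $(1+\epsilon)\CostFcn\supopt$ and the $\BigO{1}$ coasting legs contribute only $\BigO{D}$, which vanishes as $\nSamples \to \infty$; after re-choosing the implied constants this yields $\sum_k \CostFcn(\Vecy_k,\Vecy_{k+1}) \le (1+\epsilon)\CostFcn\supopt$ with each term $\le \CostThreshold$. Finally, \cref{lem: perturbed cost bounds} also certifies that each perturbed leg deviates from $\hat{\Vecx}$ by $\BigO{\CostThreshold}$ (the coasting legs by $\BigO{D}$), so $\Vecy(t)$ lies within $\BigO{\CostThreshold} \to 0$ of $\Vecx\supopt$ and is therefore strong $(\delta/2)$-clear once $\nSamples$ is large.

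I expect the main obstacle to be the degeneracy of the steering problem as a leg's duration shrinks: $\MatPhiInv_v$ blows up like $1/\ManeuverDuration$, and the $\ManeuverDuration = 0$ branch of \cref{lem: perturbed cost bounds} forces the target endpoint to be displaced by a center $\delta\Vecx_c$ of size $\BigO{\CostFcn^2}$, which is far larger than the $\BigO{D}$ radius ($D \sim \CostThreshold^3$) within which the shared sample $\Vecy_{k+1}$ must simultaneously serve as the perturbed terminal point of leg $k$ and the perturbed initial point of leg $k+1$ --- so a naive ``all sub-impulses at the same instant'' construction stalls. Spreading the sub-impulses over a window of width $w_i \to 0$ while keeping each sub-leg's duration $\rho_i$ well above $\CostThreshold^2$ is exactly what keeps Case 2 of \cref{lem: perturbed cost bounds} in force with a controllable inflation, yet still lets $\hat{\Vecx} \to \Vecx\supopt$; getting the three scales $D \sim \CostThreshold^3$, $\rho_i \sim \CostThreshold^2$, and $w_i \sim \CostThreshold$ to interlock with the constant in \cref{lem: perturbed cost bounds} so that the final inflation is genuinely $\le 1+\epsilon$ (and not merely $1 + \BigO{\epsilon}$) is the delicate bookkeeping step.
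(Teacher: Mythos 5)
Your proposal is correct in its overall two-stage skeleton (discretize a clearance-$\delta$ optimal trajectory into legs of cost $\approx\CostThreshold/2$, snap each ideal waypoint to its nearest sample, and invoke \cref{lem: perturbed cost bounds} to control the cost inflation and spatial drift), but it resolves the central difficulty --- approximating each finite impulse $\VecDeltaV\supopt_i$ by a chain of cheap steering legs --- by a genuinely different mechanism than the paper. The paper keeps all sub-burns at the single instant $\BurnTime\supopt_i$, so the legs along a burn are \emph{zero-duration} maneuvers handled by Case~1 of \cref{lem: perturbed cost bounds}; the price is exactly the perturbation center $\delta\Vecx_c = \BigO{\CostFcn^2}$ you identify as an obstacle, which the paper absorbs by defining shifted targets $\VecxHat\supopt_k = \Vecx\supopt_k + \delta\Vecx_{c,k} + (\VecxHat\supopt_{k-1}-\Vecx\supopt_{k-1})$ that accumulate the centers, and then sampling near the \emph{shifted} points (the accumulated shift being $\BigO{\CostThreshold}$ since $K=\BigO{\CostThreshold^{-1}}$ and each center is $\BigO{\CostThreshold^2}$). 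You instead spread the sub-impulses over a vanishing window at cadence $\rho_i\sim\CostThreshold^2$, so every leg has positive duration and only Case~2 is ever needed; this buys you freedom from the center-accumulation bookkeeping (and from the whole Case~1 machinery), at the price of (i) having to verify that the time-spread reference $\hat{\Vecx}$ stays $\BigO{\CostThreshold}$-close to $\Vecx\supopt$ with unchanged cost, and (ii) a less explicit inflation constant, since Case~2 gives $1+\BigO{\eta\CostFcn^2\inverse{\ManeuverDuration}} = 1+\BigO{\eta}$ at $\ManeuverDuration=\rho_i\sim\CostThreshold^2$ rather than Case~1's explicit $1+4\eta+\BigO{\CostFcn}$, so hitting exactly $1+\epsilon$ requires tuning the proportionality constant in $\rho_i$ against the system constants $\GramianEigsLBConst$ and $e^{\norm{\MatA}\ManeuverDuration\submax}$, as you note. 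Both routes are sound. One small constant slip: if the sub-impulse magnitudes are allowed up to $\CostThreshold$, the perturbed leg cost $\CostFcn(\Vecy_k,\Vecy_{k+1})\leq\CostThreshold(1+\BigO{\epsilon})$ can exceed the required bound $\CostThreshold$; take the magnitudes in, say, $\closedinterval{\CostThreshold/4}{\CostThreshold/2}$ (as the paper effectively does with its $\CostThreshold/2$ spacing) so that the $(1+\epsilon)$ inflation still fits under $\CostThreshold$. Likewise, guard against a short leftover coasting segment adjacent to a burn window, whose duration would otherwise reintroduce the $\inverse{\ManeuverDuration}$ blow-up; the paper's observation that the relevant gaps are bounded below by $\min_j(t_{j+1}-t_j)>0$ for the fixed trajectory serves equally well here.
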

\begin{proof}
	We first note that if $\CostFcn\supopt = 0$ then we can pick $\Vecy\subnaught = \Vecx\supopt(t\subnaught)$ and $\Vecy_1 = \Vecx\supopt(t\subf)$ as the only points in $\{\Vecy_k\}$ and the result is trivial. Thus assume that $\CostFcn\supopt > 0$.  Construct a sequence of times $\finitesequence[k=0]{K}{t_k}$ and corresponding points $\Vecx\supopt_k = \Vecx\supopt(t_k)$ spaced along $\Vecx\supopt(t)$ in cost intervals of $\sidefrac{\CostThreshold}{2}$. We admit a slight abuse of notation here in that $\Vecx\supopt(\BurnTime\supopt_i)$ may represent a state with any velocity along the length of the impulse $\VecDeltaV\supopt_i$; to be precise, pick $\Vecx\supopt\subnaught = \Vecx\subinit$, $t\subnaught = 0$, and for $k = 1, 2, \ldots$ define $j_k = \min\left\{j \suchthat \sum_{i=1}^j \norm{\VecDeltaV\supopt_i} > k \frac{\CostThreshold}{2}\right\}$	and select $t_k$ and $\Vecx\supopt_k$ as:
	\begin{align*}
		t_k &= \BurnTime\supopt_{j_k}
		\\ \Vecx\supopt_k &= \lim_{t \rightarrow t_k\supminus} \Vecx\supopt(t) + \left(k \frac{\CostThreshold}{2} - \sum_{i=1}^{j_k-1} \norm{\VecDeltaV\supopt_i} \right)\MatB \frac{\VecDeltaV\supopt_i}{\norm{\VecDeltaV\supopt_i}}.
	\end{align*}
	Let $K = \sidefrac{\ceil{\CostFcn\supopt}}{\left(\sidefrac{\CostThreshold}{2}\right)}$ and set $t_K = \tf$, $\Vecx\supopt_K = \Vecx\supopt(\tf)$. Since the trajectory $\Vecx\supopt(t)$ to be approximated is fixed, for sufficiently small $\CostThreshold$ (equivalently, sufficiently large $n$) we may ensure that the control applied between each $\Vecx\supopt_k$ and $\Vecx\supopt_{k+1}$ occurs only at the endpoints.  In particular this may be accomplished by choosing $n$ large enough so that $\CostThreshold < \min_i{\norm{\VecDeltaV\supopt_i}}$.  In the limit $\CostThreshold \rightarrow 0$, the vast majority of the 2-impulse steering connections between successive $\Vecx\supopt_k$ will be zero-time maneuvers (arranged along the length of each burn $\VecDeltaV\supopt_i$) with only $N$ positive-time maneuvers spanning the regions of $\Vecx\supopt(t)$ between burns. By considering this regime of $n$, we note that applying 2-impulse steering between successive $\Vecx\supopt_k$ (which otherwise may only approximate the performance of a more complex control scheme) requires cost no greater than that of $\Vecx\supopt$ itself along that step, \ie $\sidefrac{\CostThreshold}{2}$.

	We now inductively define a sequence of points $\finitesequence[k=0]{K}{\VecxHat\supopt_k}$ by $\VecxHat\supopt_0 = \Vecx\supopt_0$ and for each $k > 0$: (1) if $t_k = t_{k-1}$, pick $\VecxHat\supopt_k = \Vecx\supopt_k + \delta \Vecx_{c,k} + (\VecxHat\supopt_{k-1} - \Vecx\supopt_{k-1})$, where $\delta \Vecx_{c,k}$ comes from \cref{lem: perturbed cost bounds} for zero-time approximate steering between $\Vecx\supopt_{k-1}$ and $\Vecx\supopt_k$ subject to perturbations of size $\eps \CostFcn^3$; (2) otherwise if $t_k > t_{k-1}$, pick $\VecxHat\supopt_k = \Vecx\supopt_k + (\VecxHat\supopt_{k-1} - \Vecx\supopt_{k-1})$. The reason for defining these $\VecxHat\supopt_k$ is that the process of approximating each $\VecDeltaV\supopt_i$ by a sequence of small burns necessarily incurs some short-term position drift. Since $\delta \Vecx_{c,k} = \BigO{\CostThreshold^2}$ for each $k$, and since $K = \BigO{\CostThreshold^{-1}}$, the maximum accumulated difference satisfies $\max_k \norm{\VecxHat\supopt_k - \Vecx\supopt_k} = \BigO{\CostThreshold}$.

	For each $k$ consider the Euclidean ball centered at $\VecxHat\supopt_k$ with radius $\gamma \nSamples^{-\frac{1}{\Dimension}}$, \ie let $\ClosedBall_k \coloneqq \ClosedBall\left(\VecxHat\supopt_k, \gamma \nSamples^{-\frac{1}{\Dimension}}\right)$.  By \cref{def: l2 dispersion} and our restriction on $\SampleSet$, each $\ClosedBall_k$ contains at least one point from $\SampleSet$.  Hence for every $\ClosedBall_k$ we can pick a waypoint $\Vecy_k$ such that $\Vecy_k \in \ClosedBall_k \intersect \SampleSet$. Then $\norm{\Vecy_k - \hat\Vecx\supopt_k} \leq \gamma \nSamples^{-\frac{1}{\Dimension}} = \sidefrac{\epsilon(\CostThreshold/2)^3}{8}$ for all $k$, and thus by \cref{lem: perturbed cost bounds} (with $\eta = \sidefrac{\epsilon}{8}$) we have that:
	\begin{equation*}
		\CostFcn(\Vecy_k, \Vecy_{k+1}) \leq \frac{\CostThreshold}{2} \left(1 + \frac{\epsilon}{2} + \BigO{\CostThreshold}\right)
			\leq \frac{\CostThreshold}{2} (1 + \epsilon)
	\end{equation*}
	for sufficiently large $n$.  In applying \cref{lem: perturbed cost bounds} to Case 2 for $k$ such that $t_{k+1} > t_k$, we note that the $T^{-1}$ term is mitigated by the fact that there are only a finite number of burn times $\BurnTime\supopt_i$ along $\Vecx\supopt(t)$.  Thus for each such $k$, $t_{k+1} - t_k \geq \min_j (t_{j+1} - t_j) > 0$, so in every case we have $\CostFcn(\Vecy_k, \Vecy_{k+1}) \leq (\sidefrac{\CostThreshold}{2})(1 + \epsilon)$. That is, each steering segment connecting $\Vecy_k$ to $\Vecy_{k+1}$ approximates the cost of the corresponding $\Vecx\supopt_k$ to $\Vecx\supopt_{k+1}$ segment of $\Vecx\supopt(t)$ up to a multiplicative factor of $\epsilon$, and thus:
	\begin{equation*}
		\sum_{k=0}^{K-1} \CostFcn(\Vecy_k, \Vecy_{k+1}) \leq (1 + \epsilon)\CostFcn\supopt.
	\end{equation*}
	Finally, to establish that $\Vecy(t)$, the trajectory formed by steering through the $\Vecy_k$'s in succession, has sufficient obstacle clearance, we note that its distance from $\Vecx\supopt(t)$ is bounded by $\max_k \norm{\VecxHat\supopt_k - \Vecx\supopt_k} = \BigO{\CostThreshold}$ plus the deviation bound from \cref{def: l2 dispersion}, again $\BigO{\CostThreshold}$. For sufficiently large $n$, the total distance, $\BigO{\CostThreshold}$, will be bounded by $\sidefrac{\delta}{2}$, and thus $\Vecy(t)$ will have strong $(\delta/2)$-clearance.
\end{proof}

We now prove that \FMTstar is asymptotically optimal in the number of points $\nSamples$, provided the conditions required in Theorem \ref{th: waypoint existence} hold; note the proof is heavily based on Theorem VI.1 from \cite{ES-LJ-MP:15a}.
\begin{theorem}[Asymptotic Performance of \FMTstar]
	Let $\Vecx\supopt(t)$ be a feasible trajectory satisfying \cref{eqn: CWH Optimal Steering} with strong $\delta$-clearance and cost $\CostFcn\supopt$.  Let $\SampleSet \union \{\Vecx\subnaught\}$ be a set of $\nSamples \in \naturals$ samples from $\Xfree$ with dispersion $\DispersionFcn\left(\SampleSet\right) \leq \gamma \nSamples^{\sidefrac{-1}{\Dimension}}$.	 Finally, let $\CostFcn_\nSamples$ denote the cost of the path returned by \FMTstar with $\nSamples$ points in $\SampleSet$ while using a cost threshold $\CostThreshold(\nSamples) = \LittleOmega{\nSamples^{-1/3\Dimension}}$ and $\CostThreshold = \LittleO{1}$. (That is, $\CostThreshold(\nSamples)$ asymptotically dominates $\nSamples^{-1/3\Dimension}$ and is asymptotically dominated by $1$.)
	Then $\lim_{\nSamples \to \infty} \CostFcn_\nSamples \leq \CostFcn\supopt$.
	\label{th: fmt asymptotic optimality}
\end{theorem}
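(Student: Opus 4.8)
The plan is to follow the two-stage template used for $\FMTstar$ in the kinematic case (Theorem~1 of \cite{LJ-ES-AC-ea:15}) and under drift dynamics (Theorem~VI.1 of \cite{ES-LJ-MP:15a}), with \cref{th: waypoint existence} — which already absorbs all the CWH-specific analysis, namely the cost bounds of \cref{lem: CWH Cost Bounds}, the ellipsoidal reachability characterization of \cref{subsec: Reachability Sets}, and the perturbation estimates of \cref{lem: perturbed cost bounds} — playing the role of the ``approximating path'' lemma. Fix $\epsilon \in (0,1)$. Since $\CostThreshold(\nSamples) = \LittleOmega{\nSamples^{-1/3\Dimension}}$, for all sufficiently large $\nSamples$ we have $\CostThreshold(\nSamples) \geq \CostThreshold_0 := 4\,(\sidefrac{\gamma \nSamples^{-1/\Dimension}}{\epsilon})^{1/3}$, and since $\CostThreshold(\nSamples) = \LittleO{1}$ the ``sufficiently large $\nSamples$'' hypotheses of \cref{th: waypoint existence} (in particular $\CostThreshold_0 < \min_i \norm{\VecDeltaV\supopt_i}$ and $\CostThreshold_0 \to 0$) hold. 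Applying \cref{th: waypoint existence} with threshold $\CostThreshold_0$ then yields, for $\nSamples$ large, waypoints $\Vecy_0, \dots, \Vecy_K \in \SampleSet$ with $\Vecy_0 = \Vecx\subinit$, $\Vecy_K = \Vecx\subgoal$ (taking, as is standard, the goal among the samples), consecutive steering costs $\CostFcn(\Vecy_k, \Vecy_{k+1}) \leq \CostThreshold_0 \leq \CostThreshold(\nSamples)$, total cost $C_K := \sum_{k=0}^{K-1}\CostFcn(\Vecy_k, \Vecy_{k+1}) \leq (1+\epsilon)\CostFcn\supopt$, and the concatenated trajectory $\Vecy(t)$ strong $(\delta/2)$-clear. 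Because $\CostFcn(\Vecy_k, \Vecy_{k+1}) \leq \CostThreshold(\nSamples)$, each pair $(\Vecy_k, \Vecy_{k+1})$ is an admissible neighbor pair for the reachability-set queries of \cref{alg: FMTstar}; because $\Vecy(t)$ has $(\delta/2)$-clearance, lies in $\Xfree$, and visits only samples returned by \Call{SampleFree}{} (hence actively safe), and because along $\Vecx\supopt(t)$ all remaining constraints of \cref{eqn: CWH Optimal Steering} are assumed satisfied with uniform margin (a standing robustness hypothesis paralleling strong $\delta$-clearance), for $\nSamples$ large every steering edge $\Vecy_k \to \Vecy_{k+1}$ is accepted by \Call{CollisionFree}{}.

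It remains to show $\FMTstar$ routes a path to $\Vecx\subgoal$ of cost not much above $C_K$. Write $c_\nSamples(\cdot)$ for cost-to-come in the final tree and $C_k := \sum_{j=0}^{k-1}\CostFcn(\Vecy_j, \Vecy_{j+1})$. I would argue by strong induction on $k$ that, at termination, either ($\ast$) $\FMTstar$ has already returned a feasible path of cost $\leq C_K$, or ($\ast\ast$) $\Vecy_k$ is in the tree with $c_\nSamples(\Vecy_k) \leq C_k + E_k$ for a slack $E_k \geq 0$ to be controlled; the base case $k=0$ holds with $E_0 = 0$ since $\Vecy_0$ is the root. Assuming we are not in case ($\ast$), the inductive step splits into: (a) $\Vecy_{k+1}$ is connected at an iteration at which $\Vecy_k$ is already in the frontier $\Frontier$ — which occurs in particular whenever $\Vecy_k$ is extracted as the minimum-cost-to-come node $\Vecz$ before $\Vecy_{k+1}$ is reached, since $\FMTstar$ then connects every unexplored neighbor of $\Vecz$ to its cheapest eligible frontier parent and $\Vecy_k$ is a candidate — giving $c_\nSamples(\Vecy_{k+1}) \leq c_\nSamples(\Vecy_k) + \CostFcn(\Vecy_k, \Vecy_{k+1}) \leq C_{k+1} + E_k$, so the slack does not grow; and (b) $\Vecy_{k+1}$ is connected ``out of order,'' before $\Vecy_k$ enters the tree — then, letting $\Vecy_j$ ($j < k$) be the longest prefix of the waypoint chain present in the tree at that iteration, one checks $\Vecy_j \in \Frontier$ (if it were in $\TreeInterior$ the neighbor $\Vecy_{j+1}$ would already be connected), so the extracted node $\Vecz$ at that iteration has $c_\nSamples(\Vecz) \leq c_\nSamples(\Vecy_j)$ and hence $c_\nSamples(\Vecy_{k+1}) \leq c_\nSamples(\Vecy_j) + \CostFcn(\Vecz, \Vecy_{k+1}) \leq (C_j + E_j) + \CostThreshold(\nSamples)$. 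The key point — and the main obstacle of the proof — is that this out-of-order penalty of $\CostThreshold(\nSamples)$ is absorbed rather than accumulated: consecutive waypoint costs are $\Theta(\CostThreshold(\nSamples))$ (from the lower bounds in \cref{lem: CWH Cost Bounds} and \cref{lem: perturbed cost bounds}), so the $k-j \geq 2$ skipped edges contribute $C_{k+1} - C_j = \Theta(\CostThreshold(\nSamples))$, making the net slack increase at a bad step only $\BigO{\epsilon\,\CostThreshold(\nSamples)}$; since there are $K = \BigO{\CostFcn\supopt/\CostThreshold(\nSamples)}$ waypoints, the total slack obeys $E_K = \BigO{\epsilon\,\CostFcn\supopt}$. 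Making this bookkeeping precise is exactly the content inherited from Theorem~VI.1 of \cite{ES-LJ-MP:15a}; the only genuinely new verification is that the $2$-norm impulsive steering cost and the ellipsoidal reachability sets of \cref{subsec: Reachability Sets} satisfy its structural hypotheses, which they do via \crefrange{lem: CWH Cost Bounds}{lem: perturbed cost bounds}.

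Running the induction to $k = K$ gives: either $\FMTstar$ returns a feasible path of cost $\leq C_K \leq (1+\epsilon)\CostFcn\supopt$, or $\Vecx\subgoal = \Vecy_K$ is eventually in the tree with $c_\nSamples(\Vecx\subgoal) \leq C_K + E_K \leq (1+\epsilon)\CostFcn\supopt + \BigO{\epsilon\CostFcn\supopt}$, so the path $\FMTstar$ returns on reaching $\Vecx\subgoal$ costs at most that. In either case $\CostFcn_\nSamples \leq (1 + \BigO{\epsilon})\CostFcn\supopt$ for all sufficiently large $\nSamples$, and since $\epsilon \in (0,1)$ was arbitrary, $\limsup_{\nSamples\to\infty}\CostFcn_\nSamples \leq \CostFcn\supopt$, which is the claim. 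Two structural remarks close the argument: the cubic perturbation tolerance in \cref{lem: perturbed cost bounds} is exactly why one must shrink the dispersion balls of radius $\gamma\nSamples^{-1/\Dimension}$ like $\CostThreshold^3$, which is where the exponent $3\Dimension$ in the lower rate condition comes from, while the upper condition $\CostThreshold = \LittleO{1}$ is what keeps both the per-step and the out-of-order slacks vanishing; and the $\inverse{\ManeuverDuration}$ blow-up in Case~2 of \cref{lem: perturbed cost bounds} is harmless here precisely because $\Vecx\supopt(t)$ has only finitely many positive-duration steering segments, so their durations are uniformly bounded below, as already exploited inside the proof of \cref{th: waypoint existence}.
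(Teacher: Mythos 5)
Your first stage coincides with the paper's: both invoke \cref{th: waypoint existence} to produce waypoints $\Vecy_0,\dots,\Vecy_K \in \SampleSet$ with per-edge cost at most $\CostThreshold$ and total cost at most $(1+\epsilon)\CostFcn\supopt$, and the rate conditions on $\CostThreshold(\nSamples)$ enter exactly as you describe. The divergence is in the induction that tracks these waypoints through the tree. The paper maintains the \emph{exact} invariant $\min(\CostToCome(\Vecy_m), \CostFcn_\nSamples) \leq \sum_{k=0}^{m-1}\CostFcn(\Vecy_k,\Vecy_{k+1})$ with no slack term: in the out-of-order case it takes $\Vecw$ to be the last vertex on the \emph{final} tree path to $\Vecy_{m-1}$ that is still in $\Frontier$ when $\Vecy_m$ is first considered, and combines $\CostToCome(\Vecw) + \CostFcn(\Vecw,\Vecy_{m-1}) \leq \CostToCome(\Vecy_{m-1})$ with $\CostFcn(\Vecw,\Vecy_m) \leq \CostFcn(\Vecw,\Vecy_{m-1}) + \CostFcn(\Vecy_{m-1},\Vecy_m)$ (and, when $\Vecy_m \notin \ReachableSet(\Vecw,\CostThreshold)$, the chain $\CostToCome(\Vecz)+\CostFcn(\Vecz,\Vecy_m) \leq \CostToCome(\Vecw)+\CostThreshold \leq \CostToCome(\Vecw)+\CostFcn(\Vecw,\Vecy_m)$) to absorb the detour entirely. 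No lower bound on consecutive waypoint costs is ever needed.

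Your version instead lets a slack $E_k$ accumulate and controls it by asserting $\CostFcn(\Vecy_k,\Vecy_{k+1}) = \Theta(\CostThreshold)$, citing ``the lower bounds in'' \cref{lem: CWH Cost Bounds} and \cref{lem: perturbed cost bounds}. That is the gap: neither lemma supplies a lower bound on $\CostFcn(\Vecy_k,\Vecy_{k+1})$. \Cref{lem: CWH Cost Bounds} relates $\CostFcn$ to $\norm{\MatDeltaV}$ in both directions but says nothing about how small that quantity can be for a particular waypoint pair, and \cref{lem: perturbed cost bounds} bounds the perturbed cost only from \emph{above}. A reverse application of the perturbation lemma (swapping nominal and perturbed endpoints) could plausibly give $\CostFcn(\Vecy_k,\Vecy_{k+1}) \geq (\CostThreshold/2)(1-\BigO{\epsilon})$, but its hypothesis requires the perturbation to be small relative to the cube of the very cost you are trying to bound below, so avoiding circularity takes an extra argument you have not supplied. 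Without that lower bound, each out-of-order step contributes an additive $\CostThreshold$ to the slack, and over $K = \BigO{\CostFcn\supopt/\CostThreshold}$ waypoints this could total $\BigO{\CostFcn\supopt}$ rather than $\BigO{\epsilon\,\CostFcn\supopt}$, which destroys the conclusion. The fix is either to prove the two-sided perturbation bound carefully, or simply to adopt the paper's $\Vecw$-argument, which sidesteps the need for it.
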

\begin{proof}
	Let $\epsilon > 0$. Pick $\nSamples$ sufficiently large so that $\sidefrac{\delta}{2} \geq \CostThreshold \geq 4 \left(\sidefrac{\gamma \nSamples^{\sidefrac{-1}{\Dimension}}}{\epsilon}\right)^{\sidefrac{1}{3}}$ such that \cref{th: waypoint existence} holds. That is, there exists a sequence of waypoints $\finitesequence[k=0]{K}{\Vecy_k}$ approximating $\Vecx\supopt(t)$ such that the trajectory $\Vecy(t)$ created by sequentially steering through the $\Vecy_k$ is strong $\sidefrac{\delta}{2}$-clear, whose connection costs satisfy $\CostFcn(\Vecy_k, \Vecy_{k+1}) \leq \CostThreshold$, and whose total cost satisfies $\sum_{k=0}^{K-1}\CostFcn(\Vecy_k,\Vecy_{k+1}) \leq (1+\epsilon)\CostFcn\supopt$. We show that \FMTstar recovers a path with cost at least as good as $\Vecy(t)$; that is, we show that $\lim_{\nSamples \to \infty} \CostFcn_\nSamples \leq \CostFcn\supopt.$
	
	Consider running \FMTstar to completion, and for each $\Vecy_k$, let $\CostToCome(\Vecy_k)$ denote the cost-to-come of $\Vecy_k$ in the generated graph (with value $\infty$ if $\Vecy_k$ is not connected).  
	We show by induction that:
	\begin{equation}
		\min(\CostToCome(\Vecy_m), \CostFcn_\nSamples) \leq \sum_{k=0}^{m-1}\CostFcn(\Vecy_k,\Vecy_{k+1})
		\label{eq: fmt induction}
	\end{equation}
	for all $m \in \finitelist[1]{K}$.  For the base case $m = 1$, we note by the initialization of \FMTstar on \cref{line: Tree Initialization} of \cref{alg: FMTstar} that $\Vecx\subinit$ is in $\Frontier$; therefore, by the design of \FMTstar (per \crefrange{line: Expand Tree Loop Begin}{line: Expand Tree Loop End}), every possible feasible connection is made between the first waypoint $\Vecy\subnaught = \Vecx\subinit$ and its neighbors.  Since $\CostFcn(\Vecy\subnaught, \Vecy_1) \leq \CostThreshold$ and the edge $(\Vecy\subnaught, \Vecy_1)$ is collision free, it is also in the \FMTstar graph. Then $\CostToCome(\Vecy_1) = \CostFcn(\Vecy\subnaught,\Vecy_1)$.  Now assuming that \cref{eq: fmt induction} holds for $m - 1$, then one of the following statements holds:
	\begin{enumerate}
		\item $\CostFcn_\nSamples \leq \sum_{k=0}^{m-2} \CostFcn(\Vecy_k, \Vecy_{k+1})$
		\item $\CostToCome(\Vecy_{m-1}) \leq \sum_{k=0}^{m-2} \CostFcn(\Vecy_k, \Vecy_{k+1})$ and \FMTstar ends before considering $\Vecy_m$.
		\item $\CostToCome(\Vecy_{m-1}) \leq \sum_{k=0}^{m-2} \CostFcn(\Vecy_k, \Vecy_{k+1})$ and $\Vecy_{m-1}\in \Frontier$ when $\Vecy_m$ is first considered
		\item $\CostToCome(\Vecy_{m-1}) \leq \sum_{k=0}^{m-2} \CostFcn(\Vecy_k, \Vecy_{k+1})$ and $\Vecy_{m-1} \notin \Frontier$ when $\Vecy_m$ is first considered.
	\end{enumerate}
	We now show for each case that our inductive hypothesis holds.
	
	\noindent\underline{Case 1}: $\CostFcn_\nSamples \leq \sum_{k=0}^{m-2}\CostFcn(\Vecy_k,\Vecy_{k+1}) \leq \sum_{k=0}^{m-1}\CostFcn(\Vecy_k,\Vecy_{k+1})$.
	
	\noindent\underline{Case 2}: Since at every step \FMTstar considers the node that is the endpoint of the path with the lowest cost, if \FMTstar ends before considering $\Vecy_m$, we have $\CostFcn_\nSamples \leq \CostToCome(\Vecy_m) \leq \CostToCome(\Vecy_{m-1}) + \CostFcn(\Vecy_{m-1},\Vecy_m) \leq \sum_{k=0}^{m-1}\CostFcn(\Vecy_k,\Vecy_{k+1})$.
	
	\noindent\underline{Case 3}: Since the neighborhood of $\Vecy_m$ is collision free by the clearance property of $\Vecy$, and since $\Vecy_{m-1}$ is a possible parent candidate for connection, $\Vecy_m$ will be added to the \FMTstar tree as soon as it is considered with $\CostToCome(\Vecy_m) \leq \CostToCome(\Vecy_{m-1}) + \CostFcn(\Vecy_{m-1},\Vecy_m) \leq \sum_{k=0}^{m-1}\CostFcn(\Vecy_k,\Vecy_{k+1})$.
	
	\noindent\underline{Case 4}: When $\Vecy_m$ is considered, it means there is a node $\Vecz \in \Frontier$ (with minimum cost-to-come through the \FMTstar tree) and $\Vecy_m \in \ReachableSet(\Vecz, \CostThreshold)$. Then $\CostToCome(\Vecy_m) \leq \CostToCome(\Vecz) + \CostFcn(\Vecz,\Vecy_m)$.  Since $\CostToCome(\Vecy_{m-1}) < \infty$, $\Vecy_{m-1}$ must be added to the tree by the time \FMTstar terminates.  Consider the path from $\Vecx\subinit$ to $\Vecy_{m-1}$ in the final \FMTstar tree, and let $\Vecw$ be the last vertex along this path, which is in $\Frontier$ at the time when $\Vecy_m$ is considered.  If $\Vecy_m \in \ReachableSet(\Vecw, \CostThreshold)$, \ie $\Vecw$ is a parent candidate for connection, then:
	\begin{alignat*}{2}
		\CostToCome(\Vecy_m) 	&\leq \CostToCome(\Vecw) + \CostFcn(\Vecw, \Vecy_m) 
								\\ &\leq \CostToCome(\Vecw) + \CostFcn(\Vecw, \Vecy_{m-1}) + \CostFcn(\Vecy_{m-1}, \Vecy_m)  
								\\ &\leq \CostToCome(\Vecy_{m-1}) + \CostFcn(\Vecy_{m-1}, \Vecy_m)
								\\ &\leq \sum_{k=0}^{m-1} \CostFcn(\Vecy_k,\Vecy_{k+1}).  
	\end{alignat*}
	Otherwise if $\Vecy_m \notin \ReachableSet(\Vecw, \CostThreshold)$, then $\CostFcn(\Vecw, \Vecy_m) > \CostThreshold$ and:
	\begin{alignat*}{2}
		\CostToCome(\Vecy_m)	&\leq \CostToCome(\Vecz) + \CostFcn(\Vecz,\Vecy_m)
								\\ &\leq \CostToCome(\Vecw) + \CostThreshold
								\\ &\leq \CostToCome(\Vecw) + \CostFcn(\Vecw, \Vecy_m)
								\\ &\leq \CostToCome(\Vecw) + \CostFcn(\Vecw, \Vecy_{m-1}) + \CostFcn(\Vecy_{m-1}, \Vecy_m)
								\\ &\leq \CostToCome(\Vecy_{m-1}) + \CostFcn(\Vecy_{m-1}, \Vecy_m)
								\\ &\leq \sum_{k=0}^{m-1} \CostFcn(\Vecy_k,\Vecy_{k+1}).
	\end{alignat*}
	where we used the fact that $\Vecw$ is on the path of $\Vecy_{m-1}$ to establish $\CostToCome(\Vecw) + \CostFcn(\Vecw, \Vecy_{m-1}) \leq \CostToCome(\Vecy_{m-1})$.

	
	Thus by induction \cref{eq: fmt induction} holds for all $m$. Taking $m = K$, we finally have that $\CostFcn_\nSamples \leq \CostToCome(\Vecy_K) \leq \sum_{k=0}^{K-1}\CostFcn(\Vecy_{k}, \Vecy_{k+1}) \leq (1+\epsilon)\CostFcn\supopt$, as desired.
\end{proof}
\begin{remark}[Asymptotic Optimality of \FMTstar]
	If the planning problem at hand admits an optimal solution that does not itself have strong $\delta$-clearance, but is arbitrarily approximable both pointwise and in cost by trajectories with strong clearance (see \cite{ES-LJ-MP:15a} for additional discussion on why such an assumption is reasonable), then Theorem~\ref{th: fmt asymptotic optimality} implies the asymptotic optimality of \FMTstar.
\end{remark}

\nomenclature[AD ]{$\DispersionFcn\left(\cdot\right)$}{Dispersion of a set of points}%
\nomenclature[AO ]{$\BigO{\cdot}$}{Asymptotic upper-bound}%
\nomenclature[Ao ]{$\LittleO{\cdot}$}{Strict asymptotic upper-bound}%
\nomenclature[B24b]{$\LittleOmega{\cdot}$}{Strict asymptotic lower-bound}%
\nomenclature[AN#]{$\naturals$}{The field of natural numbers}%
\nomenclature[B03b]{$\gamma$}{Positive constant in $\DispersionFcn\left(\SampleSet\right)$ upper bound}%
\nomenclature[B04b]{$\delta$}{Positive constant defining a trajectory's minimum clearance from $\Xobs$}%
\nomenclature[AJ ]{$\CostFcn\supopt$}{Optimal solution cost}%
\nomenclature[AJn]{$\CostFcn_\nSamples$}{Solution cost using $\nSamples$ samples}%
\nomenclature[Ay ]{$\Vecy$}{Sample waypoint}%
\nomenclature[B05b]{$\epsilon$}{Small multiplicative factor between the optimal and \FMTstar solution costs}%

\section{Trajectory Smoothing}
\label{sec: Smoothing/Robustness}
Due to the discreteness caused by using a \emph{finite} number of samples, sampling-based solutions will necessarily be approximations to true optima.  In an effort to compensate for this limitation, we offer in this section two techniques to improve the quality of solutions returned by our planner from \cref{subsec: Algorithm}.  We first describe a straightforward method for reducing the sum of $\DeltaV$-vector magnitudes along concatenated sequences of edge trajectories that can also be used to improve the search for \fuel-efficient trajectories in the feasible state space $\Xfree$.  We then follow with a fast post-processing algorithm for further reducing \fuel cost after a solution has been reported.

The first technique removes unnecessary $\DeltaV$-vectors that occur when joining sub-trajectories (edges) in the planning graph.  Consider merging two edges at a node with position $\delta \Vecr\left(t\right)$ and velocity $\delta \Vecv\left(t\right)$ as in \cref{subfig: CWH Built-In Smoothing}.  A naive concatenation would retain both $\VecDeltaV_2\left(t\supminus\right)$ (the rendezvous burn added to the incoming velocity $\Vecv\left(t\supminus\right)$) and $\VecDeltaV_1\left(t\right)$ (the intercept burn used to achieve the outgoing velocity $\Vecv\left(t\supplus\right)$) individually within the combined control trajectory.  Yet, because these impulses occur at the same time, a more realistic approach should merge them into a single net $\DeltaV$-vector $\VecDeltaV\subnet\left(t\supminus\right)$.  By the triangle inequality, we have that:
\nomenclature[At-]{$t\supminus$}{Time immediately before an impulse at time $t$}%
\nomenclature[At+]{$t\supplus$}{Time immediately after an impulse at time $t$}%
\begin{align*}
	\norm{\VecDeltaV_{\mathrm{net}}\left(t\supminus\right)}
		&= \norm{\VecDeltaV_2\left(t\supminus\right) + \VecDeltaV_1\left(t\right)}
		\leq \norm{\VecDeltaV_2\left(t\supminus\right)} + \norm{\VecDeltaV_1\left(t\right)}.
\end{align*}
Hence, merging edges in this way guarantees $\DeltaV$ savings for solution trajectories under the 2-norm \fuel metric.  Furthermore, incorporating net $\DeltaV$'s into the cost-to-come during graph construction can make exploration of the search space more efficient; the cost-to-come $\CostToCome(\Vecz)$ for a given node $\Vecz$ would then reflect the cost to rendezvous with $\Vecz$ from $\Vecx\subinit$ through a series of intermediate intercepts rather than a series of rendezvous maneuvers (as a trajectory designer might normally expect).  Note, on the other hand, that two edges as in \cref{subfig: CWH Built-In Smoothing} that are merged in this fashion no longer achieve velocity $\Vecv\left(t\right)$; state $\Vecx\left(t\right)$ is skipped altogether.  This can be problematic for our active safety policy from \cref{sec: CWH Safety} for states along the incoming edge which relies on rendezvousing with the endpoint $\Vecx = \left[\,\Vecr(t)\ \Vecv(t)\,\right]$ \emph{exactly} before executing our one-burn escape maneuver.  To compensate for this, care must be taken to ensure that the burn $\VecDeltaV_2\left(t\supminus\right)$ that is eliminated during merging is appropriately appended to the front of the escape control trajectory and verified for all possible failure configurations.  Hence we see the price of smoothing in this way is that our original one-burn policy now requires an extra burn, which may not be desirable in some applications. 

\begin{figure}
	\subcaptionbox{%
		\label{subfig: CWH Built-In Smoothing}
		Smoothing during graph construction (merges $\DeltaV$-vectors at edge endpoints).
	}[0.4\columnwidth]{%
		\includegraphics[width=0.9\textwidth]{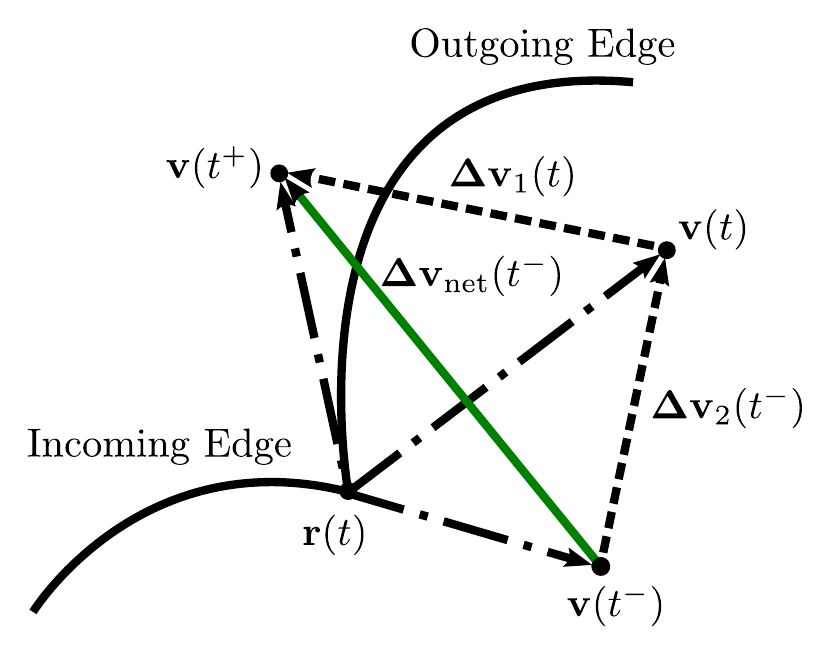}
	}
	\hspace{1em}
	\subcaptionbox{%
		\label{subfig: CWH Post-Processing Smoothing}
		Smoothing during post-processing (see \cref{alg: CWH Traj Smoothing}).
	}[0.4\columnwidth]{%
		\includegraphics[width=0.9\textwidth]{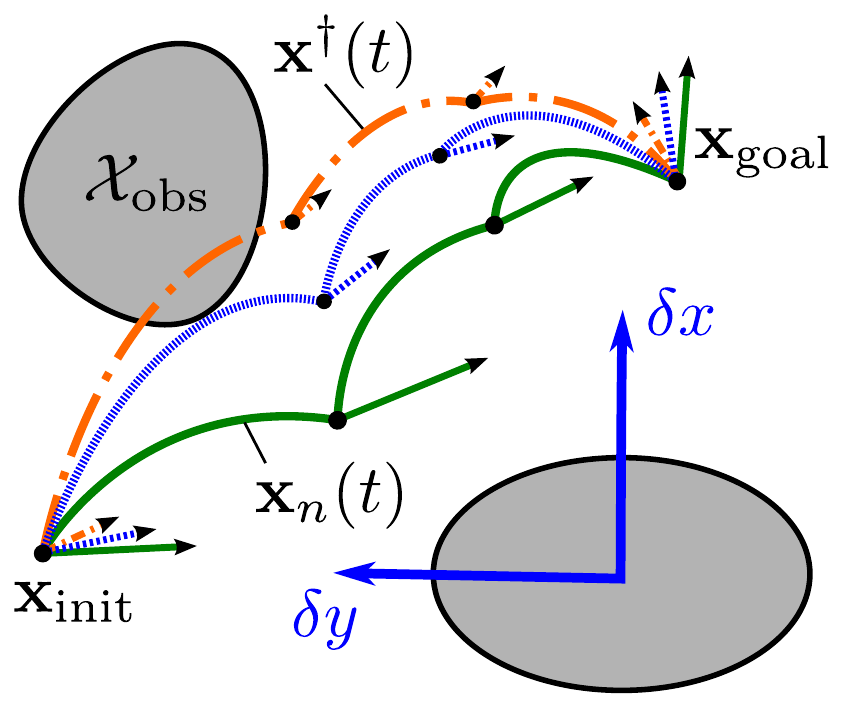}
	}
	\caption{Improving sampling-based solutions under minimal-\fuel impulsive dynamics.  \iftoggle{AIAAjournal}{}{\Cref{subfig: CWH Built-In Smoothing} can be used to merge $\DeltaV$-vectors between edge endpoints during and after graph construction, while \cref{subfig: CWH Post-Processing Smoothing} illustrates the post-processing smoothing algorithm given in \cref{alg: CWH Traj Smoothing} (the original trajectory $\Vecx_\nSamples(t)$ is solid, the approximate unconstrained optimum $\Vecx^\dagger(t)$ is dash-dotted, and the resulting ``smoothed'' trajectory derived from their combination is shown dashed).}}
	\label{fig: CWH Planning with Smoothing}
\end{figure}
%
The second technique attempts to reduce solution cost by adjusting the magnitudes of $\DeltaV$-vectors in the trajectory returned by \FMTstar, denoted by $\Vecx_\nSamples(t)$ with associated stacked impulse vector $\MatDeltaV_\nSamples$.  By relaxing \FMTstar's constraint to pass through state samples, strong cost improvements may be gained.  The main idea is to deform our low-cost, feasible solution $\Vecx_\nSamples(t)$ as much as possible towards the unconstrained minimum-\fuel solution $\Vecx\supopt(t)$ between $\Vecx\subinit$ and $\Vecx\subgoal$, as determined by the 2-point Boundary Value Problem (\cref{eqn: CWH Optimal 2PBVP}) solution from \cref{subsec: steering} (in other words, use a homotopic transformation from $\Vecx_\nSamples(t)$ to $\Vecx\supopt(t)$).  However, a naive attempt to solve \cref{eqn: CWH Optimal 2PBVP} in its full generality would be too time-consuming to be useful, and would threaten the real-time capability of our approach.  Assuming our sampling-based trajectory is near-optimal (or at least, in a low-cost solution homotopy), we can relax \cref{eqn: CWH Optimal 2PBVP} by keeping the number of burns $\NBurns$, end time $\tf \coloneqq t\subfinal$, and burn times $\BurnTime_i$ fixed from our planning solution, and solve for an approximate unconstrained minimum-\fuel solution $\MatDeltaV^\dagger$ with associated state trajectory $\Vecx^\dagger(t)$ via:
\nomenclature[SUP01]{$\ast$}{Related to the optimal solution}%
\nomenclature[SUBn]{$\nSamples$}{Related to the \FMTstar solution using $\nSamples$ samples}%
\nomenclature[SUP04]{$\dagger$}{Related to the approximate unconstrained minimal-\fuel solution}%
\begin{equation}
	\begin{optproblem}
		\minimize[\VecDeltaV_i]{\finiteseries[i=1]{\NBurns} \norm[\pvalue]{\VecDeltaV_i}}
		\subjectto
		\constraint[\text{Dynamics/Boundary Conditions}]{ \MatPhi_v\left(t\subfinal,\left\{\BurnTime_i\right\}_i\right) \MatDeltaV = \Vecx\subgoal - \MatPhi\left(t\subfinal,t\subinit\right) \Vecx\subinit }
		\constraint[\text{Burn Magnitude Bounds}]{ \norm[\pvalue]{\VecDeltaV_i} \leq \VecDeltaVMag\submax \quad\text{for all burns } i }
	\end{optproblem}
	\label{eqn: CWH Solution Used for Traj Smoothing}
\end{equation}
(see \cref{sec:sysDyn} for definitions).  It can be shown that \cref{eqn: CWH Solution Used for Traj Smoothing} is a second-order cone program (SOCP), and hence quickly solved using standard convex solvers.
%
As the following proof shows explicitly, we can safely deform the trajectory $\Vecx_\nSamples(t)$ towards $\Vecx^\dagger(t)$ without violating our dynamics and boundary conditions if we use a convex combination of our two control trajectories $\MatDeltaV_\nSamples$ and $\MatDeltaV^\dagger$.  This follows from the principle of superposition, given that the CWH equations are Linear, Time-Invariant (LTI), and the fact that both solutions already satisfy the boundary conditions.
\nomenclature[Ax-]{$\Vecx\supopt$}{Optimal solution state trajectory}%
\nomenclature[Ax/]{$\Vecx_\nSamples$}{Solution state trajectory using $\nSamples$ samples}%
\nomenclature[Ax-]{$\Vecx^\dagger$}{Unconstrained minimal-\fuel state trajectory}%
\nomenclature[AV)]{$\MatDeltaV\supopt$}{Optimal stacked $\DeltaV$-vector}%
\nomenclature[AV)]{$\MatDeltaV_\nSamples$}{Solution stacked $\DeltaV$-vector using $\nSamples$ samples}%
\nomenclature[AV)]{$\MatDeltaV^\dagger$}{Unconstrained minimal-\fuel stacked $\DeltaV$-vector}%
\begin{theorem}[Dynamic Feasibility of CWH Trajectory Smoothing]
	Suppose $\Vecx_\nSamples(t)$ and $\Vecx^\dagger(t)$ with respective control vectors $\MatDeltaV_\nSamples$ and $\MatDeltaV^\dagger$ are two state trajectories which satisfy the impulsive CWH steering problem \cref{eqn: CWH 2PBVP} between states $\Vecx\subinit$ and $\Vecx\subgoal$.  Then the trajectory $\Vecx(t)$ generated by the convex combination of $\MatDeltaV_\nSamples$ and $\MatDeltaV^\dagger$ is itself a convex combination of $\Vecx_\nSamples(t)$ and $\Vecx^\dagger(t)$, and hence also satisfies \cref{eqn: CWH 2PBVP}.
\end{theorem}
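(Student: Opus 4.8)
The plan is to reduce everything to the superposition principle for the linear, time-invariant CWH dynamics \cref{eqn: CWH State Dynamics}. The key observation is that, with the number of burns $\NBurns$ and the burn epochs $\{\BurnTime_i\}_i$ held fixed (exactly as in \cref{eqn: CWH Solution Used for Traj Smoothing}), the impulsive solution \cref{eqn: CWH Impulsive Solution} writes the state at any time $t$ as an \emph{affine} function of the stacked impulse vector, and moreover the \emph{same} affine map applies to both trajectories. Concretely, since $\Vecx_\nSamples(t)$ and $\Vecx^\dagger(t)$ share the same initial state $\Vecx\subinit$, the same $\ti,\tf$, and the same burn times, I would first record that $\Vecx_\nSamples(t) = \MatPhi(t,\ti)\Vecx\subinit + \MatPhi_v(t,\{\BurnTime_i\}_i)\MatDeltaV_\nSamples$ and $\Vecx^\dagger(t) = \MatPhi(t,\ti)\Vecx\subinit + \MatPhi_v(t,\{\BurnTime_i\}_i)\MatDeltaV^\dagger$, with identical matrices $\MatPhi(t,\ti)$ and $\MatPhi_v(t,\{\BurnTime_i\}_i)$ in both (these depend only on $t$ and the fixed burn epochs, not on the impulses).

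Next, fix $\lambda \in [0,1]$ and set $\MatDeltaV \coloneqq \lambda\MatDeltaV_\nSamples + (1-\lambda)\MatDeltaV^\dagger$, noting this is a legitimate stacked impulse vector for burns at the common epochs $\{\BurnTime_i\}_i$. Substituting into \cref{eqn: CWH Impulsive Solution}, using linearity of the map $\MatDeltaV \mapsto \MatPhi_v(t,\{\BurnTime_i\}_i)\MatDeltaV$, and using the partition of unity $\lambda + (1-\lambda) = 1$ to distribute the drift term $\MatPhi(t,\ti)\Vecx\subinit$ across the two summands, I obtain
\begin{equation*}
	\Vecx(t) = \lambda\bigl(\MatPhi(t,\ti)\Vecx\subinit + \MatPhi_v\MatDeltaV_\nSamples\bigr) + (1-\lambda)\bigl(\MatPhi(t,\ti)\Vecx\subinit + \MatPhi_v\MatDeltaV^\dagger\bigr) = \lambda\Vecx_\nSamples(t) + (1-\lambda)\Vecx^\dagger(t),
\end{equation*}
valid for every $t \in [\ti,\tf]$, which is precisely the asserted convex combination of the two trajectories.

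It then remains to check that $\Vecx(t)$ solves the steering problem \cref{eqn: CWH 2PBVP}. This follows in either of two equivalent ways: evaluating the convex-combination identity at $t = \ti$ yields $\Vecx(\ti) = \lambda\Vecx\subinit + (1-\lambda)\Vecx\subinit = \Vecx\subinit$, and at $t = \tf$ yields $\Vecx(\tf) = \lambda\Vecx\subgoal + (1-\lambda)\Vecx\subgoal = \Vecx\subgoal$; alternatively, the dynamics/boundary-condition relation $\MatPhi_v(\tf,\{\BurnTime_i\}_i)\MatDeltaV = \Vecx\subgoal - \MatPhi(\tf,\ti)\Vecx\subinit$ is affine in $\MatDeltaV$, so since $\MatDeltaV_\nSamples$ and $\MatDeltaV^\dagger$ each satisfy it, every affine combination of them does too — in particular the convex one.

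I do not expect a genuine analytic obstacle; the argument is entirely the LTI superposition principle. The one point deserving a word of care is bookkeeping: the combination $\lambda\MatDeltaV_\nSamples + (1-\lambda)\MatDeltaV^\dagger$ is meaningful as an impulsive control only if the two stacked vectors are indexed over a \emph{common} list of burn times, which is exactly why \cref{eqn: CWH Solution Used for Traj Smoothing} is posed with $\NBurns$, $\tf$, and $\{\BurnTime_i\}_i$ frozen at their \FMTstar values; I would state this assumption explicitly at the outset. If desired, one could also remark that the per-burn magnitude bounds $\norm[\pvalue]{\VecDeltaV_i} \le \VecDeltaVMag\submax$ are preserved along the combination by convexity of the norm (triangle inequality), although this is not needed for the stated claim, which concerns only dynamic feasibility and the boundary conditions.
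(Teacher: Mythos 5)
Your proof is correct and takes essentially the same route as the paper's: both write the state via the impulsive solution as an affine map of the stacked impulse vector, split the drift term using $\lambda + (1-\lambda) = 1$ to exhibit $\Vecx(t)$ as the convex combination of $\Vecx_\nSamples(t)$ and $\Vecx^\dagger(t)$, and then verify the boundary conditions by evaluating at the endpoints. Your added remarks on the common burn-time indexing and the preservation of the norm bounds are sensible but not needed for the stated claim.
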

\begin{proof}
	Let $\MatDeltaV = \SmoothingWeight \MatDeltaV_\nSamples + \left(1 - \SmoothingWeight\right) \MatDeltaV^\dagger$ for some value $\SmoothingWeight \in \closedinterval{0}{1}$.  From our dynamics equation,
	\begin{align*}
		\Vecx(t) &= \MatPhi\left(t,t\subinit\right) \Vecx\subinit + \MatPhi_v\left(t,\left\{\BurnTime_i\right\}_i\right) \MatDeltaV
			\\ &= \left[\SmoothingWeight + \left(1-\SmoothingWeight\right)\right] \MatPhi\left(t,t\subinit\right) \Vecx\subinit
				+ \MatPhi_v\left(t,\left\{\BurnTime_i\right\}_i\right) \left[ \SmoothingWeight \MatDeltaV_n + \left(1 - \SmoothingWeight\right) \MatDeltaV^\dagger \right]
			\\ &= \SmoothingWeight \left[ \MatPhi\left(t,t\subinit\right) \Vecx\subinit + \MatPhi_v\left(t,\left\{\BurnTime_i\right\}_i\right) \MatDeltaV_n \right]
				+ \left(1-\SmoothingWeight\right) \left[ \MatPhi\left(t,t\subinit\right) \Vecx\subnaught + \MatPhi_v\left(t,\left\{\BurnTime_i\right\}_i\right) \MatDeltaV^\dagger \right]
			\\ &= \SmoothingWeight \Vecx_n(t) + \left(1-\SmoothingWeight\right) \Vecx^\dagger(t)
	\end{align*}
	which is a convex combination, as required.  Substituting $t = t\subinit$ or $t = t\subgoal$, we see that $\Vecx(t)$ satisfies the boundary conditions given that $\Vecx_\nSamples(t)$ and $\Vecx^\dagger(t)$ do.  This completes the proof.
\end{proof}
\nomenclature[B01b ]{$\SmoothingWeight$}{Weighting factor used during trajectory smoothing}%
We take advantage of this fact for trajectory-smoothing.  Our algorithm, reported as \cref{alg: CWH Traj Smoothing} and illustrated in \cref{subfig: CWH Post-Processing Smoothing}, computes the approximate unconstrained minimum-\fuel solution $\Vecx^\dagger(t)$ and returns it (if feasible) or otherwise conducts a bisection line search on $\SmoothingWeight$, returning a convex combination of our original planning solution $\Vecx_\nSamples(t)$ and $\Vecx^\dagger(t)$ that comes as close to $\Vecx^\dagger(t)$ as possible without violating trajectory constraints.  Note because $\MatDeltaV_\nSamples$ lies in the feasible set of \cref{eqn: CWH Solution Used for Traj Smoothing}, the algorithm can only improve the final \fuel cost.  By design, \cref{alg: CWH Traj Smoothing} is geared towards reducing our original solution \fuelCost as quickly as possible while maintaining feasibility;  
the most expensive computational components are the calculation of $\MatDeltaV^\dagger$ and collision-checking (consistent with our sampling-based algorithm).  Fortunately, the number of collision-checks is limited by the maximum number of iterations $\ceil{\log_2\left(\frac{1}{\SmoothingTolerance}\right)} + 1$, given tolerance $\SmoothingTolerance \in \openinterval{0}{1}$.  As an added bonus, for strictly time-constrained applications that require a solution in a fixed amount of time, the algorithm can be easily modified to return the $\SmoothingWeight_{\ell}$-weighted trajectory $\Vecx\subsmooth(t)$ when time runs out, as the feasibility of this trajectory is maintained as an algorithm invariant.

\begin{figure}
	\begin{algorithm}[H]
		\caption{``Trajectory smoothing'' algorithm for impulsive CWH dynamics.  Given a trajectory $\Vecx_\nSamples(t), t \in \closedinterval{t\subinit}{t\subgoal}$ between initial and goal states $\Vecx\subinit$ and $\Vecx\subgoal$ satisfying \cref{eqn: CWH Optimal Steering} with impulses $\MatDeltaV_\nSamples$ applied at times $\left\{\BurnTime_i\right\}_i$, returns another feasible trajectory with reduced 2-norm \fuelCost.}
		\label{alg: CWH Traj Smoothing}
		\begin{algorithmic}[1]
			\State Initialize the smoothed trajectory $\Vecx\subsmooth(t)$ as $\Vecx_\nSamples(t)$, with $\MatDeltaV\subsmooth = \MatDeltaV_\nSamples$
			\State Compute the unconstrained optimal control vector $\MatDeltaV^\dagger$ by solving \cref{eqn: CWH Solution Used for Traj Smoothing}
			\State Compute the unconstrained optimal state trajectory $\Vecx^\dagger(t)$ using \cref{eqn: CWH Impulsive Solution} \emph{(See \cref{sec:sysDyn})}
			\State Initialize weight $\SmoothingWeight$ and its lower and upper bounds as $\SmoothingWeight \leftarrow 1$, $\SmoothingWeight_{\ell} \leftarrow 0$, $\SmoothingWeight_u \leftarrow 1$
			\While{true}
				\State $\Vecx(t) \leftarrow \left(1 - \SmoothingWeight\right) \Vecx_\nSamples(t) + \SmoothingWeight \Vecx^\dagger(t)$
				\State $\MatDeltaV \leftarrow \left(1 - \SmoothingWeight\right) \MatDeltaV_\nSamples + \SmoothingWeight \MatDeltaV^\dagger$
				\If{\Call{CollisionFree}{$\Vecx(t), \MatDeltaV, t$}}
					\State $\SmoothingWeight_{\ell} \leftarrow \SmoothingWeight$
					\State Save the smoothed trajectory $\Vecx\subsmooth(t)$ as $\Vecx(t)$ and control $\MatDeltaV\subsmooth$ as $\MatDeltaV$
				\Else
					\State $\SmoothingWeight_u \leftarrow \SmoothingWeight$
				\EndIf
				\If{$\SmoothingWeight_u - \SmoothingWeight_{\ell}$ is less than tolerance $\SmoothingTolerance \in \openinterval{0}{1}$}
					\State \textbf{break}
				\EndIf
				\State $\SmoothingWeight \leftarrow \sidefrac{\left(\SmoothingWeight_{\ell} + \SmoothingWeight_u\right)}{2}$
			\EndWhile
			\State \Return the smoothed trajectory $\Vecx\subsmooth(t)$, with $\MatDeltaV\subsmooth$ 
		\end{algorithmic}
	\end{algorithm}
\end{figure}
\nomenclature[B01b']{$\SmoothingTolerance$}{Trajectory smoothing convergence tolerance}%
\nomenclature[B01b#]{$\SmoothingWeight_\ell$, $\SmoothingWeight_u$}{Weighting factor bounds during trajectory smoothing}%

\section{Numerical Experiments}
\label{sec: Experiments}

Consider the two scenarios shown in \cref{fig: CWH Motion Planning Query}, here modeling near-field approaches of a chaser spacecraft in close proximity \footnote{\emph{Close proximity} in this context implies that any higher-order terms of the linearized relative dynamics are negligible, \eg within a few percent of the target orbit mean radius.} to a target moving in a circular LEO trajectory (as in \cref{fig: CWH Planning}).  We imagine the chaser, which starts in a circular orbit of lower radius, must be repositioned through a sequence of pre-specified CWH waypoints (\eg required for equipment checks, surveying, \etc) to a coplanar position located radially above the target, arriving with zero relative velocity in preparation for a final radial (``R-bar'') approach.  Throughout the maneuver, as described in detail in \cref{sec: Problem}, the chaser must avoid entering the elliptic target KOZ, enforce hard safety constraints with regard to a two-fault tolerance to stuck-off thruster failures, and otherwise avoid interfering with the target.  This includes avoiding the target's nadir-pointing communication lobes (represented by truncated half-cones), and preventing exhaust plume impingement on its surfaces.  For context, we use the Landsat-7 spacecraft and orbit as a reference \cite[3.2]{SNG-JGM-DLW-JRI-RJT:01} (see \cref{fig: LandSat-7 Mission Scenario}).

\begin{figure}
	\subcaptionbox{%
		\label{subfig: LandSat-7 Schematic}
		Landsat-7 schematic (Nadir (-$\posCrossTrack$ direction) points down, while the in-track ($+\posInTrack$) direction points left).
	}[0.55\columnwidth]{%
		\includegraphics[height=0.18\textheight]{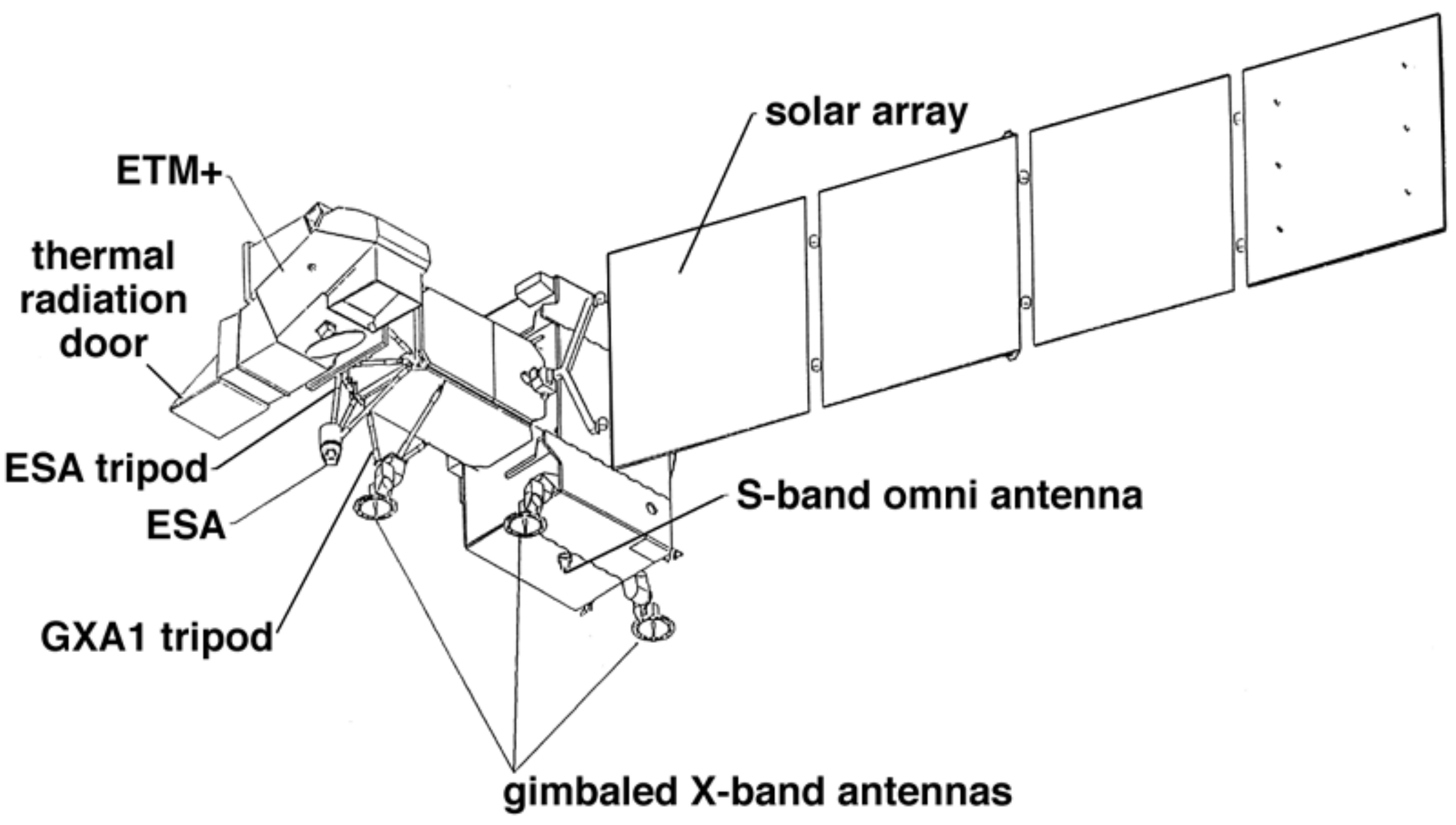}
	}
	\hspace{0.01\columnwidth}
	\subcaptionbox{%
		\label{subfig: LandSat-7 Orbit}
		Landsat-7 orbit (Courtesy of the \href{http://landsathandbook.gsfc.nasa.gov/orbit_coverage/}{Landsat-7 Handbook}).
	}[0.40\columnwidth]{%
		\includegraphics[height=0.18\textheight]{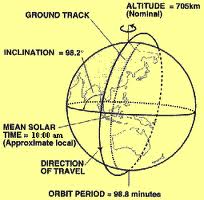}
	}
	\caption{Target spacecraft geometry and orbital scenario used in numerical experiments.}
	\label{fig: LandSat-7 Mission Scenario}
\end{figure}

\begin{figure}
	\subcaptionbox{%
		\label{subfig: CWH Planar Motion Planning Query.}
		Planar motion planning query
	}[0.36\columnwidth]{%
		\includegraphics[height=0.22\textheight]{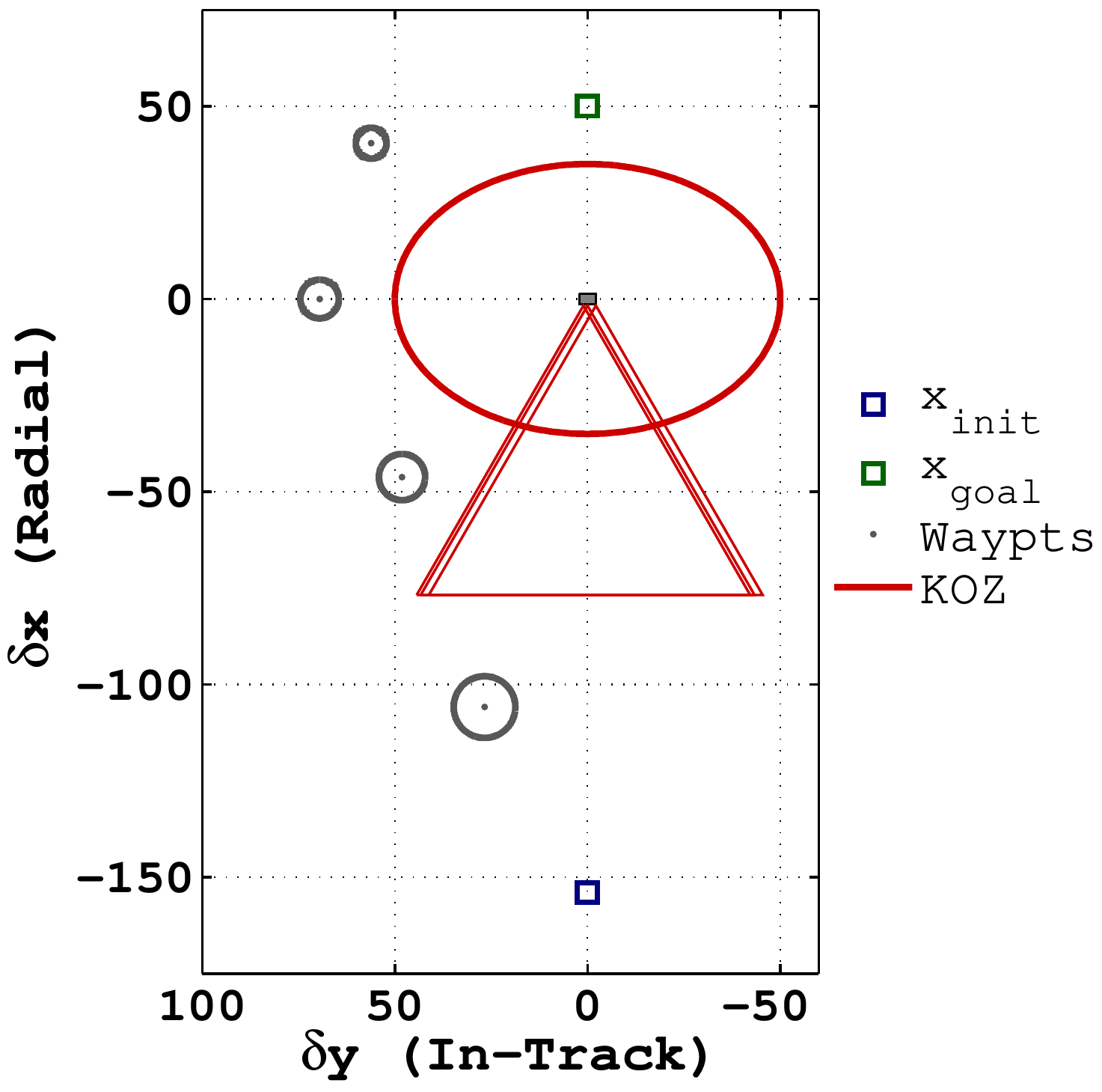}
	}
	\hspace{0.05\columnwidth}
	\subcaptionbox{%
		\label{subfig: CWH 3D Motion Planning Query}
		3-dimensional motion planning query.
	}[0.51\columnwidth]{%
		\includegraphics[height=0.22\textheight]{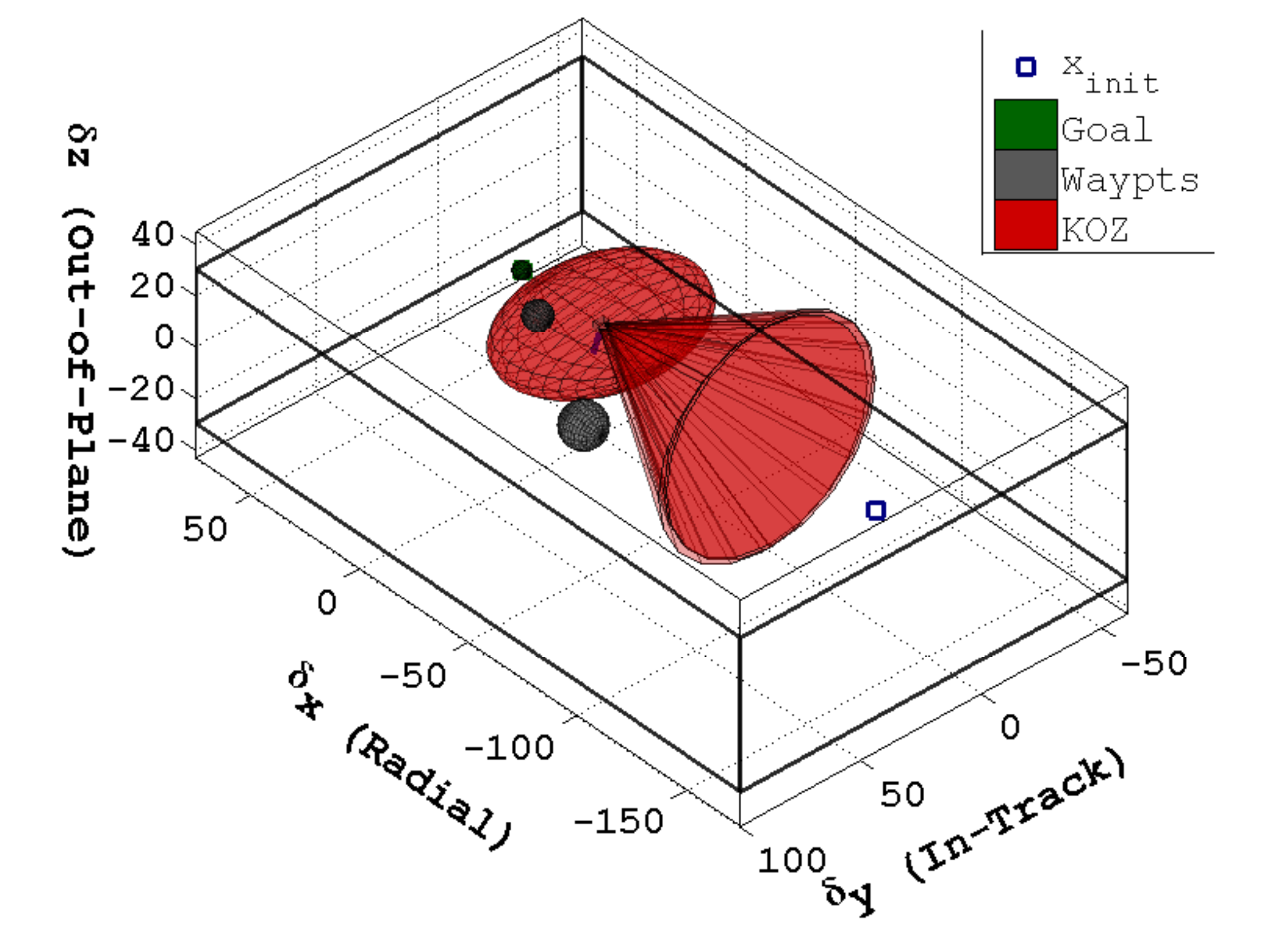}
	}
	\caption{Illustrations of the planar and 3D motion plan queries in the LVLH frame.  \iftoggle{AIAAjournal}{}{The spacecraft must track a series of guidance waypoints to the final goal state, which is located radially above the client.  Positional goal tolerances are visualized as circles around each waypoint, which successively shrink in size.}}
	\label{fig: CWH Motion Planning Query}
\end{figure}

\begin{figure}
	\subcaptionbox{%
		\label{subfig: Chaser spacecraft}
		Chaser spacecraft
	}[0.45\columnwidth]{%
		\includegraphics[height=0.2\textheight]{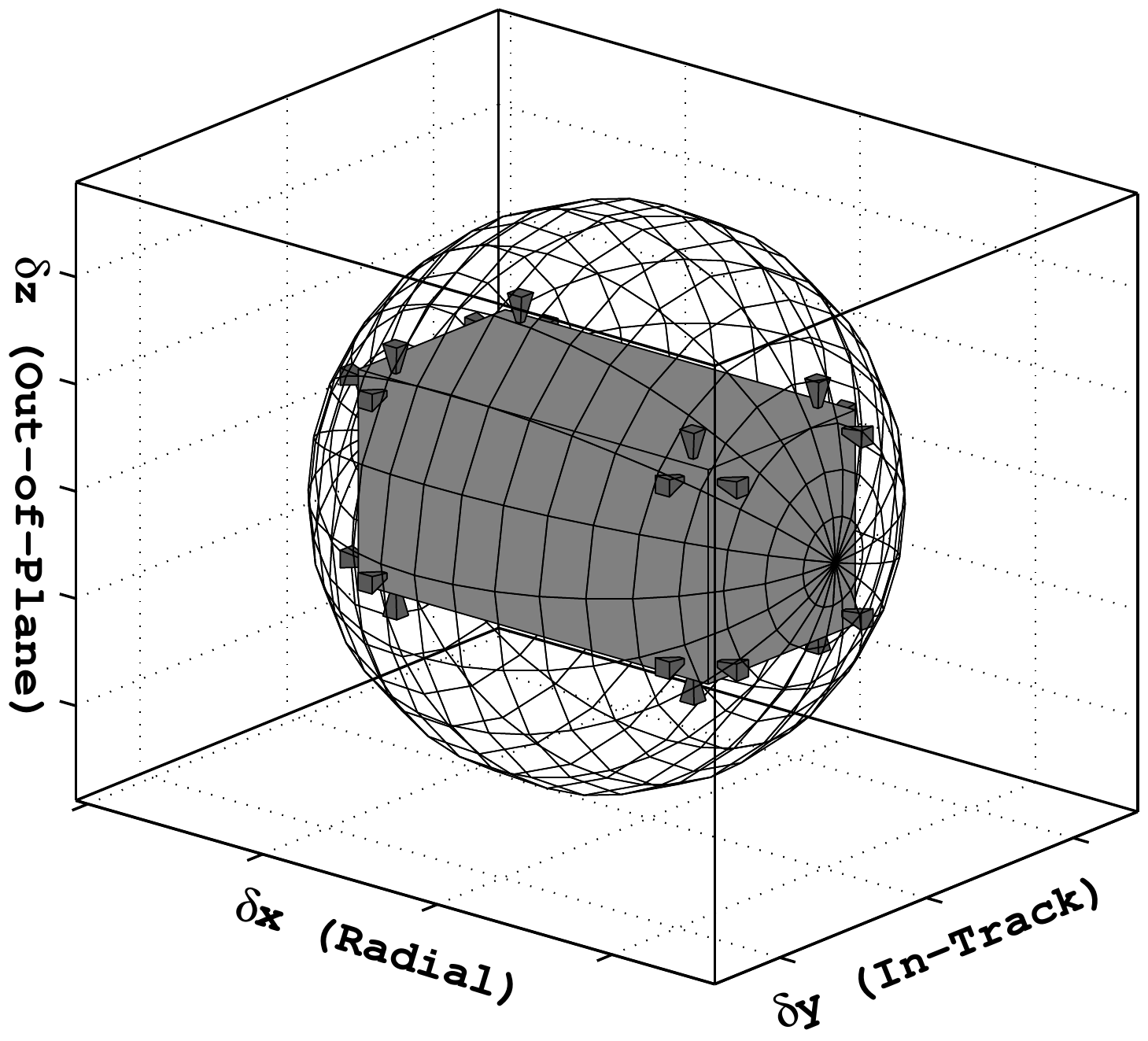}
	}
	\hspace{0.01\columnwidth}
	\subcaptionbox{%
		\label{subfig: Target spacecraft}
		Target spacecraft
	}[0.45\columnwidth]{%
		\includegraphics[height=0.2\textheight]{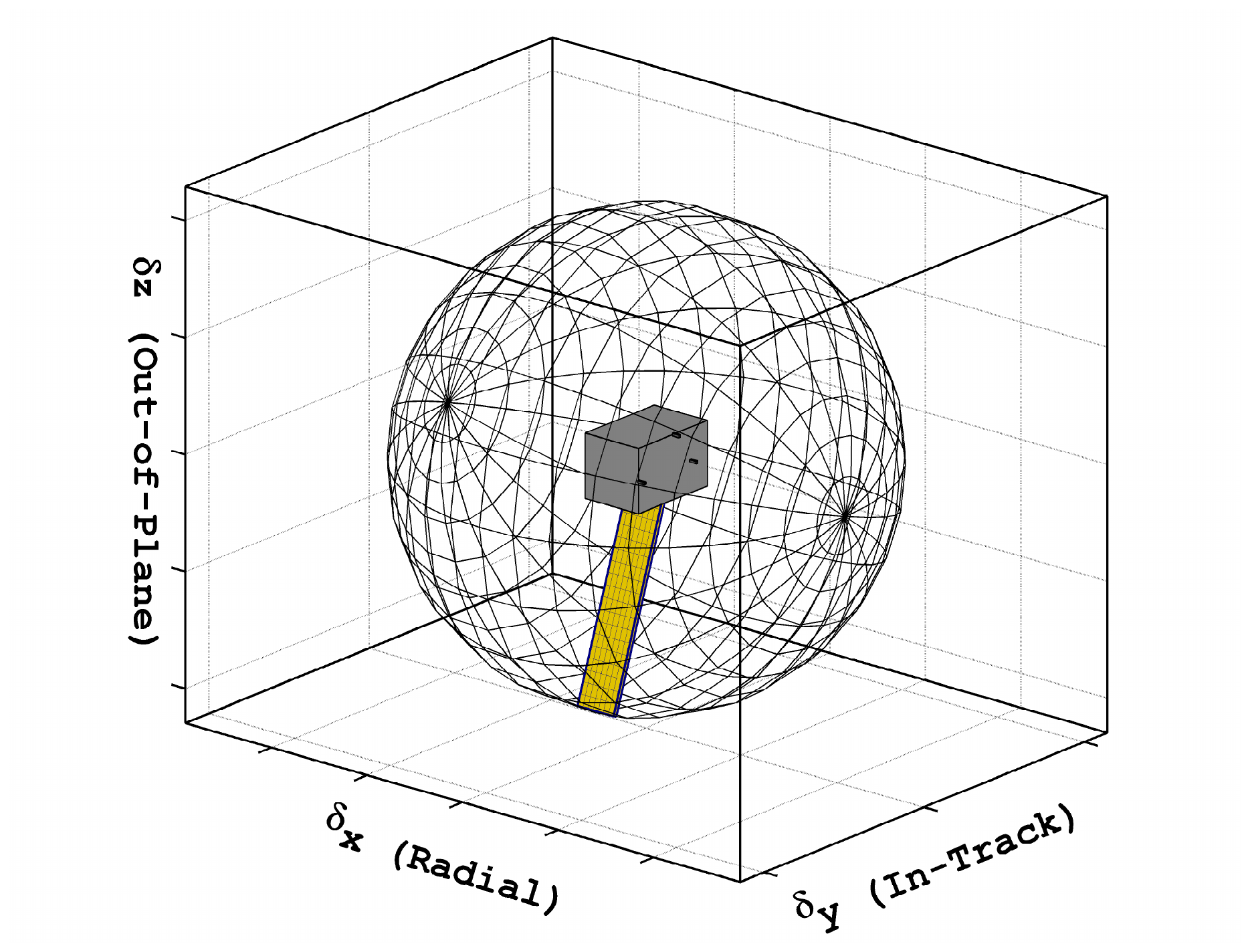}
	}
	\caption{Schematics of the chaser and target, together with their circumscribing spheres.} 
	\label{fig: Spacecraft Geometries}
\end{figure}

If we take the waypoints in the guidance sequence one at a time as individual goal points $\Vecx\subgoal$, we can solve the given scenario as a series of motion planning problems (or ``subplans''), calling \FMTstar from \cref{sec: Approach} once for each instance, linking them together to form an overall solution to the problem.  As our steering controller from \cref{subsec: steering} is attitude-independent, states $\Vecx \in {\reals}^\Dimension$ will be either $\VecxTranspose = \left[\begin{smallmatrix} \posCrossTrack & \posInTrack & \velCrossTrack & \velInTrack \end{smallmatrix}\right]$ with $\Dimension = 4$ (planar case) or $\VecxTranspose = \left[\begin{smallmatrix} \posCrossTrack & \posInTrack & \posOutofPlane & \velCrossTrack & \velInTrack & \velOutofPlane \end{smallmatrix}\right]$ with $\Dimension = 6$ (non-planar case).  We omit the attitude $\Vecq$ from the state during planning by assuming the existence of an attitude policy (as well as a stable attitude-trajectory-following controller) that produces $\Vecq(t)$ from the state trajectory $\Vecx(t)$; for illustration purposes, a simple nadir-pointing attitude profile is chosen to represent a mission that requires constant communication with the ground throughout the maneuver (note this is not enforced along actively-safe escape trajectories, which for each failure mode execute a simple ``turn-burn-turn'' policy that orients the closest available thruster as quickly as possible in the direction required to implement the necessary circularization burn\iftoggle{AIAAjournal}{}{-- see \cite{JS-BB-MP:15} for full details}).
Given the hyper-rectangular shape of the state-space, we call upon the deterministic, low-dispersion $d$-dimensional Halton sequence \cite{JHH:60} to sample positions and velocities.  To improve sample densities, each subplan uses its own sample space defined around only its respective initial and goal waypoints, with some arbitrary threshold space added around them.  Additionally, extra samples $\nSamples\subgoal$ are taken inside each waypoint ball to facilitate convergence.  For this multi-plan problem, we define the solution cost as the sum of individual subplan costs (if using trajectory smoothing, the endpoints between two plans will be merged identically to two edges within a plan, as described in \cref{sec: Smoothing/Robustness}).

Before we proceed to the results, we make note of a few implementation details.  First, for clarity we list the simulation parameters used in \cref{table: Simulation Parameters}.  
Second, all position-related constraint-checking regard the chaser spacecraft as a point at its center of mass, with all other obstacles artificially inflated by the radius of its circumscribing sphere.  Third and finally, all trajectory collision-checking is implemented by point-wise evaluation with a fixed time-step resolution $\Delta t$, using the analytic state transition equations 
\cref{eqn: CWH Impulsive Solution}
together with steering solutions from \cref{subsec: steering} to propagate graph edges; for speed, the line segments between points are excluded.  Except near very sharp obstacle corners, this approximation is generally not a problem in practice (obstacles can always be inflated further to account for this).  To improve performance, each obstacle primitive (ellipsoid, right-circular cone, hypercube, \etc) employs hierarchical collision-checking using hyper-spherical and/or hyper-rectangular bounding volumes to quickly prune points from consideration.

\begin{table}
	\centering
	\caption{List of parameters used during numerical experiments.}
	\label{table: Simulation Parameters}
	\begin{tabularx}{0.75\columnwidth}{l >{\raggedright\arraybackslash}X}
		\toprule
		Chaser plume half-angle, $\PlumeHalfAngle$ & $10\degrees$
		\\ Chaser plume height, $\PlumeHeight$ & 16 m
		\\ Chaser thruster fault tolerance, $\FaultTolerance$ & 2
		\\ Cost threshold, $\CostThreshold$ & 0.1--0.4 m/s
		\\ Dimension, $\Dimension$ & 4
		\\ Goal sample count, $\nSamples\subgoal$ & $0.04\nSamples$
		\\ Goal position tolerance, $\GoalPosTol$ & 3--8 m
		\\ Goal velocity tolerance, $\GoalVelTol$ & 0.1--0.5 m/s
		\\ Max.\ allocated thruster $\DeltaV$ magnitude, $\AllocatedThrustMax_{k}$ & $\infty$ m/s
		\\ Max.\ commanded $\DeltaV$-vector magnitude $\norm[]{\VecDeltaV_i}$, $\DeltaV\submax$ & $\infty$ m/s
		\\ Max.\ plan duration, $\TotalManeuverDuration\submax$ & $\infty$ s
		\\ Min.\ plan duration, $\TotalManeuverDuration\submin$ & 0 s
		\\ Max.\ steering maneuver duration, $\ManeuverDuration\submax$ & $0.1 \cdot \left(\sidefrac{2\pi}{\omegaRef}\right)$
		\\ Min.\ steering maneuver duration, $\ManeuverDuration\submin$ & 0 s
		\\ Sample count, $\nSamples$ & 50--400 per plan 
		\\ Simulation timestep, $\Delta t$ & $0.0005 \cdot \left(\sidefrac{2\pi}{\omegaRef}\right)$
		\\ Target antenna lobe height & 75 m
		\\ Target antenna beamwidth & $60\degrees$
		\\ Target KOZ semi-axes, $\left[\semiaxis_{\posCrossTrack}, \semiaxis_{\posInTrack}, \semiaxis_{\posOutofPlane}\right]$ & $\begin{bmatrix} 35 & 50 & 15 \end{bmatrix}$ m
		\\ \bottomrule
	\end{tabularx}
\end{table}

\subsection{Planar Motion Planning Solution}
A representative solution to the posed planning scenario, both with and without the trajectory smoothing algorithm (\cref{alg: CWH Traj Smoothing}), is shown in \cref{fig: Solution}.  
As shown, the planner successfully finds safe trajectories within each subplan, which are afterwards linked to form an overall solution.  The state space of the first subplan shown at the bottom is essentially obstacle-free, as the chaser at this point is too far away from the target for plume impingement to come into play.  This means every edge connection attempted here is added; so the first subplan illustrates well a discrete subset of the reachable states around $\Vecx\subinit$ and the unrestrained growth of \FMTstar.  As the second subplan is reached, the effects of the Keep-Out-Zone position constraints come in to play, and we see edges begin to take more leftward loops.  In subplans 3 and 4, plume impingement begins to play a role.  Finally, in subplan 5 at the top, where it becomes very cheap to move between states (as the spacecraft can simply coast to the right for free), we see the initial state connecting to nearly every sample in the subspace, resulting in a straight shot to the final goal.  As is evident, straight-line path planning would not approximate these trajectories well, particularly near coasting arcs which our dynamics allow the spacecraft to transfer to for free.

\begin{figure}
	\centering\includegraphics[width=0.45\columnwidth,trim=1.5cm 1.75cm 0cm 3.15cm, clip]{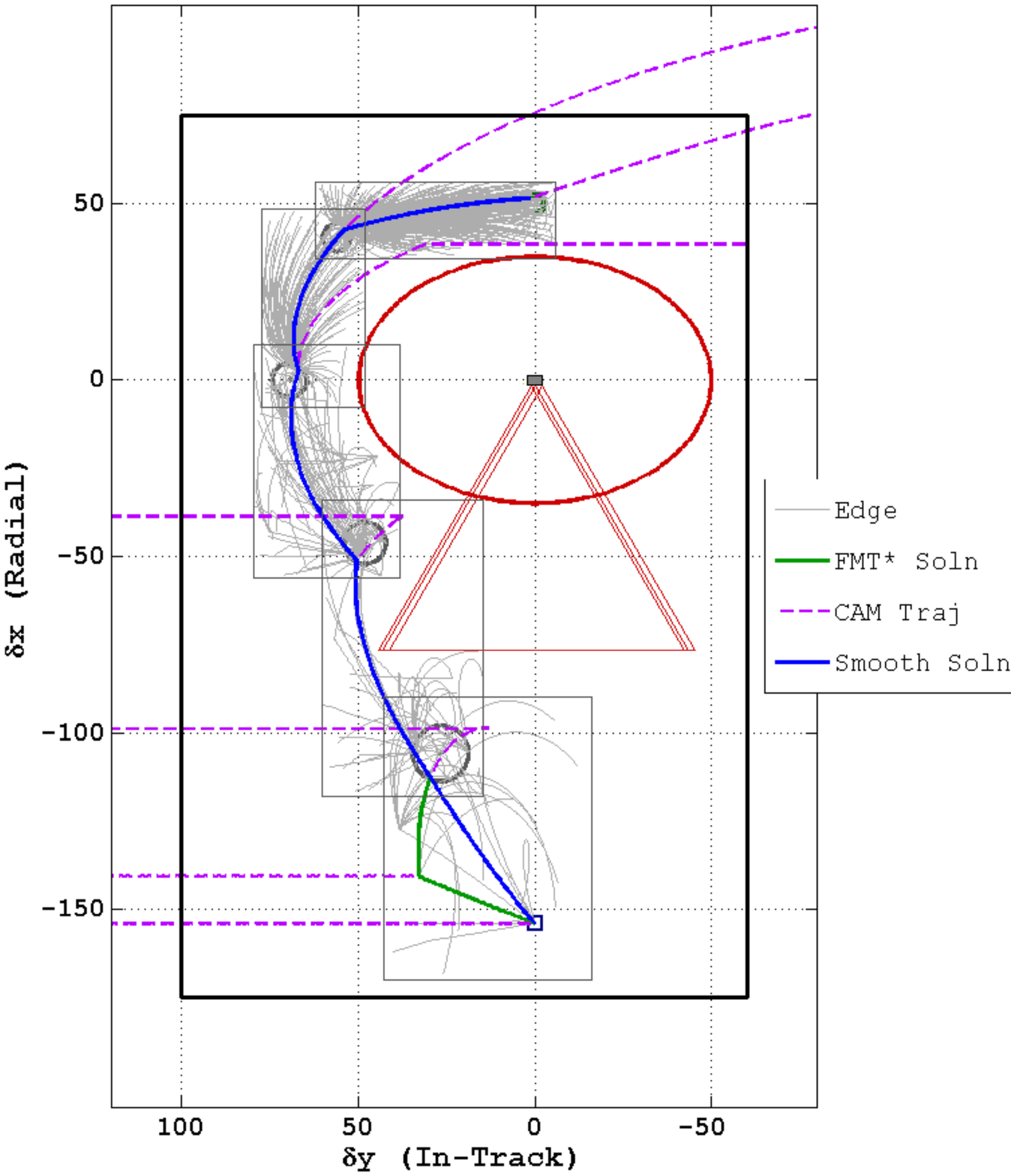}
	\caption{Representative planar motion planning solution using the \FMTstar algorithm (\cref{alg: FMTstar}) with $\nSamples = 2000$ (400 per subplan), $\CostThreshold = 0.3$ m/s, and relaxed waypoint convergence.  \iftoggle{AIAAjournal}{}{The output from \FMTstar is shown in green, while the trajectory combined with post-processing smoothing is shown in blue.  Explored trajectories found to be safe are shown in grey.  Actively-safe minimum-\fuel abort trajectories are shown as purple dashed lines (one for each burn $\VecDeltaV_i$ along the trajectory).}}
	\label{fig: Solution}
\end{figure}

To understand the smoothing process, examine \cref{fig: Smoothing}.  Here we see how the discrete trajectory sequence from our sampling-based algorithm may be smoothly and continuously deformed towards the unconstrained minimal-\fuel trajectory (as outlined in \cref{sec: Smoothing/Robustness}) until it meets trajectory constraints; if these constraints happen to be inactive, then the exact minimal-\fuel trajectory is returned, as \cref{subfig: Traj Smoothing Before/After} shows.  This computational approach is generally quite fast, assuming a well-implemented convex solver is used, as will be seen in the results of the next subsection.

\begin{figure}
	\subcaptionbox{%
		\label{subfig: Traj Smoothing Before/After}
		Paths before and after smoothing ($\nSamples = 2000$, $\CostThreshold = 0.2$ m/s, exact waypoint convergence).
	}[0.30\columnwidth]{%
		\includegraphics[width=0.95\columnwidth,trim=1.45cm 3.0cm 3.45cm 4.90cm, clip]{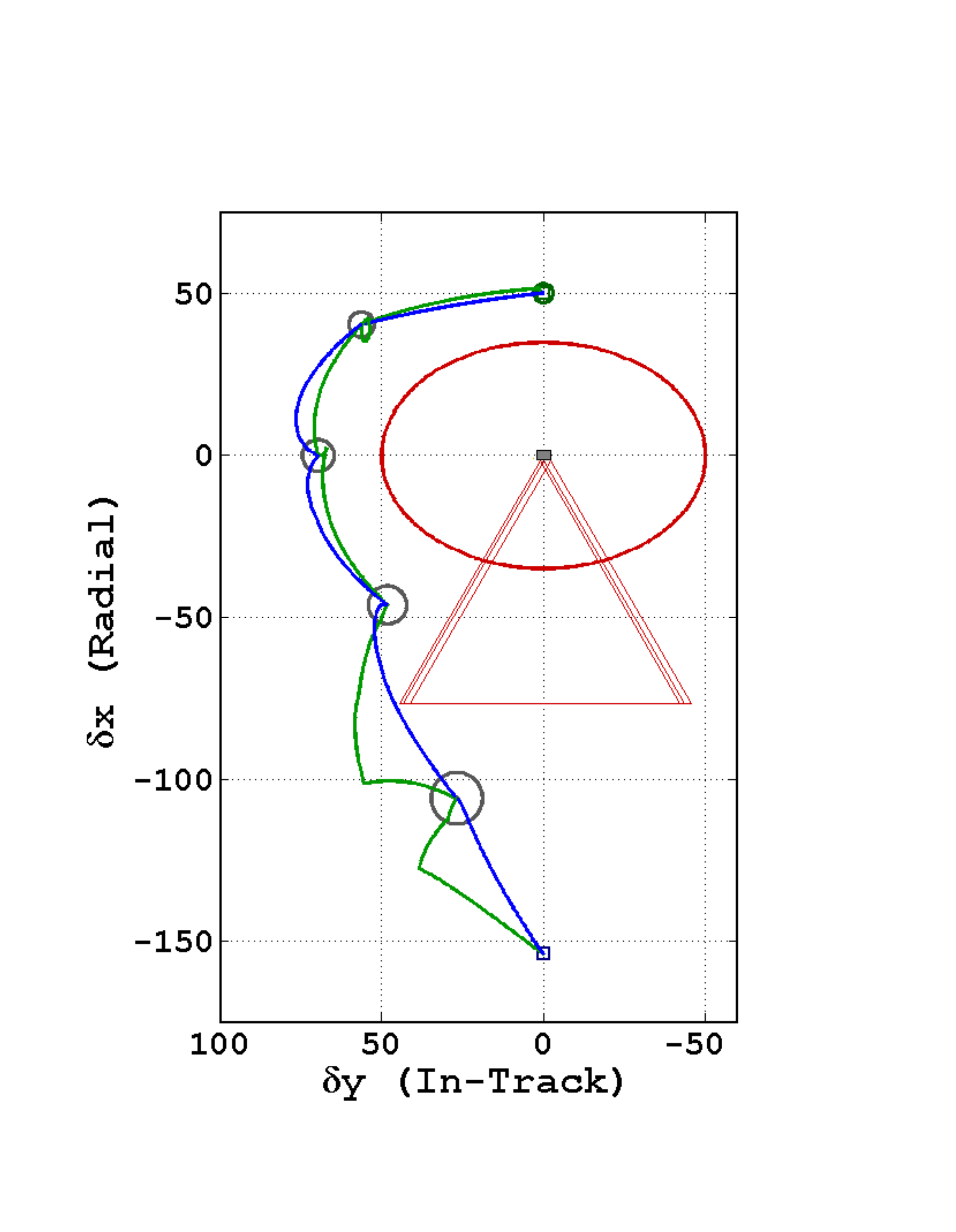}
	}
	\hspace{3em}
	\subcaptionbox{%
		\label{subfig: Traj Smoothing Iterations}
		Smoothing algorithm iterations ($\nSamples = 1500$, $\CostThreshold = 0.3$ m/s, inexact waypoint convergence).
	}[0.30\columnwidth]{%
		\includegraphics[width=0.95\columnwidth,trim=1cm 0cm 0cm 0cm, clip]{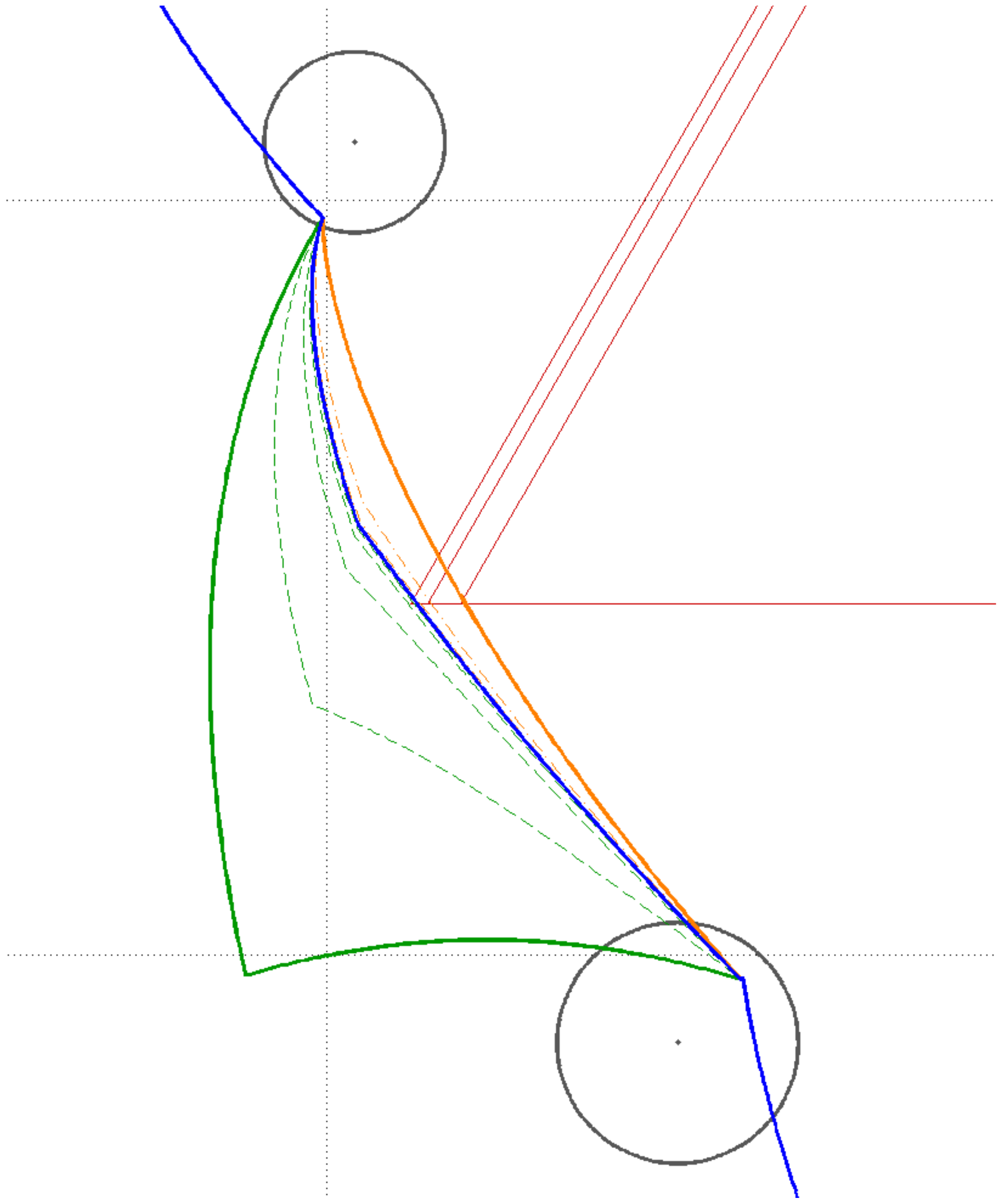}
	}
	\caption{Visualizing trajectory smoothing (\cref{alg: CWH Traj Smoothing}) for the solution shown in \cref{fig: Solution}\iftoggle{AIAAjournal}{.}{, zoomed in on the second plan.  The original plan is shown in green (towards the bottom-left), along with various iterates attempted while converging to the smoothed trajectory shown in blue (in the center).  Invalid trajectories, including the lower \fuelCost trajectory used to guide the process, are shown in orange (towards the right).}}
	\label{fig: Smoothing}
\end{figure}

The 2-norm $\DeltaV$ costs of the two reported trajectories in this example come to 0.835 m/s (unsmoothed) and 0.811 m/s (smoothed).  Compare this to 0.641 m/s, the cost of the unconstrained direct solution that intercepts each of the goal waypoints on its way to rendezvousing with $\Vecx\subgoal$ (this trajectory exits the state-space along the positive in-track direction, a violation of our proposed mission; hence its cost represents an under-approximation to the true optimal cost $\CostFcn\supopt$ of the constrained problem).  This suggests that our solutions are quite close to the constrained optimum, and certainly on the right order of magnitude.  Particularly with the addition of smoothing at lower sample counts, the approach appears to be a viable one for spacecraft planning.

If we compare the 2-norm $\DeltaV$ costs to the actual measured \fuel consumption given by the sum total of all allocated thruster $\DeltaV$ magnitudes, which equal 1.06 m/s (unsmoothed) and 1.01 m/s (smoothed) respectively, we find increases of 27.0\% and 24.5\%; as expected, our 2-norm cost metric under-approximates the true \fuel cost.  For point-masses with isotropic control authority (\eg a steerable or gimbaled thruster that is able to point freely in any direction), our cost metric would be exact.  However, for our distributed attitude-dependent propulsion system (see \cref{subfig: Chaser spacecraft}), it is clearly a reasonable proxy for allocated \fuel use, returning values on the same order of magnitude.  Though we cannot make a strong statement about our proximity to the \fuel-optimal solution without directly optimizing over thruster $\DeltaV$ allocations, our solution clearly seems to promote low \fuel consumption.  

\subsection{Non-Planar Motion Planning Solution}
For the non-planar case, representative smoothed and unsmoothed \FMTstar solutions can be found in \cref{fig: Non-Planar Solution}.  Here the spacecraft is required to move out-of-plane to survey the target from above before reaching the final goal position located radially above the target.  The first subplan involves a long reroute around the conical region spanned by the target spacecraft's communication lobes.  Because the chaser begins in a coplanar circular orbit at $\Vecx\subinit$, most steering trajectories require a fairly large cost to maneuver out-of-plane to the first waypoint.  Consequently, relatively few edges are added that both lie in the reachable set of $\Vecx\subinit$ and safely avoid the large conical obstacles.  As we progress to the second and third subplans, the corresponding trees become denser (more steering trajectories are both safe and within our cost threshold $\CostThreshold$) as the state space becomes freer.  Compared with the planar case, the extra degree-of-freedom associated with the out-of-plane dimension appears to allow more edges ahead of the target in the in-track direction than before, likely because now the exhaust plumes generated by the chaser are well out-of-plane from the target spacecraft.  Hence the spacecraft smoothly and tightly curls around the ellipsoidal KOZ to the goal.

The 2-norm $\DeltaV$ costs for this example come to 0.611 m/s (unsmoothed) and 0.422 m/s (smoothed).  
Counter-intuitively, these costs are on the same order of magnitude and slightly cheaper than the planar case; the added freedom given by the out-of-plane dimension appears to outweigh the high costs typically associated with inclination changes and out-of-plane motion.  The 2-norm cost values correspond to total thruster $\DeltaV$ allocation costs of 0.893 m/s and 0.620 m/s, respectively -- increases of 46\% and 47\% above their counterpart cost metric values.  Again, our cost metric appears to be a reasonable proxy for actual \fuel use.


\begin{figure}
	\subcaptionbox{%
	}[0.45\columnwidth]{%
		\includegraphics[width=\columnwidth]{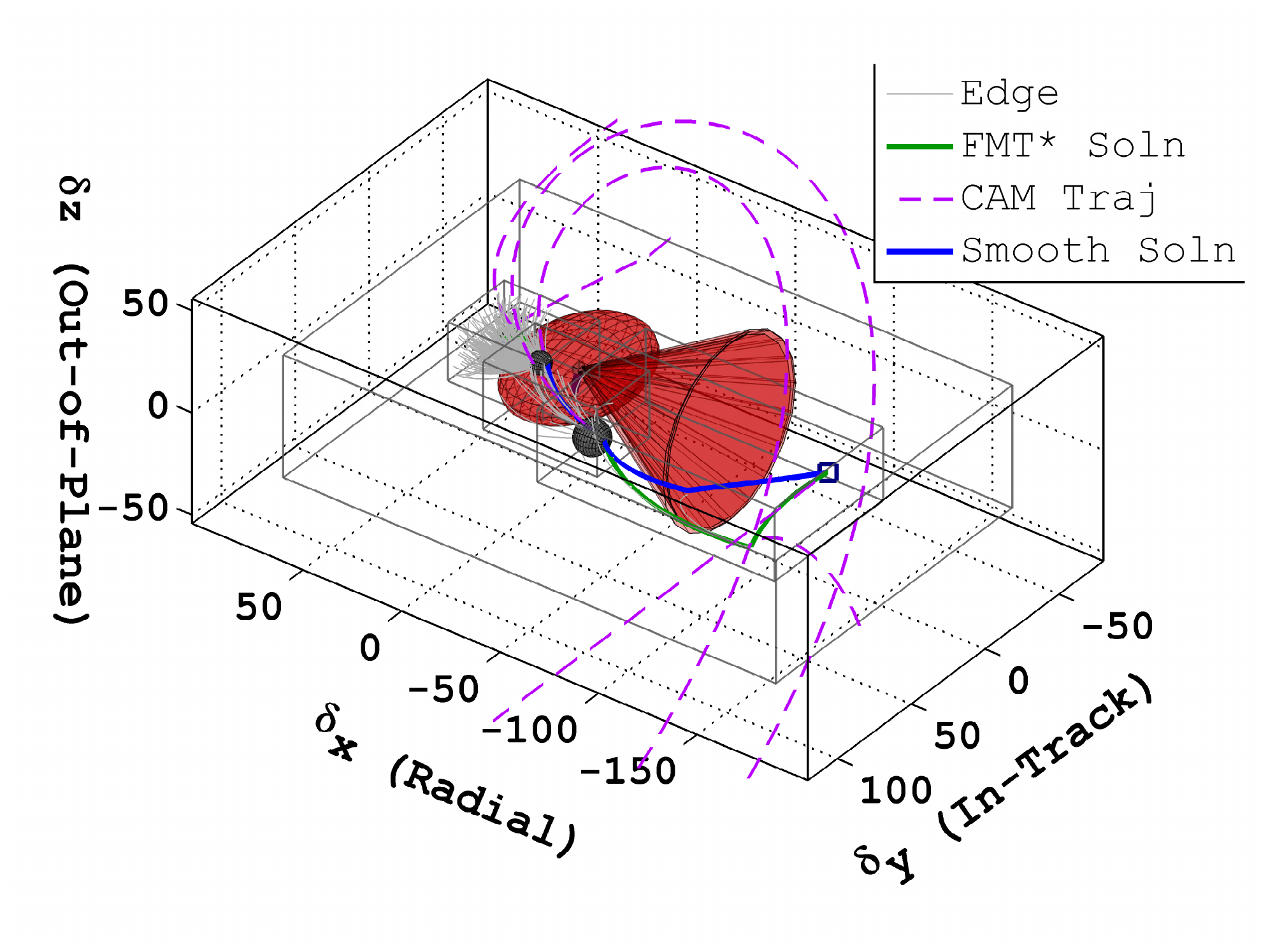}
	}
	\hspace{1em}
	\subcaptionbox{%
	}[0.45\columnwidth]{%
		\includegraphics[width=\columnwidth]{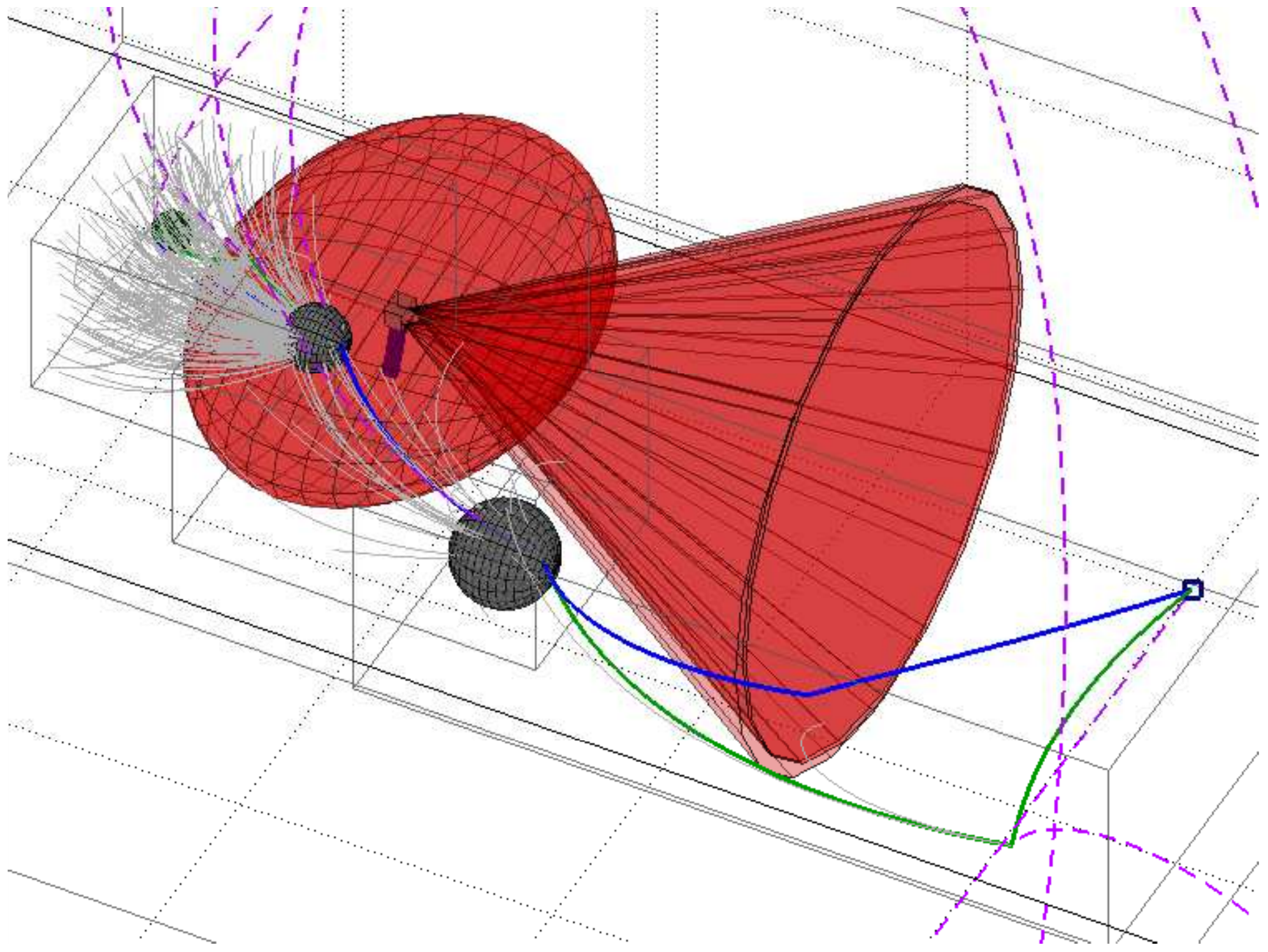}
	}
	\caption{Representative non-planar motion planning solution using the \FMTstar algorithm (\cref{alg: FMTstar}) with $\nSamples = 900$ (300 per subplan), $\CostThreshold = 0.4$ m/s, and relaxed waypoint convergence.  \iftoggle{AIAAjournal}{}{The output from \FMTstar is shown in green, while the trajectory combined with post-processing smoothing is shown in blue.  Explored trajectories found to be safe are shown in grey.  Actively-safe minimum-\fuel abort trajectories are shown as purple dashed lines (one for each burn $\VecDeltaV_i$ along the trajectory).}}
	\label{fig: Non-Planar Solution}
\end{figure} 

\subsection{Performance Evaluation}
To evaluate the performance of our approach, an assessment is necessary of solution quality as a function of planning parameters, most importantly the number of samples $\nSamples$ taken and the reachability set cost threshold $\CostThreshold$.  As proven in \cref{subsec: Theoretical Characterization}, the solution cost will eventually reduce to the optimal value as we increase the sample size $\nSamples$.  Additionally, one can increase the cost threshold $\CostThreshold$ used for nearest-neighbor identification so that more connections are explored.  However, both come at the expense of running time.  To understand the effects of these changes on quality, particularly at finite sample counts where the asymptotic guarantees of \FMTstar do not necessarily imply cost improvements, we measure the cost versus computation time for the planar planning scenario parameterized over several values of each $\nSamples$ and $\CostThreshold$.  

Results are reported in \crefrange{fig: Cost vs Run-time for Varying Jbar}{fig: Cost vs Run-time for Varying Sample Count}.  For a given sequence of sample count/cost threshold pairs, we ran our algorithm in each configuration and recorded the total cost of \emph{successful} runs and their respective run times \footnote{All simulations were implemented in MATLAB 2012b and run on a Windows-operated PC, clocked at 4.00 GHz and equipped with 32.0 GB of RAM.  CVXGEN and CVX \cite{MG-SB:14}, disciplined convex programming solvers, were used to implement $\DeltaV$ allocation and trajectory smoothing, respectively.} 
as measured by wall clock time.  Note that all samples were drawn and their interconnecting steering problems were solved \emph{offline} per our discussion in \cref{subsec: Algorithm}. 
Only the \emph{online} components of each call constitute the run times reported, including running \FMTstar with collision-checking and graph construction, as these are the only elements critical to the real-time implementability of the approach; everything else may be computed offline on ground computers where computation is less restricted, and later uplinked to the spacecraft or stored onboard prior to mission launch.  See \cref{subsec: Algorithm} for details. 
Samples were stored as a $\Dimension \cross \nSamples$ array, while inter-sample steering controls $\VecDeltaV\supopt_i$ and times $\BurnTime_i$ were precomputed as $\nSamples \cross \nSamples$ arrays of $\sidefrac{\Dimension}{2} \cross \NBurns$ and $\NBurns \cross 1$ array elements, respectively.  To reduce memory requirements, steering trajectories $\Vecx\supopt$ and $\Vecq$ where generated online through \cref{eqn: CWH Impulsive Solution} and our nadir-pointing assumption, though in principle they could have easily been stored as well to save additional computation time.

\begin{figure}
	\subcaptionbox{%
		\label{subfig: Cost vs Time for Varying Jbar (Exact)}
		Exact waypoint convergence ($\nSamples = 2000$).
	}[0.45\columnwidth]{%
		\begin{tikzpicture}
			\node[anchor=south west,inner sep=0] (image) {\centering\includegraphics[width=\columnwidth]{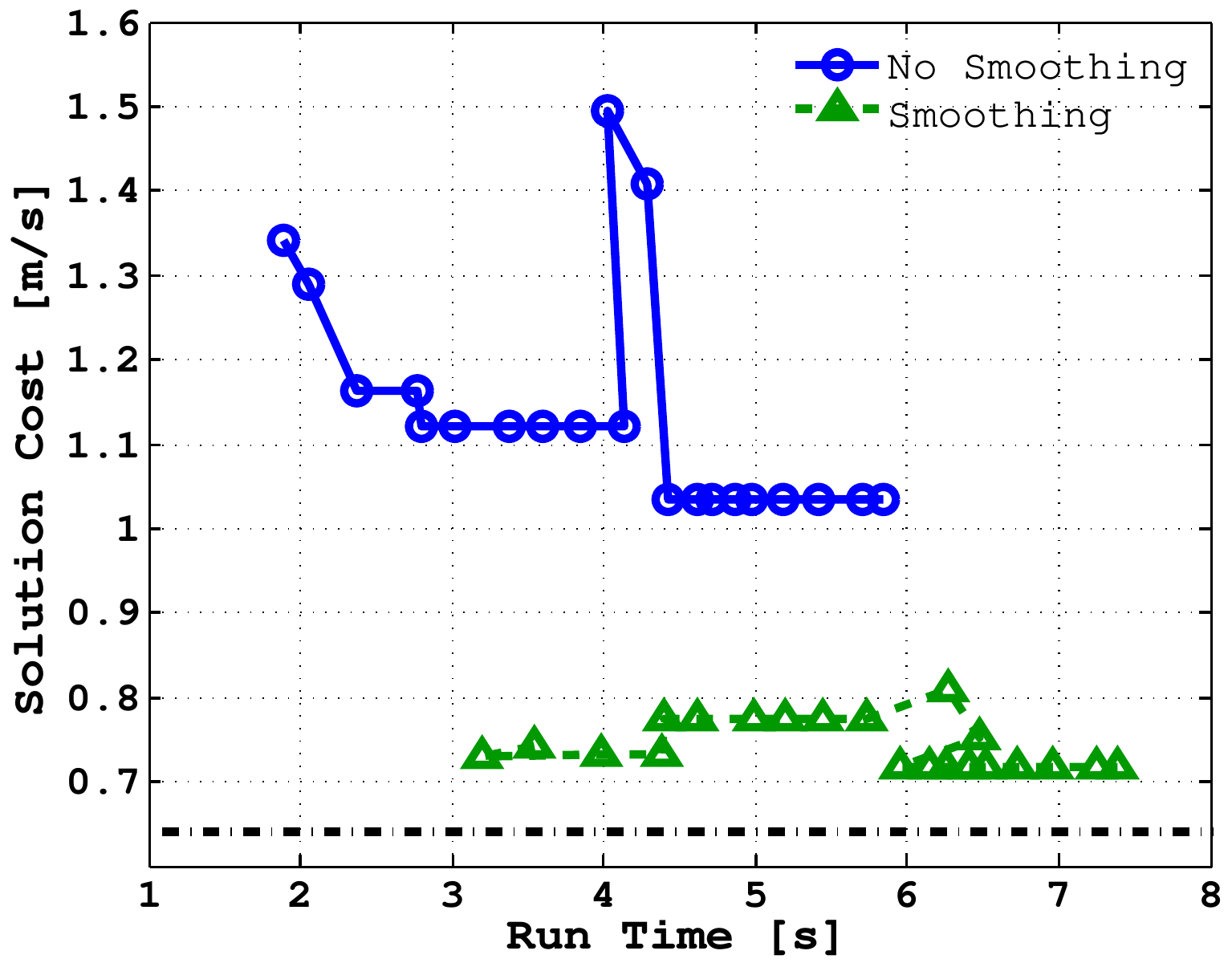}};
			\begin{scope}[x={(image.south east)},y={(image.north west)}]
				\node [anchor=south west] (CostThresholdTrendUnsmoothed) at (0.6,0.545) {\small{$\CostThreshold$ increasing}};
				\node [anchor=south] (CostThresholdTrendSmoothed) at (0.575,0.31) {\small{$\CostThreshold$ increasing}};
				\draw [-stealth,very thick,solid,black] (0.56,0.6) to[out=-40, in=170] (0.71,0.53);
				\draw [-stealth,very thick,solid,black] (0.5,0.31) to (0.65,0.31);
			\end{scope}
		\end{tikzpicture}
	}
	\hspace{1em}
	\subcaptionbox{%
		\label{subfig: Cost vs Time for Varying Jbar (Inexact)}
		Inexact waypoint convergence ($\nSamples = 2000$).
	}[0.45\columnwidth]{%
		\begin{tikzpicture}
			\node[anchor=south west,inner sep=0] (image) {\centering\includegraphics[width=\columnwidth]{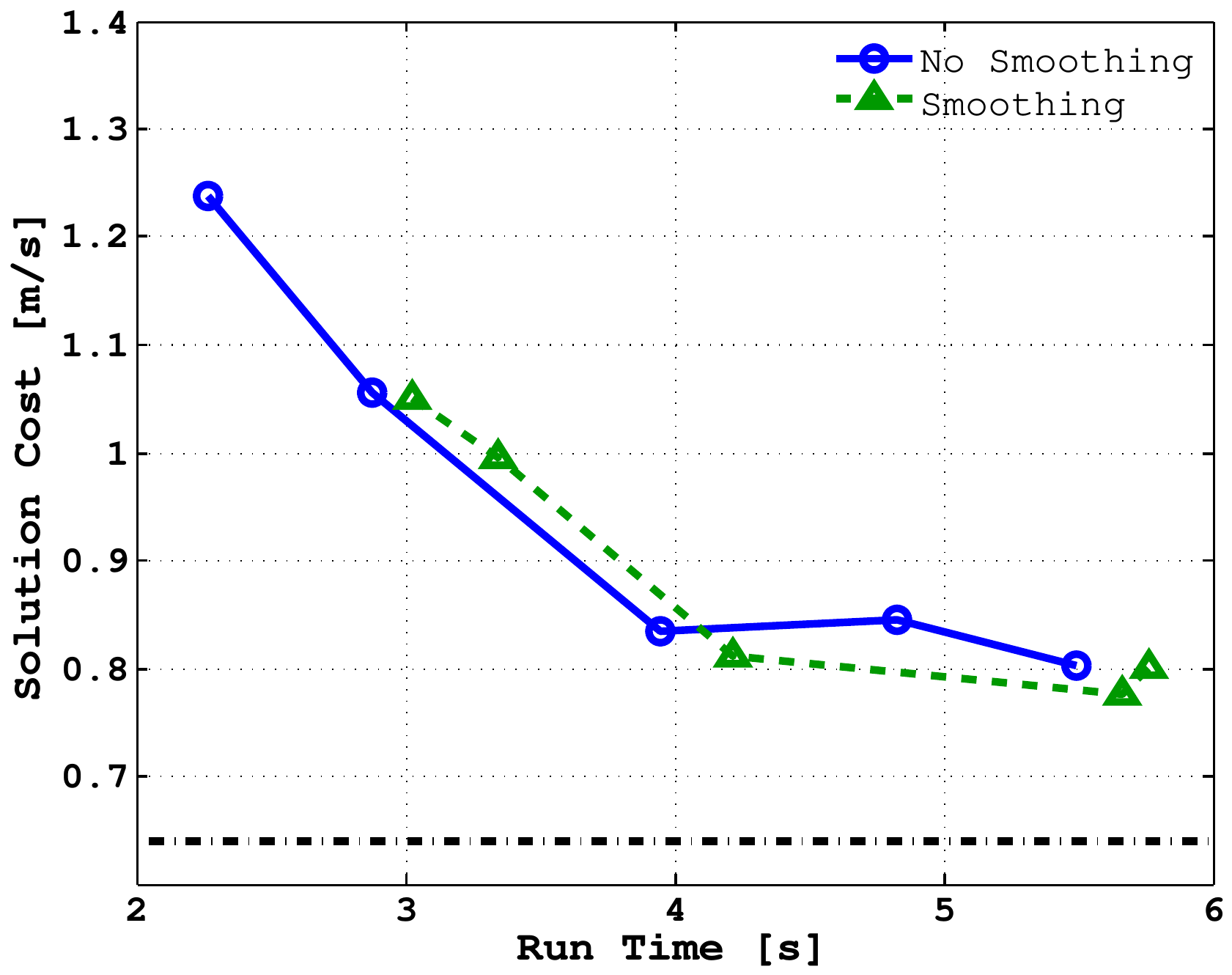}};
			\begin{scope}[x={(image.south east)},y={(image.north west)}]
				\node [anchor=south west] (CostThresholdTrend) at (0.52,0.5) {\small{$\CostThreshold$ increasing}};
				\draw [-stealth,very thick,solid,black] (0.45,0.58) to[out=-40, in=170] (0.65,0.45);
			\end{scope}
		\end{tikzpicture}
	}
	\caption{Algorithm performance for the given LEO proximity operations scenario as a function of varying cost threshold ($\CostThreshold \in \closedinterval{0.2}{0.4}$) with $\nSamples$ held constant\iftoggle{AIAAjournal}{.}{ (lowering $\CostThreshold$ at these $\nSamples$ yields failure).  Results are reported for both (i) trajectories constrained to rendezvous exactly with pre-specified waypoints and (ii) trajectories that can terminate anywhere in $\Xgoal$ (inside a given position/velocity tolerance).}}
	\label{fig: Cost vs Run-time for Varying Jbar}
\end{figure}

\Cref{fig: Cost vs Run-time for Varying Jbar} reports the effects on the solution cost of varying the nearest-neighbor search threshold $\CostThreshold$ while keeping $\nSamples$ fixed.  As described in \cref{subsec: Reachability Sets}, $\CostThreshold$ determines the size of state reachability sets and hence the number of candidate neighbors evaluated during graph construction.  Generally, this means an improvement in cost at the expense of extra processing; though there are exceptions as in \cref{subfig: Cost vs Time for Varying Jbar (Exact)} at $\CostThreshold \approx 0.3$ m/s.  Likely this arises from a neighbor that is found and connected to (at the expense of another, since \FMTstar only adds one edge per nearest-neighborhood) which leads to a particular graph change for which \emph{exact} termination at the goal waypoint is more expensive than usual.  Indeed we see that \emph{for the same sample distribution} this does not occur, as shown in the other case where inexact convergence is permitted.

We can also vary the sample count $\nSamples$ while holding $\CostThreshold$ constant.  From \crefrange{subfig: Cost vs Time for Varying Jbar (Exact)}{subfig: Cost vs Time for Varying Jbar (Inexact)}, we select $\CostThreshold = 0.22$ m/s and $0.3$ m/s, respectively, for each of the two cases (the values which suggest the best solution cost per unit of run time).  Repeating the simulation for varying sample count values, we obtain \cref{fig: Cost vs Run-time for Varying Sample Count}.  Note the general downward trend as run time increases (corresponding to larger sample counts), indicating the classic trade-off in sampling-based planning.  However, there is bumpiness.  Similar to before, this is likely due to new connections previously unavailable at lower sample counts which cause a slightly different graph with an unlucky jump in \fuel cost.  This reinforces the well-known need to tune $\nSamples$ and $\CostThreshold$ before applying sampling-based planners.

\begin{figure}
	\subcaptionbox{%
		\label{subfig: Cost vs Time for Varying Sample Count, Exact}
		Exact waypoint convergence ($\CostThreshold = 0.22$ m/s).
	}[0.45\columnwidth]{%
		\begin{tikzpicture}
			\node[anchor=south west,inner sep=0] (image) {\centering\includegraphics[width=\columnwidth]{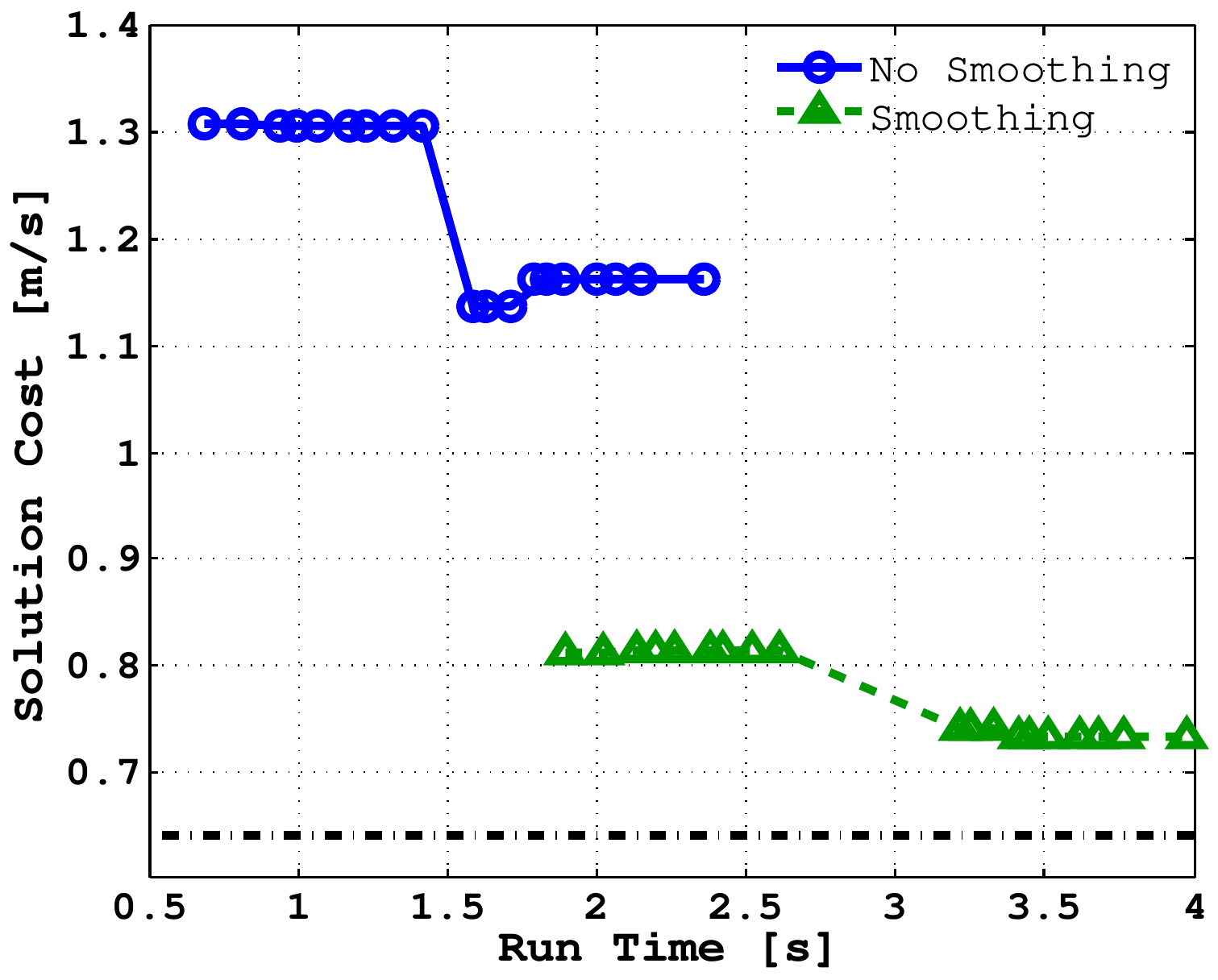}};
			\begin{scope}[x={(image.south east)},y={(image.north west)}]
				\node [anchor=north east] (CostThresholdTrendUnsmoothed) at (0.43,0.59) {\small{$\nSamples$ increasing}};
				\node [anchor=south] (CostThresholdTrendSmoothed) at (0.575,0.39) {\small{$\nSamples$ increasing}};
				\draw [-stealth,very thick,solid,black] (0.28,0.75) to[out=-70, in=175] (0.4,0.6);
				\draw [-stealth,very thick,solid,black] (0.5,0.39) to (0.65,0.39);
			\end{scope}
		\end{tikzpicture}
	}
	\hspace{1em}
	\subcaptionbox{%
		\label{subfig: Cost vs Time for Varying Sample Count, Inexact}
		Inexact waypoint convergence ($\CostThreshold = 0.3$ m/s).
	}[0.45\columnwidth]{%
		\begin{tikzpicture}
			\node[anchor=south west,inner sep=0] (image) {\centering\includegraphics[width=\columnwidth]{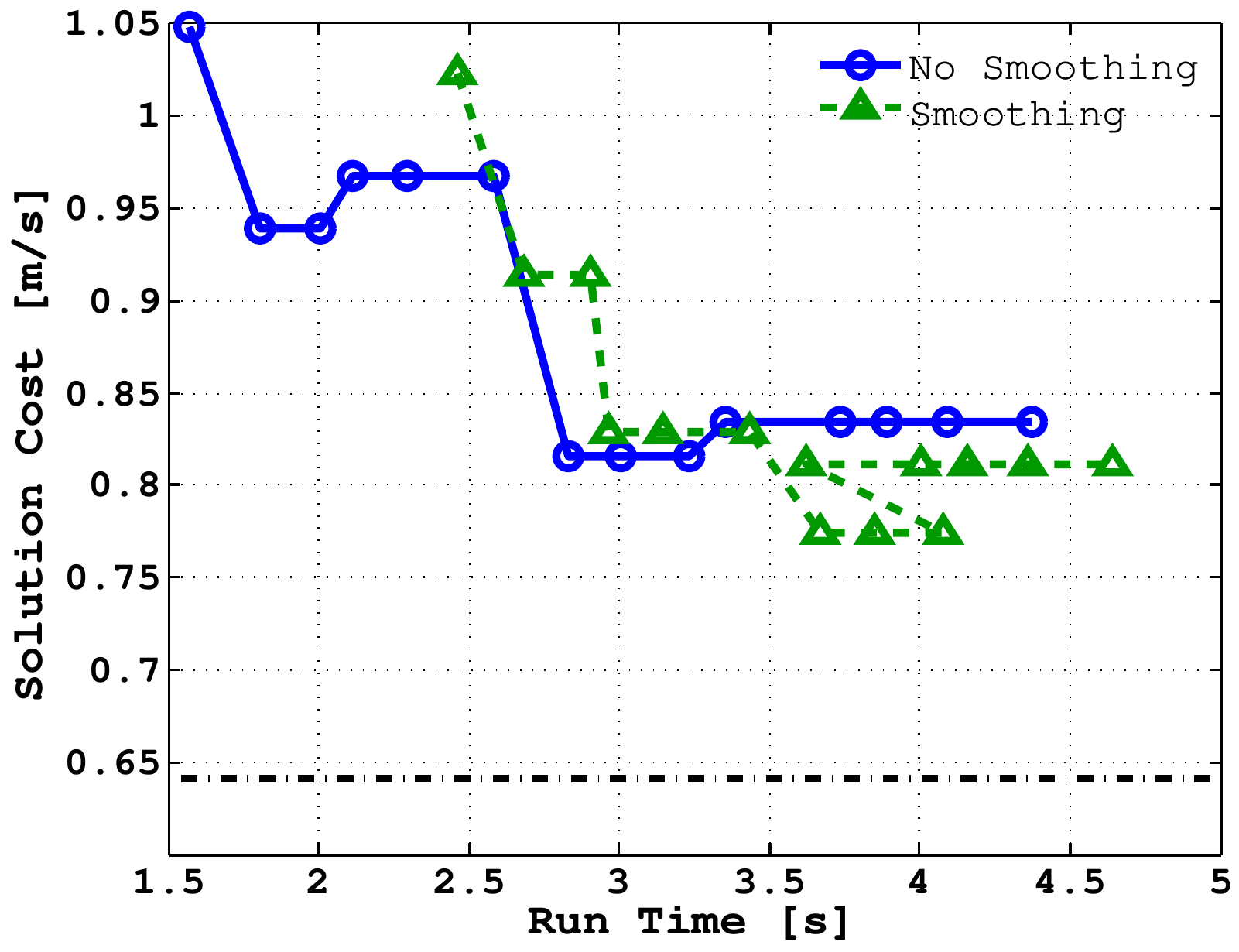}};
			\begin{scope}[x={(image.south east)},y={(image.north west)}]
				\node [anchor=south west] (CostThresholdTrend) at (0.535,0.7) {\small{$\nSamples$ increasing}};
				\draw [-stealth,very thick,solid,black] (0.5,0.8) to[out=-60, in=170] (0.65,0.64);
			\end{scope}
		\end{tikzpicture}
	}
	\caption{Algorithm performance for the given LEO proximity operations scenario as a function of varying sample count ($\nSamples \in \closedinterval{650}{2000}$) with $\CostThreshold$ held constant\iftoggle{AIAAjournal}{.}{ (lowering $\nSamples$ further at these $\CostThreshold$ yields failure).  Results are reported for trajectories both with and without exact waypoint convergence.}}
	\label{fig: Cost vs Run-time for Varying Sample Count}
\end{figure}

As the figures show, the utility of trajectory smoothing is clearly affected by the fidelity of the planning simulation.  In each, trajectory smoothing yields a much larger improvement in cost at modest increases in computation time when we require exact waypoint convergence.  It provides little improvement, on the other hand, when we relax these waypoint tolerances; \FMTstar (with goal region sampling) seems to return trajectories with costs much closer to the optimum in such cases, making the additional overhead of smoothing less favorable.  This conclusion is likely highly problem-dependent; these tools must always be tested and tuned to the particular application.

Note that the overall run times for each simulation are on the order of 1-5 seconds, including smoothing.  This clearly indicates that \FMTstar can return high quality solutions in real-time for spacecraft proximity operations.  Though run on a computer currently unavailable to spacecraft, we hope that our examples serve as a reasonable proof-of-concept; we expect that with a more efficient coding language and implementation, our approach would be competitive on spacecraft hardware.

\section{Conclusions}
\label{sec: Conclusions}

A technique has been presented for efficiently automating minimum-\fuel guidance during near-circular orbit proximity operations, enabling the computation of near-optimal collision-free trajectories in real time (on the order of 1-5 seconds for our numerical examples).  The approach allows our modified version of the \FMTstar sampling-based motion planning algorithm to approximate the solution to the minimal-\fuel trajectory control problem \cref{eqn: CWH Optimal Steering} under impulsive Clohessy-Wiltshire-Hill (CWH) dynamics.  The method begins by discretizing the feasible space of \cref{eqn: CWH Optimal Steering} through state space sampling in the CWH Local-Vertical Local-Horizontal (LVLH) frame.  Next, state samples and their forward \emph{reachability sets}, which we have shown comprise sets bounded by unions of ellipsoids taken over steering maneuver duration, are precomputed offline and stored onboard the spacecraft together with all pairwise steering solutions.  Finally, the \FMTstar algorithm (with built-in trajectory smoothing) is called online to efficiently construct a tree of trajectories through the feasible state space towards a goal region, returning a solution that satisfies a broad range of trajectory constraints (\eg plume impingement, control allocation feasibility, obstacle avoidance, \etc) or else reporting failure.  If desired, additional post-processing using the techniques outlined in \cref{sec: Smoothing/Robustness} can be employed to reduce solution \fuel cost.

The key breakthrough of our solution for autonomous spacecraft guidance is its judicious distribution of computations; in essence, only what \emph{must} be computed onboard, such as collision-checking and graph construction, is computed online -- everything else, including the most intensive computations, are relegated to the ground where computational effort and run time are less critical.  Furthermore, only minimal information (steering problem control trajectories, costs, and nearest-neighbor sets) requires storage on the spacecraft.  Though we have illustrated through simulations the ability to tackle a particular minimum-\fuel LEO homing maneuver problem, it should be noted that the methodology applies equally well to other objectives, such as the minimum-time problem, and can be generalized to other dynamic models and environments.  The approach is flexible enough to handle non-convexity and mixed state-control-time constraints without compromising real-time implementability, so long as 
constraint function evaluation is relatively efficient.  In short, the proposed approach appears to be useful for automating the mission planning process for spacecraft proximity operations, enabling real-time computation of low cost trajectories.

In future work, the authors plan to demonstrate the proposed approach in a number of other proximity operations scenarios, including optimal deep-space guidance, pinpoint asteroid descent, and onboard a set of free-flying, air-bearing robots.  However, the proposed planning framework for impulsively-actuated spacecraft offers several other interesting avenues for future research.  For example, though nothing in the methodology forbids it outside of computational limitations, it would be interesting to revisit the problem with attitude states included in the planning process (instead of abstracted away, as we have done here by assuming an attitude profile).  This would allow direct inclusion of attitude constraints into maneuver planning (\eg enforcing line-of-sight, keeping solar panels oriented towards the Sun to stay power positive, maintaining a communication link between the chaser antenna and ground, \etc).  Also of interest would be other actively-safe policies that relax the need to circularize escape orbits (potentially costly in terms of \fuel use) or which mesh better with trajectory smoothing, without the need to add compensating impulses (see \cref{sec: Smoothing/Robustness}).  Extensions to dynamic obstacles (such as debris or maneuvering spacecraft, which are unfixed in the LVLH frame), elliptical target orbits, higher-order gravitation, curvilinear coordinates, or dynamics under relative orbital elements also represent key research topics vital to extending the method's applicability to more general maneuvers.  Finally, memory and run time performance evaluations of our algorithms on space-like hardware would be necessary in assessing their true benefit to spacecraft planning in practice.


\appendices
\crefalias{section}{appendixsection}

\section{Optimal Circularization Under Impulsive CWH Dynamics}
\label{appendix: Optimal Circularization}
As detailed in \cref{subsec: CWH CAM Policy Intro}, a vital component of our CAM policy is the generation of one-burn minimal-\fuel transfers to circular orbits above or below the target spacecraft orbit.  Assuming a failure occurs at state $\Vecx\left(t\subfail\right) = \Vecx\subfail$, the problem we wish to solve to satisfy \cref{def: Finite-Time Traj Safety} is:
\begin{equation*}
	\begin{optproblem}
		\given{\text{Failure state } \Vecx\subfail, \text{and CAM } \Vecu\subCAM\left(t\subfail \leq \tCAM < \horizonTime\supminus\right) \triangleq \VecZeros, \Vecu\subCAM\left(\horizonTime\right) \triangleq \VecDeltaV\subcirc\left(\Vecx\left(\horizonTime\right)\right)}
		\minimize[\horizonTime]{\DeltaV\subcirc^2(\horizonTime)}
		\subjectto
			\constraint[\text{Initial Condition}]{\Vecx\subCAM\left(t\subfail\right) = \Vecx\subfail}
			\constraint[\text{Invariant Set Termination}]{\Vecx\subCAM(\horizonTime\supplus) \in \Xinvariant}
			\constraint[\text{System Dynamics}]{\dot{\Vecx}\subCAM\left(\tCAM\right) = \StateTransitionFcn\left( \Vecx\subCAM\left(\tCAM\right), \VecZeros, \tCAM \right), \text{for all } t\subfail \leq \tCAM \leq \horizonTime}
			\constraint[\text{KOZ Collision Avoidance}]{\Vecx\subCAM\left(\tCAM\right) \not\in \XsubKOZ, \text{for all } t\subfail \leq \tCAM \leq \horizonTime}
	\end{optproblem}
\end{equation*}
Due to the analytical descriptions of state transitions as given by \cref{eqn: CWH Impulsive Solution}, 
it is a straightforward task to express the decision variable $\horizonTime$, invariant set constraint, and objective function analytically in terms of $\theta(\tCAM) = \omegaRef (\tCAM - \ti)$, the polar angle of the target spacecraft.
The problem is therefore one-dimensional in terms of $\theta$.  We can reduce the invariant set termination constraint to an invariant set positioning constraint if we ensure the spacecraft ends up at a position inside $\Xinvariant$ and circularize the orbit, since $\Vecx(\theta\subcirc\supplus) = \Vecx(\theta\subcirc\supminus) + \left[\begin{smallmatrix} \VecZeros \\ \VecDeltaV\subcirc\left(\theta\subcirc\right) \end{smallmatrix}\right] \in \Xinvariant$.  Denote $\theta\subcirc = \omegaRef (\horizonTime - \ti)$ as the target anomaly at which we enforce circularization.  Now, suppose the failure state $\Vecx(t)$ satisfies the collision avoidance constraint with the KOZ (otherwise the CAM is infeasible and we conclude $\Vecx$ is unsafe).  We can set $\theta\submin = \omegaRef(\tCAM - \ti)$ and integrate the coasting dynamics forward until the chaser touches the boundary of the KOZ ($\theta\submax = \theta_{\mathrm{collision}}\supminus$) or until we have reached one full orbit ($\theta\submax = \theta\submin + 2\pi$) such that, between these two bounds, the CAM trajectory satisfies the dynamics and contains only the coasting segment outside of the KOZ.  Replacing the dynamics and collision avoidance constraints with the bounds on $\theta$ as a box constraint, the problem now reduces to:
\nomenclature[B08b]{$\theta\subcirc$}{True anomaly at which to circularize a Collision-Avoidance-Maneuver orbit}%
\nomenclature[B08b]{$\theta\submin, \theta\submax$}{True anomaly search bounds for Collision-Avoidance-Maneuver circularization}%
\begin{equation*}
	\begin{optproblem}
		\minimize[\theta\subcirc]{\DeltaV\subcirc^2\left(\theta\subcirc\right)}
		\subjectto
		\constraint[\text{Theta Bounds}]{ \theta\submin \leq \theta\subcirc \leq \theta\submax }
		\constraint[\text{Invariant Set Positioning}]{ \posCrossTrack^2\left(\theta\subcirc\supminus\right) \geq \semiaxis_{\posCrossTrack}^2 }
	\end{optproblem}
\end{equation*}

Restricting our search range to $\theta \in \closedinterval{\theta\submin}{\theta\submax}$, this is a function of one variable and one constraint, something we can easily optimize analytically using the method of Lagrange multipliers.  To solve, we seek to minimize the Lagrangian, $\Lagrangian = \DeltaV\subcirc^2 + \lambda \InequalityConstraint\subcirc$, where $\InequalityConstraint\subcirc(\theta) = \semiaxis_{\posCrossTrack}^2 - \posCrossTrack^2\left(\theta\subcirc\supminus\right)$.  There are two cases to consider:
\nomenclature[B11b]{$\lambda$}{Lagrange multiplier}%
\nomenclature[AL ]{$\Lagrangian$}{Lagrangian}%
\paragraph{Case 1: Inactive Invariant Set Positioning Constraint}
\newcommand{\velCrossTrackSinThetaTerm}{\left( 3 \omegaRef \posCrossTrack\subfail + 2 \velInTrack\subfail \right)}
\newcommand{\velCrossTrackCosThetaTerm}{\velCrossTrack\subfail}
\newcommand{\planarSinTwoThetaTerm}{\frac{3}{4} \velCrossTrackSinThetaTerm^2 - \frac{3}{4} \velCrossTrackCosThetaTerm^2}
\newcommand{\planarCosTwoThetaTerm}{\frac{3}{2} \velCrossTrackCosThetaTerm \velCrossTrackSinThetaTerm }
\newcommand{\sinTwoThetaTerm}{\planarSinTwoThetaTerm + \omegaRef^2 \posOutofPlane\subfail^2 - \velOutofPlane\subfail^2}
\newcommand{\cosTwoThetaTerm}{\planarCosTwoThetaTerm - 2 \omegaRef \velOutofPlane\subfail \posOutofPlane\subfail}
We set $\lambda = 0$ such that $\Lagrangian = \DeltaV\subcirc^2$.  Candidate optimizers $\theta\supopt$ must satisfy $\grad[\theta] \Lagrangian\left(\theta\supopt\right) = 0$.  Taking the gradient of $\Lagrangian$,
\begin{align*}
	\grad[\theta] \Lagrangian
		&= \partialderiv[]{\DeltaV\subcirc^2}{\theta}
		= \left[ \sinTwoThetaTerm \right] \sin 2\theta
			\\ & \quad + \left[ \cosTwoThetaTerm \right] \cos 2\theta
\end{align*}
and setting $\grad[\theta] \Lagrangian\left(\theta\supopt\right) = 0$, we find that:
\begin{equation*}
	\tan 2\theta\supopt = \frac{-\left( \cosTwoThetaTerm \right)}{ \sinTwoThetaTerm }
\end{equation*}

Denote the set of candidate solutions that satisfy Case 1 by $\Theta\supopt_1$.

\paragraph{Case 2: Active Invariant Set Positioning Constraint}
Here the chaser attempts to circularize its orbit at the boundary of the zero-thrust RIC shown in \cref{subfig: CWH Planar Zero-Thrust RIC}.  The positioning constraint is active, and therefore $\InequalityConstraint\subcirc\left(\theta\right) = \semiaxis_{\posCrossTrack}^2 - \posCrossTrack^2\left(\theta\subcirc\supminus\right) = 0$.  This is equivalent to finding where the coasting trajectory from $\Vecx(t)$ crosses $\posCrossTrack(\theta) = \pm \semiaxis_{\posCrossTrack}$ for $\theta \in \closedinterval{\theta\submin}{\theta\submax}$.  This can be achieved using standard root-finding algorithms.  Denote the set of candidate solutions that satisfy Case 2 by $\Theta\supopt_2$.

\paragraph{Solution to the Minimal-Cost Circularization Burn}
The global optimizer $\theta\supopt$ either lies on the boundary of the box constraint, at an unconstrained optimum ($\theta \in \Theta\supopt_1$), or at the boundary of the zero-thrust RIC ($\theta \in \Theta\supopt_2$), all of which are economically obtained through either numerical integration or root-finding solver.  Therefore, the minimal-cost circularization burn time $\horizonTime\supopt$ satisfies:
\begin{equation*}
	\theta\supopt = \omegaRef\left(\horizonTime\supopt - \ti\right) = \argmin_{\theta \in \left\{\theta\submin, \theta\submax\right\} \Union \Theta\supopt_1 \Union \Theta\supopt_2 } \DeltaV\subcirc^2\left(\theta\right)
\end{equation*}
\nomenclature[B08b ]{$\theta\supopt$}{Optimal true anomaly for Collision-Avoidance-Maneuver circularization}%
If no solution exists (which can happen if and only if $\Vecx\subfail$ starts inside the KOZ), the circularization CAM is declared unsafe.  Otherwise, the CAM is saved for future trajectory feasibility verification.

\section{Intermediate Results for the \FMTstar Optimality Proof}
\label{appendix: Useful Results}
We report here a number of useful lemmas concerning bounds on the trajectory costs between samples which are used throughout the asymptotic optimality proof for \FMTstar in \cref{sec: Approach}.
%
We begin with the proof of \cref{lem: CWH Cost Bounds}, which relates the \fuel-burn cost function \cref{eqn: CWH Cost Function} between points $\Vecx\subnaught$ and $\Vecx\subf$ to the norm of the stacked $\DeltaV$-vector $\norm{\MatDeltaV} = \norm[\GramianInv]{\Vecx\subf - \MatPhi(\tf,\ti) \Vecx\subnaught}$.  We then provide a lemma bounding the sizes of the minimum and maximum eigenvalues of $\Gramian$, useful for bounding reachable volumes from $\Vecx\subnaught$.  Finally, we prove \cref{lem: perturbed cost bounds} which forms the basis of our asymptotic optimality analysis for \FMTstar.  Here $\MatPhi(\tf,\ti) = e^{\MatA \ManeuverDuration}$ is the state transition matrix, $\ManeuverDuration = \tf - t\subnaught$ is the maneuver duration, and $\Gramian$ is the $\NBurns = 2$ impulse Gramian matrix:
\begin{equation}
	\label{eqn: gramian}
	\Gramian(\ManeuverDuration) = \MatPhi_v \MatPhiInv_v = \left[\begin{array}{cc} e^{\MatA \ManeuverDuration} \MatB & \MatB \end{array}\right] \transpose{\left[\begin{array}{cc} e^{\MatA \ManeuverDuration} \MatB & \MatB \end{array}\right]},
\end{equation}
where $\MatPhi_v(t,\left\{\BurnTime_i\right\}_i)$ is the aggregate $\DeltaV$ transition matrix corresponding to burn times $\left\{\BurnTime_i\right\}_i = \left\{\ti, \tf\right\}$.
%
\LemmaFuelCostBounds*
\begin{proof}
	For the upper bound, note that by the Cauchy-Schwarz inequality we have
	$ \CostFcn = \norm{\VecDeltaV_1} \cdot 1 + \norm{\VecDeltaV_2} \cdot 1 \leq \sqrt{\norm{\VecDeltaV_1}^2 + \norm{\VecDeltaV_2}^2} \cdot \sqrt{1^2 + 1^2}$.  That is, $\CostFcn \leq \sqrt{2}\norm{\MatDeltaV}$.  Similarly, for the lower bound, note that: $\CostFcn = \sqrt{\left(\norm[]{\VecDeltaV_1} + \norm{\VecDeltaV_2}\right)^2} \geq \sqrt{\norm[]{\VecDeltaV_1}^2 + \norm[]{\VecDeltaV_2}^2} = \norm[]{\MatDeltaV}$.
\end{proof}
\begin{lemma}[Bounds on Gramian Eigenvalues]
	Let $\ManeuverDuration\submax$ be less than one orbital period for the system dynamics of \cref{sec:sysDyn}, and let $\Gramian(\ManeuverDuration)$ be defined as in \cref{eqn: gramian}. Then there exist constants $\GramianEigsLBConst,\GramianEigsUBConst > 0$ such that $\eigmin{\Gramian(\ManeuverDuration)} \geq \GramianEigsLBConst \ManeuverDuration^2$ and $\eigmax{\Gramian(\ManeuverDuration)} \leq \GramianEigsUBConst$ for all $\ManeuverDuration \in \rightclosedinterval{0}{\ManeuverDuration\submax}$.
	\label{lem: Gramian Eig Bounds}
\end{lemma}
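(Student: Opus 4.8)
The plan is to analyze the $6\times 6$ symmetric positive-semidefinite matrix $\Gramian(\ManeuverDuration) = e^{\MatA\ManeuverDuration}\MatB\MatBTranspose e^{\MatATranspose\ManeuverDuration} + \MatB\MatBTranspose$ directly, treating the upper and lower eigenvalue bounds separately. For the upper bound $\eigmax{\Gramian(\ManeuverDuration)} \leq \GramianEigsUBConst$, the argument is short: $\ManeuverDuration \mapsto e^{\MatA\ManeuverDuration}$ is continuous, hence bounded in operator norm on the compact interval $\closedinterval{0}{\ManeuverDuration\submax}$, so $\|\Gramian(\ManeuverDuration)\| \leq \|e^{\MatA\ManeuverDuration}\MatB\|^2 + \|\MatB\|^2$ is bounded by a constant $\GramianEigsUBConst$ depending only on $\MatA$, $\MatB$, and $\ManeuverDuration\submax$. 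Since $\eigmax{\Gramian} = \|\Gramian\|$ for symmetric PSD matrices, this gives the claim.

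The main work is the lower bound $\eigmin{\Gramian(\ManeuverDuration)} \geq \GramianEigsLBConst\ManeuverDuration^2$. First I would note that $\Gramian(\ManeuverDuration)$ is exactly the controllability-type Gramian for the two-impulse steering problem, and that its invertibility for $\ManeuverDuration \in \openinterval{0}{\ManeuverDuration\submax}$ is already guaranteed by the nonsingularity discussion of $\MatPhi_v$ in \cref{subsec: steering} (the restriction $\ManeuverDuration\submax <$ one orbital period rules out the known singular values of the transfer time). So $\eigmin{\Gramian(\ManeuverDuration)} > 0$ on the open interval, and the only issue is the rate of decay as $\ManeuverDuration \to 0^+$. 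The plan is to establish the quadratic lower bound via a small-$\ManeuverDuration$ expansion together with a compactness argument away from zero. Writing $e^{\MatA\ManeuverDuration}\MatB = \MatB + \ManeuverDuration\MatA\MatB + \BigO{\ManeuverDuration^2}$, for any unit vector $\Vecv \in \reals^6$ we have
\begin{equation*}
	\VecvTranspose\Gramian(\ManeuverDuration)\Vecv = \|\MatBTranspose\Vecv\|^2 + \|\MatBTranspose\Vecv + \ManeuverDuration\MatBTranspose\MatATranspose\Vecv + \BigO{\ManeuverDuration^2}\|^2.
\end{equation*}
When $\MatBTranspose\Vecv \neq 0$ this is bounded below by a positive constant; the delicate case is $\Vecv \in \ker\MatBTranspose$ (i.e. $\Vecv$ supported on the "position" block, since $\MatB$ injects into the velocity coordinates). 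There, $\MatBTranspose\Vecv = 0$ and the expression becomes $\ManeuverDuration^2\|\MatBTranspose\MatATranspose\Vecv\|^2 + \BigO{\ManeuverDuration^3}$, and from the explicit form of $\MatA$ in \cref{eqn: CWH Matrices} one checks $\MatBTranspose\MatATranspose\Vecv \neq 0$ for every nonzero $\Vecv$ in the position block (the top-right $3\times 3$ block of $\MatA$ is the identity). Hence $\|\MatBTranspose\MatATranspose\Vecv\|^2$ is bounded below by a positive constant $c$ on the unit sphere of $\ker\MatBTranspose$, giving $\VecvTranspose\Gramian(\ManeuverDuration)\Vecv \geq \tfrac{c}{2}\ManeuverDuration^2$ for $\ManeuverDuration$ below some threshold $\ManeuverDuration_0$.

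To make this uniform over \emph{all} unit vectors (not just the two extreme cases) and over the whole interval, I would argue by contradiction/compactness: suppose no constant $\GramianEigsLBConst$ works, then there are sequences $\ManeuverDuration_n \in \rightclosedinterval{0}{\ManeuverDuration\submax}$ and unit vectors $\Vecv_n$ with $\VecvnTranspose\Gramian(\ManeuverDuration_n)\Vecv_n / \ManeuverDuration_n^2 \to 0$; passing to convergent subsequences, $\ManeuverDuration_n \to \ManeuverDuration^\ast$ and $\Vecv_n \to \Vecv^\ast$. If $\ManeuverDuration^\ast > 0$ this contradicts $\eigmin{\Gramian(\ManeuverDuration^\ast)} > 0$ together with continuity; if $\ManeuverDuration^\ast = 0$, a careful look (dividing by $\ManeuverDuration_n^2$ and using the two-term expansion above, splitting $\Vecv_n$ into its $\ker\MatBTranspose$ component and its complement) forces $\Vecv^\ast \in \ker\MatBTranspose$ and $\MatBTranspose\MatATranspose\Vecv^\ast = 0$, contradicting the explicit computation. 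The main obstacle, then, is bookkeeping in this $\ManeuverDuration^\ast = 0$ case: one must handle unit vectors that are "partly" in $\ker\MatBTranspose$, showing the component outside the kernel contributes $\BigO{1}$ (hence dominates $\ManeuverDuration_n^2$ unless it vanishes in the limit) while the kernel component contributes exactly at order $\ManeuverDuration_n^2$ with a uniformly positive coefficient. This is routine linear algebra on the $2\times 2$ block structure of $\MatA$ but is the only place where real care is needed; everything else follows from continuity and compactness.
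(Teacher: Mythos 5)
Your proposal is correct and takes essentially the same route as the paper: a norm bound on $\eigmax{\Gramian(\ManeuverDuration)}$, and for $\eigmin{\Gramian(\ManeuverDuration)}$ a Taylor expansion about $\ManeuverDuration = 0$ combined with nonsingularity of $\Gramian$ (for $\ManeuverDuration\submax$ below one orbital period) away from zero. The paper is terser---it cites a reference for the nonsingularity and simply asserts $\eigmin{\Gramian(\ManeuverDuration)} = \sidefrac{\ManeuverDuration^2}{2} + \BigO{\ManeuverDuration^3}$---whereas you carry out the $\ker \MatBTranspose$ decomposition and the compactness argument explicitly, but the underlying argument is the same.
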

\nomenclature[B11b]{$\lambda$}{Eigenvalue}%
\nomenclature[AMm]{$\GramianEigsLBConst, \GramianEigsUBConst$}{Positive constants in Gramian matrix eigenvalue bounds}%
\begin{proof}
	We bound the maximum eigenvalue of $\Gramian$ through norm considerations, yielding $\eigmax{\Gramian(\ManeuverDuration)} \leq \left(\norm{e^{\MatA \ManeuverDuration} \MatB} + \norm{\MatB}\right)^2 \leq \left(e^{\norm{\MatA} \ManeuverDuration\submax} + 1\right)^2$,
	and take $\GramianEigsUBConst = \left(e^{\norm{\MatA}\ManeuverDuration\submax} + 1\right)^2$.  As long as $\ManeuverDuration\submax$ is less than one orbital period, $\Gramian(\ManeuverDuration)$ only approaches singularity near $\ManeuverDuration = 0$ \cite{KA-SRV-PG-JH-LB:09}. Explicitly Taylor-expanding $\Gramian(\ManeuverDuration)$ about $\ManeuverDuration = 0$ reveals that $\eigmin{\Gramian(\ManeuverDuration)} = \sidefrac{\ManeuverDuration^2}{2} + \BigO{\ManeuverDuration^3}$ for small $\ManeuverDuration$, and thus $\eigmin{\Gramian(\ManeuverDuration)} = \BigOmega{\ManeuverDuration^2}$ for all $\ManeuverDuration \in \rightclosedinterval{0}{\ManeuverDuration\submax}$.
\end{proof}
\LemmaPerturbedSteering*
\nomenclature[B07b]{$\eta$}{Positive constant bounding initial condition perturbations in \FMTstar AO proof}%
\begin{proof}
	For bounding the perturbed cost, we consider the two cases separately.\\
	\noindent\underline{Case 1: $\ManeuverDuration = 0.$} Then 2-impulse steering degenerates to a single impulse $\VecDeltaV$; that is, $\Vecx\subf = \Vecx\subnaught + \MatB \VecDeltaV$ with $\norm{\VecDeltaV} = \CostFcn$.  To aid in the ensuing analysis, we write the position and velocity components of each state $\Vecx = \transpose{\left[ \VecrTranspose\ \VecvTranspose\right]}$ as $\Vecr = \left[\,\Identity\ \MatZeros\,\right] \Vecx$ and $\Vecv = \left[\,\MatZeros\ \Identity\,\right] \Vecx$. Note that since $\ManeuverDuration = 0$, we have $\Vecr\subf = \Vecr\subnaught$ and $\Vecv\subf = \Vecv\subnaught + \VecDeltaV$.  We pick the perturbed steering duration $\tilde \ManeuverDuration = \CostFcn^2$ (which will provide an upper bound on the optimal steering cost) and expand the steering system (\cref{eqn: CWH Impulsive Solution}) for small time $\tilde \ManeuverDuration$ as
	\begin{align}
		\Vecr\subf + \delta\Vecr\subf &= \Vecr\subnaught + \delta\Vecr\subnaught + \tilde \ManeuverDuration(\Vecv\subnaught + \delta\Vecv\subnaught + \widetilde\VecDeltaV_1) + \BigO{\tilde \ManeuverDuration^2} \label{eq:r}
		\\ \Vecv\subf + \delta\Vecv\subf &= \Vecv\subnaught + \delta\Vecv\subnaught + \widetilde\VecDeltaV_1 + \widetilde\VecDeltaV_2 + \tilde \ManeuverDuration\left(\MatA_{21}(\Vecr\subnaught + \delta\Vecr\subnaught) + \MatA_{22}(\Vecv\subnaught + \delta\Vecv\subnaught + \widetilde\VecDeltaV_1) \right) + \BigO{\tilde \ManeuverDuration^2} \label{eq:v}
	\end{align}
	where $\MatA_{21} = \left[ \begin{smallmatrix} 3\omegaRef^2 & 0 & 0\\ 0 & 0 & 0\\ 0 & 0 & -\omegaRef^2 \end{smallmatrix} \right]$ and $\MatA_{22} = \left[ \begin{smallmatrix} 0 & 2\omegaRef & 0 \\ -2\omegaRef & 0 & 0 \\ 0 & 0 & 0 \end{smallmatrix} \right]$.  Solving \cref{eq:r} for $\widetilde\VecDeltaV_1$ to first order, we find 
	$\widetilde\VecDeltaV_1 = \inverse{\tilde \ManeuverDuration} (\delta\Vecr\subf - \delta\Vecr\subnaught) - \Vecv\subnaught - \delta\Vecv\subnaught + \BigO{\tilde \ManeuverDuration}$.
	By selecting $\delta \Vecx_c = \transpose{\left[\tilde \ManeuverDuration \VecvTranspose\subnaught\ \MatZerosTranspose\right]}$ (note: $\norm{\delta \Vecx_c} = \CostFcn^2 \norm{\Vecv\subnaught} = \BigO{\CostFcn^2}$) and supposing that $\norm[]{\PerturbationXi} \leq \eta \CostFcn^3$ and $\norm[]{\PerturbationXf - \delta \Vecx_c} \leq \eta \CostFcn^3$, we have that:
	\begin{equation*}
		\norm<normal>{\widetilde\VecDeltaV_1} \leq \CostFcn^{-2} (\norm{\PerturbationXi} + \norm{\PerturbationXf - \delta \Vecx_c}) + \norm{\PerturbationXi} + \BigO{\CostFcn^2}
			= 2\eta \CostFcn + \BigO{\CostFcn^2}.
	\end{equation*}
	Now solving \cref{eq:v} for $\widetilde\VecDeltaV_2 = \VecDeltaV + (\delta\Vecv\subf - \delta\Vecv\subnaught) - \widetilde\VecDeltaV_1 + \BigO{\CostFcn^2}$, and taking norm of both sides:
	\begin{equation*}
		\norm<normal>{\widetilde\VecDeltaV_2} \leq \norm{\VecDeltaV} + \left(\norm{\PerturbationXi} + \norm{\PerturbationXf - \delta \Vecx_c}\right) + 2 \eta \CostFcn + \BigO{\CostFcn^2}
		\leq \CostFcn + 2 \eta \CostFcn + \BigO{\CostFcn^2}.
	\end{equation*}
	Therefore the perturbed cost satisfies:
	\begin{equation*}
		\CostFcn(\PerturbedXi, \PerturbedXf) \leq \norm<normal>{\widetilde\VecDeltaV_1} + \norm<normal>{\widetilde\VecDeltaV_2} \leq \CostFcn\left(1 + 4\eta + \BigO{\CostFcn}\right).
	\end{equation*}
	\\
	\noindent\underline{Case 2: $\ManeuverDuration > 0.$} We pick $\tilde \ManeuverDuration = \ManeuverDuration$ to compute an upper bound on the perturbed cost. Applying the explicit form of the steering control (\cref{eqn: CWH Optimal 2PBVP with N=2}) along with the norm bound $\norm{\MatPhiInv_v} = \eigmin{\Gramian}^{-1/2} \leq \GramianEigsLBConst^{-1/2} \inverse{\ManeuverDuration}$ from \cref{lem: Gramian Eig Bounds}, we have:
	\begin{align*}
		\CostFcn(\PerturbedXi, \PerturbedXf) &\leq \norm{\MatPhiInv_v\left(\tf, \{\ti, \tf\}\right) \left(\PerturbedXf - \MatPhi\left(\tf,\ti\right) \PerturbedXi\right)}
		\\ &\leq \norm{\MatPhiInv_v \left(\Vecx\subf - \MatPhi \Vecx\subnaught\right)} + \norm{\MatPhiInv_v \PerturbationXf} + \norm{\MatPhiInv_v \MatPhi \PerturbationXi}
		\\ &\leq \CostFcn + \GramianEigsLBConst^{-1/2} \inverse{\ManeuverDuration} \norm{\PerturbationXf} + \GramianEigsLBConst^{-1/2} \inverse{\ManeuverDuration} e^{\norm{\MatA}\ManeuverDuration\submax} \norm{\PerturbationXi}
		\leq \CostFcn\left(1 + \BigO{\eta \CostFcn^2 \inverse{\ManeuverDuration} }\right).
	\end{align*}
	In both cases, the deviation of the perturbed steering trajectory (call it $\tilde \Vecx(t)$) from its closest point on the original trajectory is bounded (quite conservatively) by the maximum propagation of the difference in initial conditions; that is, the initial disturbance $\PerturbationXi$ plus the difference in intercept burns $\widetilde\VecDeltaV_1 - \VecDeltaV_1$, over the maximum maneuver duration $\ManeuverDuration\submax$. Thus,
	\begin{equation*}
		\norm{\tilde \Vecx(t) - \Vecx(t)} \leq e^{\norm{\MatA} \ManeuverDuration\submax}\left(\norm{\PerturbationXi} + \norm<normal>{\widetilde\VecDeltaV_1} + \norm{\VecDeltaV_1}\right)
			\leq e^{\norm{\MatA} \ManeuverDuration\submax}\left(\eta \CostFcn^3 + 2\CostFcn + \LittleO{\CostFcn}\right)
			= \BigO{\CostFcn}
	\end{equation*}
	where we have used $\norm{\VecDeltaV_1} \leq \CostFcn$ and $\norm<normal>{\widetilde\VecDeltaV_1} \leq \CostFcn\left(\PerturbedXi, \PerturbedXf\right) \leq \CostFcn + \LittleO{\CostFcn}$ from our above arguments.
\end{proof}


\acknowledgments
This work was supported by \MarcoPavoneNASAECFGrant.

\iftoggle{AIAAjournal}{\section*{\refname}}{}
\bibliography{\bibfiles}

\end{document}